\numberwithin{equation}{section} 
\newcommand{\ups}{\upsilon}
\newcommand{\lie}{\pounds}
\newcommand{\brmod}[2]{\llbracket #1, #2 \rrbracket}
\newcommand{\h}[1]{{\hat{#1}}}
\newcommand{\wh}[1]{{\widehat{#1}}} 
\newcommand{\fs}{\mathscr{F}}
\newcommand{\sls}{\mathscr{S}}
\newcommand{\beom}{\mathcal{E}} 
\newcommand{\stm}{\mathcal{M}}
\newcommand{\sr}{\mathcal{U}}
\newcommand{\ns}{\mathcal{N}}
\newcommand{\cn}{\mathcal{C}}
\newcommand{\slsu}{\sls_{\sr}}
\newcommand{\flx}{\mathcal{F}}
\newcommand{\n}[1]{\mathscr{#1}}
\newcommand{\ps}{\mathscr{P}}
\newcommand{\psu}{\ps_{\sr}}
\newcommand{\anom}{\mathcal{A}}
\newcommand{\ja}{a}
\newcommand{\jl}{\nu}
\newcommand{\ra}{\rightarrow}
\newcommand{\cov}[1]{\overset{c}{#1}}
\newcommand{\mcov}[1]{\overset{vc}{#1}} 
\newcommand{\cb}{r}
\newcommand{\cgamma}{\rho}
\newcommand{\cflx}{\varepsilon}
\newtheorem{lemma}{Lemma}
\newcommand{\p}[1]{  {\underline{ #1} }   }
\newcommand{\un}[1]{ \underline{#1} }
\newcommand{\scri}{\mathscr{I}}
\newcommand{\scp}{\scri^+}
\newcommand{\ind}[1]{\indices{#1}}
\newcommand{\heq}{\mathrel{\mathop {\widehat=} }}
\newcommand{\ms}{\mathscr}
\newcommand{\mf}{\mathfrak}
\newcommand{\bb}{\mathbb}
\newcommand{\beq}{\begin{equation}}
\newcommand{\eeq}{\end{equation}}
\newcommand{\bes}{\begin{subequations}}
\newcommand{\ees}{\end{subequations}}
\newcommand{\bea}{\begin{eqnarray}}
\newcommand{\eea}{\end{eqnarray}}
\newcommand{\be}{\begin{equation}}
\newcommand{\ee}{\end{equation}}
\newcommand{\conf}{\Phi}
\newcommand{\gunphys}{g}
\newcommand{\gphys}{{\tilde g}}
\newcommand{\hateq}{\mathrel{\mathop {\widehat=} }}
\newcommand{\volume}{\eta}
\newcommand{\volumesmall}{\mu}
\newcommand{\nonaffinity}{\kappa}
\newcommand{\shape}{W}
\newcommand{\expansion}{\Theta}
\newcommand{\inducedmetric}{q}
\title{A general framework for gravitational charges and holographic renormalization}
\author[1,2]{Venkatesa Chandrasekaran\thanks{venchandrasekaran@ias.edu}}
\author[3]{\'Eanna \'E. Flanagan\thanks{eef3@cornell.edu}}
\author[3]{Ibrahim Shehzad\thanks{is354@cornell.edu}}
\author[4,5]{Antony J. Speranza\thanks{asperanz@gmail.com}}
\affil[1]{\small \it Berkeley Center for Theoretical Physics, Berkeley, CA, 94720, USA}
\affil[2]{\small \it Institute for Advanced Study, Princeton, NJ, 08540, USA}
\affil[3]{\small \it Department of Physics, Cornell University, Ithaca, NY, 14853, USA}
\affil[4]{\small \it Perimeter Institute for Theoretical Physics, 31 Caroline St. N, Waterloo, ON N2L 2Y5, Canada}
\affil[5]{\small \it Department of Physics, University of Illinois, Urbana-Champaign, Urbana IL 61801, USA}
\date{}
\begin{document}

\maketitle

\begin{abstract}

We develop a general framework for constructing charges associated with 
diffeomorphisms in gravitational theories using covariant phase space techniques.  
This framework encompasses both localized charges associated with spacetime subregions, as well as
global conserved charges 
of the full spacetime.  Expressions for 
the charges include contributions from the boundary and corner terms in the subregion action,
and are rendered unambiguous by appealing to the variational principle for the subregion, which selects
a preferred form of the symplectic flux through the boundaries.  
The Poisson brackets of the charges on the subregion phase space are 
shown to reproduce the bracket of Barnich and Troessaert for open subsystems, thereby giving a novel derivation
of this bracket from first principles.  
In the context of asymptotic boundaries, we show that the procedure of holographic renormalization
can be always applied to obtain finite charges and fluxes once suitable counterterms have been
found to ensure a finite action.  This enables the study of larger asymptotic symmetry groups
by loosening the boundary conditions imposed at infinity.
We further present an algorithm for explicitly computing the counterterms that renormalize the 
action and symplectic potential, and, as an application of our framework, demonstrate that it reproduces known expressions for 
the charges of the generalized Bondi-Metzner-Sachs algebra.

\end{abstract}

\flushbottom

\newpage
    
\tableofcontents

\newpage

\section{Introduction and summary}

Canonical methods in general relativity and other gravitational theories provide 
an important tool for understanding the theory's observables and degrees of freedom.
These methods are particularly well-suited for characterizing the subtle 
role played by diffeomorphisms, which serve as the gauge symmetries of these theories.
The gauge nature of diffeomorphisms is captured by the fact that,
in the absence of boundaries, they generate transformations 
on the gravitational phase space corresponding to degenerate directions of the presymplectic form;
equivalently, the Hamiltonians generating these diffeomorphisms vanish on-shell.  
Introducing boundaries, either at infinity or finite locations in spacetime, 
partially breaks the full diffeomorphism invariance of theory, and  results in nontrivial
charges associated with the broken gauge symmetries.  The nonzero contribution to the charges
comes purely from an integral over the boundary of the spacetime region,
which is a manifestation of the familiar fact that the on-shell Hamiltonian is a pure boundary
term in diffeomorphism-invariant theories. 

An important technical tool for investigating properties of diffeomorphism invariance is the 
covariant phase space formalism \cite{Witten:1986qs, Crnkovic1987, Crnkovic:1987tz, Ashtekar1991,
LeeWald1990, Wald:1993nt, Iyer:1994ys}.  Its advantage over other constructions of 
gravitational phase spaces is the fact that covariance is maintained throughout.
This allows the
consequences of diffeomorphism invariance to be easily discerned, the most important of which
is the localization of diffeomorphism charges to contributions from the boundary.  
These boundary charges find applications in a number of questions in gravitational physics,
including black hole entropy
\cite{Wald:1993nt, Iyer:1994ys, Strominger1998, 
Carlip_1999, Hawking_2016, Haco:2018ske,Chen:2020nyh,
Chandrasekaran:2020wwn}, 
asymptotic symmetries 
\cite{Stro-lectures,Compere:2018aar}, entanglement and edge modes 
\cite{Donnelly2016a, Speranza2018a, Freidel:2020xyx, Donnelly:2020xgu}, and 
holography 
\cite{Hollands:2005ya, Papadimitriou:2005ii,Hollands:2005wt}. 
Given the breadth of scenarios in which boundary charges find use, 
it is important to have a well-defined framework that constructs these 
charges in an unambiguous manner.  Unfortunately, there are a number of 
complications that arise  related to ambiguities in the formalism,  
renormalization at asymptotic boundaries, 
and equivocal definitions of charges, that have lead to 
differing results and conclusions regarding boundary charges in 
various contexts.  The goal of the present work is to develop a general
framework that addresses these complications and sharply characterizes
the choices that must be made to resolve the various ambiguities.

One major motivation for having such a framework is its 
applications to holography in asymptotically flat spacetime, an arena in which the Hamiltonian formulation can provide important insights
\cite{Aneesh:2021uzk}. One can approach holography in a bottom-up manner, wherein one uses knowledge of the symmetries and charges of the theory at asymptotic boundaries to extrapolate properties of a putative dual theory. The classic example of this is the discovery by Brown and Henneaux that the asymptotic charge algebra of $\text{AdS}_3$ gravity coincides with the Virasoro algebra of $\text{CFT}_2$ \cite{Brown-Hennaux}. In a similar manner, systematically understanding the symmetries and charges at null infinity could help characterize the structure of the boundary theory. In particular, motivated by the UV/IR correspondence of the standard AdS/CFT dictionary, one might hope that a prescription for IR renormalization of classical observables using the Hamiltonian formalism leads to insights on universal properties of the putative boundary theory in the UV. 
This procedure is known as holographic renormalization \cite{Witten:1998qj, Henningson1998a,
Balasubramanian:1999re,
DeHaro2001, Papadimitriou:2005ii},\footnote{The name holographic renormalization arose because the formalism originated in the context of holographic dualities between bulk and boundary theories.  However the formalism itself as used here does not require any such dualities and can be defined in purely classical contexts.}
and its use has been expanded to 
asymptotically flat applications; for example, it is needed in order to obtain
finite charges associated with the generalized BMS group \cite{CL, Campiglia:2014yka,
Compere:2018ylh}.

When describing boundary charges,
it is often useful to distinguish between
{\it global} charges and {\it
  localized} charges \cite{CFP}. Given a set of boundary conditions that define a phase space, global charges are given by integrals over a complete Cauchy surface.  They include contributions from 
 all the degrees of freedom of the theory, 
and  generate the corresponding symmetry on the global phase space \cite{CFP}. They are integrals over the codimension-two boundaries of the Cauchy surface, which will typically be a sum over cross-sections of all the codimension-one boundaries in the spacetime that the Cauchy slice intersects.

Localized charges instead 
arise when defining a phase space associated with a subsystem of the 
full theory, such
as when considering a subregion of spacetime.
Standard examples include: the interior of a timelike tube in spacetime, as
occurs the Brown-York quasilocal charge construction \cite{Brown:1992br}; 
the exterior region of a finite null hypersurface \cite{CFP, Chandrasekaran:2020wwn};
and the domain of dependence of a partial Cauchy surface ending on a cut of 
$\scp$ in asymptotically flat spacetimes \cite{Wald:1999wa}.  Such  
subsystems are fundamentally  open Hamiltonian systems, which interact through
their boundary with degrees of freedom of the complementary region.
Because of this interaction, the subregion symplectic form is not
conserved under evolution along the boundary, and hence there is no
integrable charge generating the diffeomorphism associated with this evolution
on the subsystem phase space.  
Instead, localized charges are defined
as a best approximation for the generator of the diffeomorphism on the subregion.

A procedure for defining localized charges in the covariant phase space was
put forward by Wald and Zoupas
\cite{Wald:1999wa}, and subsequently developed in \cite{CFP, Chandrasekaran:2020wwn}.
These charges satisfy a modification of Hamilton's equation in which
the symplectic form evaluated on a diffeomorphism variation
yields the variation of the charge, plus an additional term representing the 
flux.  In order to produce unambiguous results, a criterion must be given
for separating the charge from the flux in this equation, determining
this criterion is the main challenge in obtaining well-defined localized 
charges.  An additional set of independent ambiguities, known as
Jacobson-Kang-Myers (JKM) ambiguities
\cite{Jacobson:1993vj}, arise in the definitions of the 
theory's Lagrangian and symplectic potential, and naively affect both
localized and global charges. 
One would like to have a coherent framework in which all the ambiguities are resolved through a single unified principle.

We will show that the crucial ingredient is the choice of action for the subregion,
including boundary and corner terms.
Equivalently, this  can be viewed as a preferred choice for the symplectic flux
at each of the boundaries, which appear as the boundary terms in the 
variational principle for the chosen action.  
The idea to use the action principle to resolve ambiguities in the covariant phase 
was first proposed in \cite{Compere:2008us}, motivated by holographic considerations 
in asymptotically AdS spacetimes
\cite{Henningson1998a, Balasubramanian:1999re, DeHaro2001,
Papadimitriou:2005ii}.  This principle is also
partially inspired by Euclidean gravity, wherein one takes the action to be the fundamental object from which all other observables are computed.
Furthermore, it ties in with the Brown-York construction of quasilocal charges
\cite{Brown:1992br}, in which the subregion
action plays a central role, and one can show that  these quasilocal
charges agree with the canonical charges constructed when utilizing the action principle
to fix their ambiguities \cite{Chandrasekaran:2021hxc}.  
This perspective based on the full subregion action will allow us to resolve both sets of localized charge ambiguities in one fell swoop. 
Moreover, it will enable us to give a simple general argument that holographic renormalization can always be performed to obtain finite charges and fluxes, without imposing {\it any}
boundary conditions on the field variations beyond those contained in the equations of 
motion.  Indeed, as explained in \cite{Skenderis:2009kd},
such generality is one of the main
novelties that the holographic approach 
brings to the study
of gravitational charges.   
Thus, our framework unifies many different aspects of 
gravitational charges in diffeomorphism-invariant theories.

In what follows, we give a detailed summary of each of our main results

\subsection{Extended summary of results}
We begin in section \ref{sec:charges} by presenting the general framework for 
utilizing the covariant phase space in constructing gravitational 
charges.  While much of the material in this section is review, 
we present a number of results for handling 
background structures in the theory, which modify a number of formulas 
by noncovariant contributions.\footnote{Noncovariant corrections to covariant
phase space quantities have also been explored in \cite{Freidel:2021cjp}, which
contains some overlap with the results of section \ref{sec:charges}.}  The reasons for allowing noncovariances are twofold.
First, as was shown in \cite{Chandrasekaran:2020wwn}, central extensions in
 gravitational charge algebras arise due to noncovariant boundary terms
in the action, and
such extensions often contain critical information about properties of the theory.
Second, allowing for noncovariance 
extends the applicability of the covariant phase space to noncovariant
formulations of the theory, such as the ADM formulation \cite{Arnowitt:1962hi}, 
facilitating a straightforward
comparison between the formulations.  

The main objective of section \ref{sec:charges} is to arrive at unambiguous
expressions for the gravitational charges.  Ambiguities can arise in two
related but conceptually distinct ways.  The first are the JKM ambiguities
\cite{Jacobson:1993vj}, which occur in the formulation of the covariant 
phase space by Wald and collaborators 
\cite{Wald:1999wa, LeeWald1990, Wald:1993nt, Iyer:1994ys}
due to the fact that 
various quantities, such as the Lagrangian or the symplectic current,
are defined only up to addition of exact differential forms.  
We demonstrate in section \ref{sec:JKM} that the gravitational
charges can be defined in such a way as to be completely invariant
under the JKM transformations, including transformations involving 
noncovariant quantities.  This provides a powerful link between covariant
and noncovariant formulations of the theory, since any two formulations can be
viewed as being related by a JKM transformation.  This then demonstrates that 
the charges are not sensitive to the specific choices made in setting up the 
canonical framework.  

The second set of ambiguities occurs for localized charges constructed 
via the Wald-Zoupas procedure \cite{Wald:1999wa}.  These charges depend 
on the form of the flux through the boundary of the subregion, and 
a prescription is needed to fix the expression for the flux.  
Wald and Zoupas
gave a proposal called the {\it stationarity requirement}
 for 
fixing the ambiguity, which requires that the
decomposition of the symplectic potential
be chosen such that the 
flux vanishes identically in stationary spacetimes.  This condition, along with 
a requirement on the covariance properties of the flux, was shown to yield unambiguous
localized charges for BMS generators in 4D asymptotically flat spacetime
\cite{Wald:1999wa}.  On the other hand, there has been much recent interest in 
extended symmetry algebras at null infinity 
\cite{Barnich2009, Barnich:2011ct, Cachazo:2014fwa, Kapec:2014opa,
Campiglia:2014yka, CL, Compere:2018ylh, freidel2021weyl}, 
which were missed in older analyses due to imposition of
diffeomorphism-freedom conditions
at the boundary that do not
correspond to degeneracy directions of the symplectic form and are
thus not true gauge degrees of freedom.
One can demonstrate that the stationarity
and covariance requirements do not produce finite charges associated with 
these extended symmetries \cite{Flanagan:2019vbl}, and for sufficiently permissive
boundary conditions, the stationarity requirement may either fail, or not fully
fix all possible ambiguities in the flux.  This motivates finding 
an alternative for fixing the flux ambiguities.

We therefore focus in this work
on a different resolution that is more closely tied to the 
variational principle associated with the subregion.  This resolution 
was first proposed by Comp\`ere and Marolf \cite{Compere:2008us} 
(see also \cite{Andrade:2015gja, Andrade:2015fna}),
motivated 
by the covariant Peierls bracket construction that far predates the more modern treatments
of the covariant phase space \cite{Peierls:1952cb, DeWitt:1962cg, Marolf:1993zk, Marolf:1993af,
Hollands:2005ya}.
These ideas were subsequently expanded upon and formalized in the work of Harlow and Wu \cite{Harlow:2019yfa} and the extension of this construction to Wald-Zoupas localized charges was 
recently described by two of us \cite{Chandrasekaran:2020wwn}.  
It has also 
been employed in applications of extended symmetries of asymptotically AdS spaces and 
their flat space limits in \cite{Compere:2020lrt,Fiorucci:2020xto}.\footnote{A related
approach described in \cite{Freidel:2020xyx, Freidel:2021cjp} 
absorbs all boundary terms in the action into
a bulk Lagrangian.  Often, this produces results consistent with the action
variational principle, but it lacks some of the flexibility of the present formulation, requires arbitrary choices in how to extend the boundary term into the bulk,
and cannot handle the corner improvements described in section
\ref{sec:corners}.  
}  
The variational principle pertains to the 
full action for the subsystem, involving an integral of the Lagrangian in the bulk plus additional
boundary terms, which are chosen to ensure the action is stationary for a given choice of 
boundary conditions.  For a closed system, the boundary conditions  are essential in determining the 
dynamics of the theory.  Localized subregions instead behave like  open systems due to the presence
of symplectic flux through the boundary, and in this case boundary conditions should not
be imposed, as they would unnecessarily constrain the dynamics.  Nevertheless, the boundary
contribution in the variation of the action is used to describe the flux through
the boundary, and hence the form of the flux is largely determined by the choice 
of boundary condition one would have to impose if viewing the subregion as a closed system.

From the viewpoint of the variational principle, 
resolving the ambiguities 
in the covariant phase space
formalism thus amounts to finding a preferred form for the flux, or, equivalently, to a preferred
boundary condition
one would impose if treating the system as closed.  A particularly natural choice is to require that 
the flux be of Dirichlet form, meaning it depends algebraically on variations of the intrinsic
variables on the boundary.  
For example, at a timelike boundary in theories where the only dynamical field 
is the metric,  the Dirichlet condition
implies that the flux take the form 
$\beom = \pi^{ij}\delta h_{ij}$, where $h_{ij}$ is the induced metric and $\pi^{ij}$ can involve both
intrinsic and extrinsic quantities.  Similarly, on a null surface, the Dirichlet form of the flux
is $\beom = \pi^{ij}\delta q_{ij} + \pi_i \delta n^i$, where $q_{ij}$ is the degenerate 
induced metric and $n^i$ is the null generator.  
Arguments in favor of the Dirichlet form of the flux were presented in 
\cite{Chandrasekaran:2020wwn}, and 
include the connection to junction conditions at a surface, 
the semiclassical description of the path integral when gluing subregions, 
and a straightforward relation to the Brown-York and holographic constructions.  
For most of this work, we focus on the Dirichlet
form of the flux, but emphasize that most of the formal  constructions work for other choices
corresponding different boundary conditions, although these other choices yield different 
values of the charges and can affect their algebra.   

The demonstration in section \ref{sec:JKM} that the action, symplectic form, and 
localized charges are all insensitive to ambiguities is then performed by working 
out how the individual contributions to each of these quantities change under 
JKM transformations once the expression for the flux has been fixed.  
We also introduce a class of {\it boundary canonical transformations}, which resemble
the JKM transformations, but act nontrivially on the form of the flux, and hence 
change expressions for the charges.  Because these boundary canonical transformations
change the subregion action, this emphasizes that different choices of action
generically produce different charges.
A careful treatment of the definition of all quantities involved in constructing
the localized charges reveals an additional set of corner ambiguities in the 
charges described in sections \ref{subsec:covphasespace} and \ref{sec:corners},
that naively affect the values of the charge.  We further demonstrate 
in section \ref{sec:corners} that a corner 
improvement term in the localized charges fixes this ambiguity as well.

Having obtained ambiguity-free expressions for the charges, we proceed 
in section \ref{sec:algebra} 
to determine
the algebra they satisfy.  
This algebra can be defined by way of the bracket introduced by 
Barnich and Troessaert
in \cite{Barnich:2011mi} (henceforward referred to as the BT bracket), 
where it was postulated as a sensible choice that reproduces the 
algebra satisfied by the vector fields generating the diffeomorphisms on spacetime, up to 
extensions.  We present a new result deriving this bracket from first principles
by identifying it as the 
Poisson bracket of the localized charges on the subregion phase space.
This derivation relies on the flux being of Dirichlet form,
but the arguments continue to hold for a class of alternative forms of the 
flux, subject to certain conditions.    
The bracket of the localized charges in general does not close,
 but instead produces additional generators $K_{\xi,\zeta}$ that yield an extension
of the algebra satisfied by the spacetime vector fields.  
Explicit expressions for the extension terms are given in equations
(\ref{eqn:Kxizeta}) and (\ref{eqn:tilK}), 
which are consistent with the expressions originally
derived in \cite{Chandrasekaran:2020wwn}, suitably generalized to allow noncovariances
in the bulk Lagrangian.  We further show that the brackets 
between the new generators $K_{\xi,\zeta}$ and the localized charges $H_\xi$ coincides with 
the bracket postulated by Barnich and Troessaert, as long as the generators $K_{\xi,\zeta}$ 
depend only on intrinsic variables at the surface when employing the Dirichlet flux condition.  
This requirement is nontrivially satisfied for charges constructed at null surfaces in 
general relativity, which serves as a consistency check on the use of the BT bracket.

The final sections of this paper are devoted to charges constructed at asymptotic boundaries.
In section \ref{sec:compendium}, 
as a segue into holographic renormalization,
we review a number of asymptotic symmetry algebras that have been
proposed for 4D asymptotically flat space.  Our presentation focuses on the different
universal structures each algebra preserves, and we specifically analyze the cases of 
the standard BMS group, the generalized BMS group \cite{Campiglia:2014yka, CL}, and the recently 
proposed Weyl BMS group \cite{freidel2021weyl}, 
which in fact coincides with the symmetry group obtained in
\cite{CFP} for finite null boundaries.  Detailed derivations of these universal 
structures and their associated symmetry groups are given in appendix \ref{sec:derivegroups}.

We then turn to an analysis of the 
the holographic renormalization procedure that is needed to obtain finite results for 
asymptotic charges and their fluxes.  This procedure can be viewed 
as finding a boundary canonical transformation that renders the action finite,
after which all JKM-invariant quantities are finite as well.  
We further show that a JKM transformation can be performed to make each
individual term in the expressions for the charges finite as well. 
It has often been remarked that one reason for imposing boundary conditions on fields at asymptotic
boundaries is to ensure that the charges and fluxes have a finite limit to the boundary. The framework of holographic renormalization instead provides
a different perspective
\cite{Skenderis:2009kd}: one should allow for the most general asymptotic
expansion of the dynamical fields that are consistent with the 
equations of motion, and handle any divergences using the 
counterterms that renormalize the action.  
It was first demonstrated by Comp\`ere and Marolf that in 
asymptotically AdS space, the resulting symplectic structure obtained 
via the holographic renormalization procedure is 
finite for  all fluctuations
of the dynamical fields, 
which further implies the charges and fluxes are finite as well,
consistent with previous results on holographic asymptotic charges
\cite{Henningson1998a, Balasubramanian:1999re, DeHaro2001,
Papadimitriou:2005ii}. 
In section \ref{genarg}, we show that 
this argument applies quite generally to any asymptotic boundary, and give a general argument 
that the fluxes and charges are finite once a set of boundary terms that renormalize the action
have been found. In section \ref{exparg} we show that holographic
renormalization can always be successfully carried out,
by giving an algorithm for computing the terms that one must add to the symplectic potential and
Lagrangian to obtain finite renormalized 
quantities.  It is impossible to
simultaneously maintain covariance and achieve finiteness, so our
renormalized quantities break covariance through dependence on a
choice of background structure. This is entirely analogous to the situation in AdS/CFT, where renormalized asymptotic charges necessarily depend on the choice of radial cutoff surface, which translates into the appearance of the Weyl anomaly on the boundary
\cite{Henningson1998a, Balasubramanian:1999re,
Skenderis:2000in}.
Finally, in section \ref{expcalc}, we apply the formalism described in section \ref{exparg} to explicitly compute the renormalized symplectic potential and the localized charges associated with the generalized BMS group in vacuum general relativity in 4D asymptotically flat spacetimes.

We conclude in section \ref{sec:discussion} 
with several points of discussion and avenues for future work.

\subsection{Notation}  \label{sec:notation}

 Unless otherwise stated, we will work in $d+1$ spacetime dimensions with metric signature $(-,+,+,\cdots)$. We will use
the indices $a,b,c$ for $d+1$ dimensional tensors in spacetime and $i,j,k$ for $d$ dimensional tensors intrinsic to a surface embedded in spacetime. The conformal factor in our notation will be denoted by $\Phi$ (instead of the more commonly used symbol, $\Omega$, which we will reserve for the symplectic form). We will use $\scri$ to denote null infinity in asymptotically flat spacetimes, $\scp$ where we specialize to future null infinity,
and $\hateq$  to denote equality on $\scp$ (or more generally on a null surface). The null normal to a null surface will be denoted by $n_a$,
and the auxiliary null vector on a null surface will be denoted by $l^{a}$. $\kappa$ is used to denote the inaffinity associated with a null vector and is defined by $n^a \nabla_a n^b \hateq \nonaffinity n^b$. Often an index free notation will be used to denote differential forms, 
although the indices will be made explicit where convenient. 
For example, $\eta \equiv \eta_{i_{1} i_{2} \cdots i_{d}}$ and $\mu \equiv \mu_{i_{1} i_{2} \cdots i_{d-1}}$ will denote the volume forms on codimension-1 and codimension-2 surfaces respectively. 
We will use $i_{v} \eta$ to denote the inner product of a vector field, $v^{a}$, with a differential form (in this case $\eta$). On occasion, the contracted indices will be displayed while the uncontracted indices will be left implicit. In other places, where convenient, all of the indices will be made explicit. In summary, we will 
freely use any of the expressions $i_{v}\eta$, $v^{i} \eta_{i}$, $v^{i} \eta_{i j_{2} \cdots j_{d}}$ to denote the contraction of $v^i$ into the form $\eta$.

Pullbacks to surfaces will be denoted using underlines, i.e., the pullback of 
$\theta$ to a surface will be denoted by $\underline{\theta}$.
 When working with the covariant phase space,
 $\fs$ will be used to denote the field configuration space of a theory, while $\sls$ will represent the space of field configurations that satisfy the equations of motion. Operations on it including $L_{\hat{\xi}}$, $\delta$, $I_{\hat{\xi}}$, and $\Delta_{\hat{\xi}}$ will be defined in section~\ref{subsec:covphasespace}, and capitalized calligraphic letters $\n A, \n B,\ldots$ will be used as abstract indices on $\sls$.
Note also that for simplicity, we will not distinguish between ``pre-symplectic'' and ``symplectic'' for quantities defined on the pre-phase space and the true phase space (see the second paragraph of section~\ref{subsec:covphasespace} for detail). Finally, Table \ref{tab:notation} lists various differential forms used in this paper along with their degrees on phase space and on spacetime, and the equations where they first appear.
\newline
\begin{table*}[t]
\centering
\footnotesize
\begin{tabular}{| c || c | c | c | c |}
\hline
\diagbox{$\sls$ degree}{Spacetime \\degree}&$(d+1)$ & $d$ & $(d-1)$ & $(d-2)$\\ 
\hhline{| = # = | = | = | = |}

0 & \makecell{$L'\eqref{eqn:dL}$, \\$\cov{L}\eqref{eqn:covL}$} 
& \makecell*[l]{ $b'\,\eqref{eqn:b'}$, $\cov{r}\,\eqref{eqn:b'ambig}$, \\
$\ell '\,\eqref{eqn:thdecomp}$, $J_{\xi}'\,\eqref{eqn:noethercurrent}$, \\
$\mcov{J}_{\xi} \eqref{eq:Jvc}$,  $a\,\eqref{eqn:L'shift}$, \\ $B\,\eqref{eq:boundarycanonicaltransformation}$ } 
&  \makecell[l]{ $e\,\eqref{eqn:b'ambig}$, $\mcov{Q}_{\xi}\eqref{eq:Qdefn}$, \\
$Q_{\xi}'\eqref{eq:Qprimexi}$, $h_{\xi}\eqref{eqn:Hxi}$, \\
$f\,\eqref{eqn:l'shift}$, $c'\,\eqref{eqn:bldecomp}$, \\
$\tilde{h}_{\xi} \eqref{eqn:hxiimproved}$ }
& 

\\
\hline
1 &  
& \makecell[l]{$\theta\,\eqref{eqn:dL}$, $\cov{\theta}\,\eqref{eqn:covth}$,\\
$\beom\,\eqref{eqn:thdecomp}$ }
& \makecell*[l]{$\lambda'\,\eqref{eqn:noncovtheta}$, $\cov{\rho}\,\eqref{eqn:lambda'ambig}$, \\
$\nu\,\eqref{eqn:thprime}$, $\beta'\, \eqref{eqn:thdecomp}$, \\
$\Lambda\,\eqref{eq:boundarycanonicaltransformation}$, 
$\varepsilon\,\eqref{eqn:bldecomp}$ }
& \makecell[l]{$\chi\,\eqref{eqn:lambda'ambig}$,
  $\gamma'\,\eqref{eqn:bldecomp}$,\\ $\mu_{\xi}\eqref{conj2}$, $\zeta\eqref{cct}$}
\\
\hline
\makecell*{2} &  & \makecell[l]{$\omega'\, \eqref{eqn:omega}$} &  &  \\
\hline 
\end{tabular}
\caption{
A summary of the
  various
  differential forms that are defined in our covariant phase space formalism,
 showing their spacetime degrees and phase space ($\sls$) degrees,
along with the equations where
 they are first introduced. We  
generally employ a convention where Greek or calligraphic letters denote forms with
 phase space degree greater than zero, and Latin letters denote forms of
 phase space degree
 zero.  See the paragraph above \eqref{eqn:dL} for the meaning of the prime notation,
 the paragraph below \eqref{eqn:noncov-defn} for the meaning of $\cov{}$, and footnote \ref{footnote:vc} for the meaning of $\mcov{}$.
} 
\label{tab:notation}
\end{table*}

\normalsize

Finally, when dealing with subregions, it is important to keep track of the orientations
of the various components of its boundary, for which we 
follow the conventions of \cite{Harlow:2019yfa}.  
Beginning with the codimension-$0$
subregion $\sr$ with $\ns$ a null or timelike component of the boundary, we choose 
the orientation of $\ns$ to be that induced as part of $\partial\sr$.  
The orientation of a spatial surface $\Sigma$ inside of $\sr$ whose boundary
intersects $\ns$ will be oriented as part of the boundary of its past, and the 
codimension-$2$ surface $\partial\Sigma$ defining a cut of $\ns$ will inherit
the induced orientation as a boundary of $\Sigma$.  Note that this means that 
$\partial\Sigma$ has the opposite orientation as that induced as part 
of the boundary of its past in $\ns$.  We define the volume form $\eta$ 
on $\ns$ to be one consistent with this choice of orientation, and similarly 
define $\mu$ on $\partial\Sigma$ to be consistent with its orientation.  
See appendix \ref{app:nullsurf} for the details of these volume forms 
when $\ns$ is a null surface.

\section{Gravitational charges at finite boundaries }
\label{sec:charges}

In any gravitational theory defined on a spacetime region with boundary, there are 
nonzero charges associated with diffeomorphisms that act near the boundary.  Depending on the 
context, one can distinguish between two related notions of charges, namely, {\it global} charges and 
{\it localized} charges.  Global charges are defined when the spacetime region under consideration can 
be viewed as a closed system, which occurs when considering the entire spacetime, or else working with 
a subregion of spacetime on which boundary conditions are imposed to prevent any interaction with 
the complementary region.  These charges generate the symmetry transformation of their associated
diffeomorphism on phase space via Hamilton's equation, and are conserved under time evolution.  
On the other hand, localized charges are defined for a subregion of spacetime, which is not assumed to 
be isolated from its complement.  Such charges need not be conserved due to the presence of nonzero
fluxes through the boundary, and in general will not faithfully generate the transformation
associated with the diffeomorphism.  Nevertheless, these localized charges provide useful
notions of quasilocal energy and momentum for subregions in phase space, and, as we will 
discuss, satisfy an algebra that closely resembles the diffeomorphism algebra of their corresponding
vector fields.  

Despite the distinctions, the two notions of charges are not entirely independent of each other.
Instead, a global charge can be viewed as a special case of a localized charge, in which the spacetime region is specialized to a closed system and the fluxes of the charge vanish.  
For this reason, we will focus in this work on the more general construction of localized
charges, and simply mention at various points how the construction can be specialized to 
global charges.

This section reviews the construction of localized gravitational charges
using covariant phase space techniques.  
The procedure was initially developed 
by Wald and Zoupas \cite{Wald:1999wa}, and in the present work  
we specifically focus on a number of recent developments 
on the handling of boundaries in the covariant phase space that have led to 
resolutions of the various ambiguities that can appear in the formalism
\cite{Compere:2008us, Andrade:2015gja, Andrade:2015fna, Harlow:2019yfa, Chandrasekaran:2020wwn}.  
The resolution comes from demanding that the symplectic potential $\beom$ 
describing
the flux through the subregion's boundary be of Dirichlet form.  
We will demonstrate explicitly that this fixes both the 
standard JKM ambiguities present in the covariant phase space formalism
\cite{Jacobson:1993vj, Iyer:1994ys}, as well as the additional ambiguity in identifying the flux 
when employing the Wald-Zoupas procedure.  
In fact, we will see that the formalism is invariant under generically {\it noncovariant} JKM
transformations, which, in particular, allows for formulations involving a bulk Lagrangian that is 
not spacetime-covariant, such as in the ADM formulation of the theory 
\cite{Arnowitt:1962hi}.
This provides maximal flexibility in identifying charges, allowing one to switch between a covariant
or noncovariant formulation depending on the application; invariance under JKM transformations 
ensures that the final result for the charges will not depend on this intermediate choice.  
We also describe in section \ref{sec:corners} a resolution of an additional set of ambiguities
involving corner contributions to the action, leading to an improved set of localized
charges.  These corner-improved charges generalize the proposal of reference 
\cite{Chandrasekaran:2020wwn} to allow for a noncovariant bulk Lagrangian and symplectic potential.

Throughout this section, we assume that boundaries are at finite locations in spacetime, and that all quantities have finite limits to the boundaries.  This assumption excludes asymptotic boundaries such as spatial infinity or future null infinity in asymptotically flat spacetimes, which can brought to a finite location in spacetime via conformal compactification, 
at the expense of having some of the dynamical fields diverge on the boundary.  
Later in Sec.\ \ref{sec:hr}, 
we will discuss the modifications and generalizations of the formalism that are necessary to handle asymptotic boundaries, based on the technique of holographic renormalization.

\subsection{Covariant  phase space}\label{subsec:covphasespace}
We begin with a brief review of the covariant phase space construction
\cite{Witten:1986qs, Crnkovic1987, Crnkovic:1987tz, Ashtekar1991, LeeWald1990, Wald:1993nt, Iyer:1994ys}
 in order to 
establish notation, which largely coincides with that used 
in \cite{Chandrasekaran:2020wwn}, and 
to point to places where we generalize the standard treatments.  
For recent reviews and more in-depth discussions 
of the covariant phase space, see \cite{Harlow:2019yfa,CFP}.

The idea behind the covariant phase space is to provide a canonical description of 
a field theory defined on a manifold
$\stm$
without breaking covariance by singling out a foliation of constant-time slices, as it done in more 
standard phase space constructions.  This is achieved by working with
the space $\sls$ of all field configurations satisfying the equations of motion, viewed as 
a subspace of the space $\fs$ of all field configurations.  
In a globally hyperbolic spacetime,  each
solution in $\sls$ can be identified, up to gauge transformations,
with its initial data defined on a Cauchy slice $\Sigma$, and 
since this initial data comprises the usual phase space of the theory, we see that there is a canonical
identification between $\sls$ modulo gauge transformations and the standard noncovariant phase space.\footnote{We will later consider subregions of spacetime 
which are not globally hyperbolic,
so this identification will not hold in those cases, but the construction nevertheless 
will allow us to define a sensible notion of phase space for the subregion.}
Since the phase space arises as a quotient of $\sls$ by the action of the gauge group, we will find
that $\sls$ has the structure of a pre-phase space, on which we will construct a pre-symplectic form
that has degenerate directions.  Most calculations will be done on $\sls$, bearing in mind that
eventually the quotient must be taken to arrive at expressions for the true phase space.  Throughout this work, we will drop the ``pre'' label for objects 
defined on $\sls$, and simply point out where it is important to distinguish
between the pre-phase space and true phase space.

The spaces $\fs$ and $\sls$ are infinite-dimensional manifolds, on which certain standard
differential geometry concepts are well-defined.  The dynamical fields $\phi$ (which will later be 
taken to consist of the metric and any matter fields) define a collection of functions
on field space, and the gradients of these functions are denoted $\delta\phi$.  
Differential forms of higher degree on field space can then be constructed 
by taking wedge products, and 
we will employ the notation where the product $\alpha \beta$ of two field-space differential
forms is always assumed to be a field-space wedge product, and hence satisfies
$\alpha\beta = (-1)^{ab} \beta\alpha$, where $a$ and $b$ are the respective form degrees 
of $\alpha$ and $\beta$. The operator $\delta$ then defines an exterior derivative on the 
space of field-space differential forms in the usual way. Vector fields are  defined 
by infinitesimal variations of the field configuration, and since vectors tangent to 
solution space $\sls$ must preserve the equations of motion, they are parametrized by 
solutions of the linearized field equations.  Given a vector field $V$ on $\sls$, we 
denote the operation of contraction with a differential form by $I_V$, so that  in
particular $I_V\delta\phi$ gives a phase space function that returns the 
linearized solution corresponding to $V$ around each background solution.
We can also take Lie derivatives along a given vector field $V$ in field space,
which we denote $L_V$, and its action on differential forms can be computed via 
Cartan's magic formula, 
\be
L_V = I_V \delta + \delta I_V.
\label{eqn:magic}
\ee

Our main focus in this work will be diffeomorphism-invariant theories.  
Infinitesimal diffeomorphisms are generated
by vector fields $\xi^a$ on spacetime, and they act on fields via the spacetime
Lie derivative $\lie_\xi \phi$.  
Diffeomorphism invariance implies that $\lie_\xi\phi$ is 
a solution to the linearized field equations, and hence defines a vector field 
on $\sls$, denoted $\h\xi$, through the equation $I_{\h\xi}\delta\phi = \lie_\xi\phi$.
The vector field $\xi^a$ can itself be viewed as a function
on field space, and often it is taken to be a constant, meaning $\delta\xi^a =0$.
However, in many applications it is useful to consider transformations
generated by field-dependent diffeomorphisms, for which $\delta \xi^a\neq 0$.
The Lie bracket $[\h\xi,\h\zeta]_{\fs}$
on field space of the vectors $\h\xi$ associated with field-dependent
$\xi^a$ is given by (see appendix \ref{app:fs}) 
\begin{align}
[\h\xi,\h\zeta]_{\fs} &= -\wh{\brmod{\xi}{\zeta}} \label{eqn:fsbrack} \\
\brmod{\xi}{\zeta}^a &= [\xi,\zeta]^a -I_{\h\xi}\delta \zeta^a + I_{\h\zeta} \delta\xi^a.
\label{eqn:brmod}
\end{align}
This expression employs the modified Lie bracket $\brmod{\cdot}{\cdot}$ introduced
in \cite{Barnich:2009se}, and its relation to the field space Lie bracket was
noted in \cite{Gomes:2018dxs}.  Since the vectors $\h\xi$ are tangent to the solution
space submanifold $\sls$ in $\fs$, the bracket $[\h\xi, \h\zeta]_{\sls}$ is also given 
by (\ref{eqn:fsbrack}).

We will be interested in objects defined on field space that may not transform covariantly
under diffeomorphisms.  Noncovariances arise in objects that depend on a 
background structure such as a nondynamical field.  Being nondynamical means that such a field 
is constant in field space, and hence $L_{\h\xi}$ acts trivially on it.  
In order to track the lack of covariance of a field 
space differential form, it is useful to define the anomaly operator $\Delta_{\h\xi}$, 
first introduced in \cite{Hopfmuller2018}, which acts on field space differential
forms constructed from local fields as\footnote{The operator $I_{\wh{\delta\xi}}$ acts
on the local field variations as $I_{\wh{\delta\xi}}\delta\phi = \lie_{\delta\xi}\phi$. 
See appendix \ref{app:fs} for additional details.}
\beq\label{eqn:noncov-defn}
\Delta_{\h\xi} = L_{\h\xi} - \lie_\xi - I_{\wh{\delta\xi}}.
\eeq
This operator provides a means for replacing field space Lie derivatives $L_{\h\xi}$ with 
spacetime Lie derivatives $\lie_\xi$,  keeping track of the anomalous transformation
of an object when doing so.  A covariant object is one that satisfies $\Delta_{\h\xi} \alpha
=0$, so, for example, since the dynamical fields are covariant, the statement $\Delta_{\h\xi}\phi =0$
is equivalent to the oft-used identity $L_{\h\xi}\phi = \lie_\xi \phi$.  
On the other hand, a nondynamical field $\psi$ satisfies $L_\h\xi\psi = 0$ even though the spacetime
Lie derivative is generically nonzero.  In this case, the anomaly is given by $\Delta_\h\xi \psi =
-\lie_\xi\psi$.
When it is important to emphasize that a certain object is fully
covariant, we will denote it with an overset $c$, 
as in $\cov{\alpha}$; 
hence, for any such quantity, one may always
assume $\Delta_\h\xi \cov\alpha = 0$.

The dynamics of the theory is specified in terms of its Lagrangian $L'$, taken to be a top form
on spacetime, so that the action is given by $ \int_{\stm} L'$ up to boundary terms.  As we will discuss shortly, various quantities 
that we will consider depend on ambiguities in the definition
of the Lagrangian and related quantities, and we employ the notation
that quantities that depend on these ambiguities are 
indicated with a prime, as in $L'$.  Any primed quantity should 
be assumed to be noncovariant in general.
Varying the Lagrangian yields the field equations and symplectic potential $\theta'$ for the theory
according to
\beq \label{eqn:dL}
\delta L' = E\cdot \delta \phi + d\theta'.
\eeq
The solution space $\sls$ which will serve as the pre-phase space for the theory 
consists of all field configurations satisfying the field equations $E=0$.  
Our main focus will be theories  whose field equations are 
diffeomorphism-invariant, meaning $\Delta_{\h\xi} (E\cdot \delta \phi) = 0$.
A condition that guarantees diffeomorphism invariance is that the Lagrangian
be covariant up to an exact term, $\Delta_{\h\xi} L' = d a_\xi'$.  
We will further
restrict attention to theories in which the anomalous  term $a_\xi'$ 
can be written as the anomalous transformation of some other quantity
defined on the boundary, $a_\xi' = \Delta_{\h\xi} b'$.  
This implies that there exists
a choice of Lagrangian that differs from 
$L'$ by an exact term, $\cov L = L'-db'$, and is
fully covariant,
$\Delta_{\h\xi} \cov L = 0$.\footnote{This assumption precludes theories such as topologically
massive gravity 
\cite{Deser:1981wh, Deser:1982vy} whose Lagrangians are not covariant
for any choice of boundary term due to
the presence of Chern-Simons-like terms, but nevertheless yield diffeomorphism invariant
field equations.  The most 
general definition of a diffeomorphism-invariant theory would be one whose 
equations of motion satisfy $\Delta_{\h\xi}(E\cdot \delta \phi) = 0$, which, in light of 
equation  
(\ref{eqn:dL}), implies the anomaly of the Lagrangian need only satisfy
\beq
\Delta_{\h\xi}\delta L' = d\Delta_{\h\xi} \theta'.
\eeq
Given that the formalism is invariant under addition of noncovariant boundary terms,
as discussed in section \ref{sec:JKM}, it seems likely that most of the results described 
in this work can be extended to this more general class of diffeomorphism-invariant theories.
It would be interesting to analyze such generalizations in more detail, for example,
as explored in \cite{Freidel:2021cjp}.
}
Iyer and Wald have shown that whenever there is a covariant Lagrangian,
one can find a symplectic potential $\cov \theta$ that is covariant as well,
$\Delta_{\h\xi}\cov \theta = 0$ \cite{Iyer:1994ys}.    
The covariant symplectic potential can differ from $\theta'$ by 
the addition of an exact term and a total variation, and hence
there must exist quantities $b'$ and $\lambda'$ satisfying the
equations
\begin{subequations}
\label{eqn:blambdadef}
\begin{eqnarray}
\Delta_{\h\xi} L' &=& d\Delta_{\h\xi} b' \label{eqn:b'} \\
\Delta_{\h\xi} \theta' &=& \Delta_{\h\xi}\delta b' + d\Delta_{\h\xi} \lambda',
\label{eqn:noncovtheta}
\end{eqnarray}
\end{subequations}
For a given Lagrangian $L'$ and symplectic potential $\theta'$, 
equations (\ref{eqn:b'}) and (\ref{eqn:noncovtheta}) will be taken
as the definitions of $b'$ and $\lambda'$. Once $b'$ and $\lambda'$ 
satisfying these equations have 
been found, the associated covariant Lagrangian and 
symplectic potential are defined to be
\begin{subequations}
\begin{eqnarray}
\cov L &=& L'-db'  \label{eqn:covL}\\
\cov \theta &=& \theta' - \delta b' - d\lambda' . \label{eqn:covth}
\end{eqnarray}
\end{subequations}

Equations (\ref{eqn:b'}) and (\ref{eqn:noncovtheta}) 
fix $b'$ and $\lambda'$ in terms of 
$L'$ and $\theta'$ up to shifts of the form 
\begin{subequations}
\label{eqn:echidef}
\begin{eqnarray}
b'&\ra& b' + \cov \cb +de \label{eqn:b'ambig}\\
\lambda' &\ra& \lambda' -\delta e + \cov \cgamma + d\chi \label{eqn:lambda'ambig}
\end{eqnarray}
\end{subequations}
with $\cov \cb$ and $\cov \cgamma$ covariant and $e$ and $\chi$ generically noncovariant.
However, we will see below that the localized charges and other
relevant 
quantities do not depend on the freedom to shift by the 
covariant quantities $\cov \cb$, $\cov \cgamma$, nor on
the shift in $\lambda'$ by $d\chi$.  In principle, the 
charges {\it are} sensitive to the shift by $e$ if $\Delta_\h\xi e
\neq 0$, but this can be resolved using a more refined 
treatment of corner terms, as explained in section 
\ref{sec:corners}.

Finally, we mention that the standard ambiguities that 
appear when working with $L'$ and $\theta'$
arise from the fact that any other Lagrangian  that differs
from $L'$ by an exact term, $L' + da'$, yields the same equation
of motion, and hence is an equally valid choice for defining the bulk
dynamics.  For such a shifted Lagrangian, any shifted 
symplectic potential of the form 
\beq
\label{eqn:thprime}
\theta' + \delta a' + d\nu '
\eeq
will satisfy the relation (\ref{eqn:dL}), and hence defines
a valid symplectic potential.  
These freedoms to shift $L'$ and $\theta'$ are often presented 
as ambiguities in the covariant phase space formalism
\cite{Jacobson:1993vj, Iyer:1994ys};
however, it has recently been understood that such ambiguities may be resolved 
by specifying the form of the boundary condition one would impose to 
ensure vanishing symplectic flux through the boundary of the subregion 
\cite{Compere:2008us, Andrade:2015gja, Andrade:2015fna, Harlow:2019yfa, Chandrasekaran:2020wwn}.  
This resolution is explored in detail in section \ref{sec:JKM}, where it
is shown that the charges, fluxes, and subregion action
all involve combinations of the various
objects that are manifestly invariant under these shifts.

\subsection{Symplectic form}
\label{sec:presymplecticform}

Before constructing localized charges associated with a subregion, we must first 
restrict the solution space to the subregion, and equip it with a symplectic structure.
To this end, we let $\sr$ denote the open set in $\stm$ defining the subregion of 
interest, whose boundary includes a timelike or null component
$\ns$.  There may be additional boundaries to the future and past of $\sr$, and,
although these do not play a major role in the construction of charges in the 
present work, these additional boundaries will become important when considering
more detailed resolutions of corner ambiguities, as discussed in 
section \ref{sec:corners}.
We will restrict attention to the space of solutions within the subregion $\sr$,
with no boundary conditions imposed at $\ns$.  We  denote this 
restricted solution space by $\slsu$.

We now consider spatial slices $\Sigma$ in $\sr$ whose boundaries $\partial \Sigma$ lie in $\ns$.
We will define a symplectic form $\Omega$ associated with $\partial \Sigma$
as an integral over $\Sigma$ and $\partial\Sigma$.  The resulting localized phase spaces $(\slsu,\Omega)$ 
will serve as the starting point for constructing localized charges, and  it is important to 
remember that they depend on both the subregion solution space $\slsu$ as well as a choice 
of cut of the boundary.

Two specific examples that illustrate this general framework are as
follows. First,
we take $\sr$ to be a globally hyperbolic, asymptotically flat
spacetime,
$\ns$ to be future null infinity $\scri^+$, and
$\Sigma$ to be an asymptotically null slice which intersects  
$\scri^+$ in some cut $\partial \Sigma$ \cite{Wald:1999wa}.
Second, we take $\sr$ to be a
timelike tube in spacetime, $\ns$ to be the timelike boundary
$\partial \sr$ of the tube,
and $\Sigma$ to be a spatial slice whose
boundary $\partial \Sigma$ lies in ${\cal N}$.  This second example is the
context for the Brown-York quasilocal charge construction \cite{Brown:1992br}.
Note that in both of these examples, the subregion solution space $\slsu$ is not in
one-to-one correspondence with the space of initial data   
on $\Sigma$.
This is a general feature of the framework, since 
$\Sigma$ is generally not a Cauchy surface for the subregion.
In the timelike tube example this arises because we have not imposed
any boundary conditions on $\partial \sr$.

The symplectic form will be constructed as a sum of two terms, one capturing 
the bulk contribution and one involving a boundary contribution.  The bulk term 
is constructed as the integral over a spatial slice $\Sigma$ through $\sr$ 
of the symplectic current,
\beq \label{eqn:omega}
\omega' = \delta\theta'.
\eeq
To determine the boundary contribution, we first consider the pullback 
$\p\theta'$ of the symplectic potential to $\ns$, and decompose it into three 
terms 
\beq \label{eqn:thdecomp}
\p\theta' \heq -\delta \ell' + d\beta' + \beom,
\eeq
where we refer to $\ell'$ as the {\it boundary term}, $\beta'$ as the {\it corner term},
and $\beom$ as the {\it flux term}.  The reason for this terminology relates to the variational
principle for the subregion.
Neglecting contributions from past and future boundaries, 
the action for the subregion $\sr$ is defined to be
\beq\label{eqn:action}
S = \int_{\sr} L' + \int_{\ns}\ell'.
\eeq
Varying this action and applying equations (\ref{eqn:dL}) and 
(\ref{eqn:thdecomp}), we find
\beq
\delta S = \int_{\sr} E\cdot \delta \phi + \int_{\ns} \beom + \int_{\partial \ns}\beta'
\eeq
and hence it is stationary both with the bulk field equations 
hold $E\cdot \delta\phi = 0$ and when the flux through the boundary 
vanishes, $\beom = 0$.  The corner term $\beta'$ 
localizes to the past and future boundaries of $\ns$, and in a complete treatment,
additional corner contributions to the action should be added at the codimension-2
boundaries of $\ns$ and the past and future boundaries,
as described in, e.g., Refs. \cite{Hayward1993, Booth:2001gx, Lehner_2016}.  
Although not crucial to the remaining 
discussion of this paper, these corner contributions to the action can produce
some modifications to the formalism, as described in 
section \ref{sec:corners}.

Without specifying the form of the flux term $\beom$, equation (\ref{eqn:thdecomp})
is ambiguous, since we can always shift it by exact terms and total variations
$\beom\rightarrow \beom +\delta B-d\Lambda$ by making compensating changes
to $\ell'$ and $\beta'$.  These changes  affect the subregion action (\ref{eqn:action}),
as well as the definitions of the charges,
and hence to avoid such ambiguities, 
it is paramount to specify a criterion for selecting
a preferred choice for $\beom$.  In making such a choice, it is important
to realize that the form of $\beom$ determines the boundary condition
one would impose in a variational principle for the subregion by the above discussion.
While different choices are available for these
boundary conditions, we mention that
it is often most useful to choose those in which 
$\beom$ takes a Dirichlet form, meaning only variations of intrinsic quantities
on the surface without derivatives appear in $\beom$. 
 For a timelike surface, this means
\beq \label{eqn:timeD}
\beom = \pi^{ij} \delta h_{ij}
\eeq
where $h_{ij}$ is the induced metric, while for a null surface it means
\cite{Chandrasekaran:2020wwn, Chandrasekaran:2021hxc}
\beq\label{eqn:nullD}
\beom = \pi^{ij}\delta q_{ij} + \pi_i \delta n^i
\eeq
where $q_{ij}$ is the degenerate induced metric, and $n^i$ is the null generator.  
A number of arguments in favor of the Dirichlet form of the flux were presented
in \cite{Chandrasekaran:2020wwn}, such as the relation to junction conditions
across $\ns$ and the semiclassicality of the gravitational path integral 
when gluing subregions.  We will also utilize this condition in section
\ref{sec:algebra} when deriving the algebra satisfied by the localized charges, but 
we argue that other forms of the flux also allow the derivation to go through.  
In writing equations (\ref{eqn:timeD}) and (\ref{eqn:nullD}), we have restricted 
attention to theories such as general relativity that admit a Dirichlet variational
principle (or equivalently, possesses second order equations of motion), and have 
neglected any contributions from matter fields to the symplectic potential.
Note that the conjugate momenta $\pi^{ij}$, $\pi_i$ can involve objects constructed from both
the extrinsic and intrinsic geometry of the surface.  
The Dirichlet requirement fixes the form of $\beom$ up to addition of boundary
and corner terms constructed entirely from  intrinsic quantities, and 
in section \ref{sec:hr} we will discuss how these 
purely intrinsic ambiguities are used in the 
context of holographic renormalization.

We can further interpret how to view $\beom$ by 
taking a variation of equation (\ref{eqn:thdecomp}) and rearranging terms, 
which yields
\beq
\delta\beom = \p\omega' -d\delta\beta'.
\eeq
This shows that  $\beom$ serves as a symplectic potential for the pullback
of the symplectic form $\p\omega'-d\delta\beta'$.  Here, the term $d\delta\beta'$
is precisely of the form of the 
ambiguity in the symplectic potential described in equation (\ref{eqn:thprime}),
and, as described in \cite{Harlow:2019yfa}, by considering an extension of $\beta'$ away from the surface $\ns$, we can
view $\theta'-d\beta'$ as the symplectic potential everywhere in the bulk.
The associated symplectic current is then $\omega' - d\delta \beta'$, 
and integrating this 
over a spatial slice $\Sigma$ 
yields a symplectic form that is the sum of a bulk and boundary term, 
\beq\label{eqn:Omcorner}
\Omega = \int_{\Sigma} \omega' - \int_{\partial\Sigma} \delta\beta'.
\eeq

We remark that we will require the quantities $L'$, $\theta'$, $b'$
and $\lambda'$ to be continuous everywhere on the spacetime subregion
and in particular everywhere on its boundary.  This condition is
necessary for passing from $(d+1)$-dimensional bulk integrals to
$d$-dimensional boundary integrals using Stokes' theorem.  By
contrast, the quantities $\ell'$, $\beta'$ and ${\cal E}$ associated
with the decomposition of the pullback of $\theta'$ to a boundary
component will not be required to be continuous across a corner
joining two boundary components.  This greater generality for these
quantities goes hand in
hand with the use of corner terms in the formalism in
Secs.\ \ref{sec:corners} and \ref{sec:hr} below.

\subsection{Localized charges}
\label{sec:WZcharges}

Having identified a symplectic structure for the subregion $\sr$,
we can proceed to construct gravitational charges associated with 
diffeomorphisms that act near the boundary $\ns$.  
Diffeomorphism invariance of the field equations implies the existence 
of a conserved Noether current $J_\xi'$ associated with each diffeomorphism
generated by a given vector $\xi^a$.  
It follows from equation (\ref{eqn:b'}) that under the action of 
a diffeomorphisms on phase space, the Lagrangian $L'$ transforms as 
\be
I_{\h\xi}\delta L' = \lie_\xi L' +\Delta_{\h\xi}L' 
= di_\xi L' + d\Delta_{\h\xi} b'.
\ee
However, from the definition (\ref{eqn:dL}), the left hand side can be written as
\begin{align}
I_{\h\xi}\delta L' = E\cdot I_{\h\xi}\delta\phi+ dI_{\h\xi}\theta',
\end{align}
and so defining the Noether current to be 
\beq
J_\xi' = I_{\h\xi}\theta' - i_\xi L' - \Delta_{\h\xi} b',
\label{eqn:noethercurrent}
\eeq
we see that $dJ_\xi' = -E\cdot I_{\h\xi}\delta \phi$, which vanishes on shell.
Here, we find a correction to the usual definition of the Noether current involving the 
noncovariance of the boundary term, $\Delta_\h\xi b'$, which was identified previously in 
\cite{Harlow:2019yfa}.
We can relate the Noether current (\ref{eqn:noethercurrent}) to the Noether current
$\mcov J_\xi$ constructed from the covariant
Lagrangian and symplectic potential\footnote{The superscript ``$vc$''
\label{footnote:vc} in this expression stands for ``vector covariant,'' and 
is used to indicate that the only noncovariance in $\mcov J_\xi$ arises from
its dependence on the noncovariant vector field $\xi^a$.  This notation will 
be used to indicate any quantity such as $\mcov J_\xi$ depending linearly 
on $\xi^a$ and its derivatives whose noncovariance is given by
\beq \label{eqn:Delvc}
\Delta_\h\zeta \mcov J_\xi = \mcov J_{\left(\Delta_\h\zeta \xi \right)} = 
-\mcov J_{\left(\brmod{\zeta}{\xi} -I_\h\xi \delta \zeta \right)}.
\eeq} 
\beq \label{eq:Jvc}
\mcov J_\xi = I_\h\xi \cov \theta - i_\xi \cov L
\eeq
using (\ref{eqn:covL}) and (\ref{eqn:covth}), 
 which produces
\begin{align}\label{eqn:Jxi'}
J_{\xi}' = I_{\h\xi}\cov \theta+I_{\h\xi}\delta b' + d I_{\h\xi}\lambda' 
- i_\xi \cov L
-i_\xi db' - I_{\h\xi}\delta b' +\lie_\xi b'
=\mcov J_\xi + d(i_\xi b' + I_{\h\xi}\lambda '),
\end{align}
showing that $\mcov J_\xi$ and $J_\xi'$ differ by an exact term. 
Furthermore, since $\mcov J_\xi$ is identically closed on shell and covariantly
constructed for any vector $\xi^a$, it can be expressed as the
exterior derivative
 of a potential,
 \be \label{eq:Qdefn}
 \mcov J_\xi = d \mcov Q_\xi\,,
 \ee
that is covariantly constructed from $\xi^a$ and the dynamical fields
\cite{W-closed}.  The relation (\ref{eqn:Jxi'}) then
implies that $J_\xi'$ is also expressible as the exterior derivative $
J_\xi' = d Q_\xi'$ of a potential $Q_\xi'$, given by
\beq \label{eq:Qprimexi}
Q_\xi' = \mcov Q_\xi + i_\xi b' +I_{\h\xi}\lambda'.
\eeq

The localized charges $H_\xi$ are now constructed by evaluating the contraction of 
the field space vector field $\h\xi$ into the symplectic form.  Using
the identity (see appendix \ref{sec:phase-space-calc})
\beq\label{eqn:IWident}
-I_{\h\xi}\omega' = d(\delta Q_\xi' -Q_{\delta\xi}' -i_\xi\theta' -\Delta_{\h\xi}\lambda'),
\eeq
and the definition (\ref{eqn:Omcorner}) for the subregion symplectic form,
we find that the contraction of $\h\xi$ into $\Omega$ is give by
\beq \label{eqn:Ham}
 -I_{\h\xi}\Omega = \int_{\partial\Sigma}\left(\delta Q_\xi'
-Q_{\delta\xi}'-\Delta_{\h\xi}\lambda'- i_\xi\theta' + I_{\hat \xi} \delta \beta'\right).
\eeq
Note that because this contraction localizes to a pure boundary integral, any 
diffeomorphism supported purely in the interior of $\Sigma$ is a degeneracy of 
$\Omega$, reflecting that such transformations are pure gauge.  
If $\xi^a$ generated a genuine, global symmetry of the 
subregion phase space, the right hand side
of (\ref{eqn:Ham}) would have to be the total variation $\delta H_\xi$
of a quantity $H_\xi$ that would be identified as
the charge generating the symmetry.  In this case,
equation (\ref{eqn:Ham}) simply becomes the statement of Hamilton's equation,
$-I_\h\xi\Omega = \delta H_\xi$.  However, it is clear from
inspection that the terms $Q'_{\delta\xi} + \Delta_\h\xi \lambda' +i_\xi\theta' + I_{\hat \xi} \delta \beta'$ 
 generically do not take the form of a total variation upon integration
 over $\partial\Sigma$, absent the imposition
of boundary conditions.   
While such boundary conditions arise naturally for 
global charges for closed subsystems, 
in the more general context of an open, localized phase space, such boundary conditions 
unnecessarily constrain the dynamics and eliminate dynamical degrees of freedom 
associated with fluxes of radiation modes.
In this case, we seek to define
a set of localized charges, which satisfy a modification of Hamilton's equation involving a term 
representing the flux of degrees of freedom escaping the 
subregion.

Using the decomposition (\ref{eqn:thdecomp}) of $\theta'$, we find that equation (\ref{eqn:Ham})
can be reorganized into the form (see Appendix \ref{sec:phase-space-calc})
\beq \label{eqn:IxiOm}
-I_{\h\xi}\Omega = \int_{\partial\Sigma}\delta h_{\xi} 
-\int_{\partial\Sigma}\big(i_\xi \beom -\Delta_{\h\xi}(\beta'-\lambda') + h_{\delta\xi}\big),
\eeq
where we have defined the localized charge density $h_\xi$ to be 
\beq \label{eqn:Hxi}
h_\xi = Q_\xi' + i_\xi \ell' - I_{\h\xi}\beta'.
\eeq
This formula takes the same form as the expression derived by Harlow and Wu
\cite{Harlow:2019yfa}, applied now in a context where boundary conditions are not
imposed on the phase space, as  in \cite{Chandrasekaran:2020wwn}.
The first term in (\ref{eqn:IxiOm}) is  a total variation, and we are 
led to identify this with the localized charge associated with 
$\xi^a$,
\beq \label{eqn:Hxiint}
H_\xi = \int_{\partial\Sigma}h_\xi.
\eeq  
The remaining terms in (\ref{eqn:IxiOm})
represent the loss of symplectic flux through the boundary 
$\ns$ under a flow generated by $\xi^a$ 
that moves $\partial\Sigma$ along this boundary.
The modified Hamilton's equation involving nontrivial fluxes for the 
localized charge then takes the form
\beq \label{eqn:Hammod}
\delta H_\xi = -I_{\h\xi}\Omega + \flx_\h\xi 
\eeq
where 
\beq \label{eqn:Axi}
\flx_\h\xi\equiv \int_{\partial\Sigma}\left(
 i_\xi \beom -\Delta_{\h\xi}(\beta'-\lambda') + h_{\delta\xi} \right).
\eeq
 We denote this flux by $\flx_{\h\xi}$ rather than simply $\flx_\xi$ to emphasize that it can depend
nontrivially on the field-dependence of the 
generator of the diffeomorphism.\footnote{We have separated the entire contribution
coming from $\delta\xi^a$ into the flux term, although for cases where $\delta\xi^a$ takes a specific 
form, it may be possible to separate off a total variation from $h_{\delta\xi}$ to include 
as a correction to the charge.  Such field dependence is used in 
\cite{Adami:2020ugu,Ruzziconi:2020wrb, Adami:2021sko,Geiller:2021vpg,Grumiller:2021cwg,Adami:2021nnf}, for example, to 
cancel some terms appearing in the flux, to arrive at integrable generators in the 
absence of gravitational waves.
 }

The charges constructed via equation (\ref{eqn:Hxiint}) obey a nontrivial conservation equation, which can
be obtained by computing the exterior derivative of the charge density $h_\xi$.  This yields 
the identity (derived in Appendix \ref{sec:phase-space-calc}) 
\beq \label{eqn:dhxi}
dh_\xi = I_{\h\xi}\beom -\Delta_{\h\xi}(\ell'+b') - i_\xi(L'+d\ell'),
\eeq
and integrating this between two cuts $S_1$ and $S_2$ 
of the boundary $\ns$ produces the anomalous continuity
equation,
\beq
\label{eqn:anomalouscontinuity}
H_{\xi}(S_2) - H_\xi(S_1) 
= -\int_{\ns_1^2}\left(I_{\h\xi}\beom -\Delta_{\h\xi}(\ell'+b')\right)
\eeq
where the last term in (\ref{eqn:dhxi}) does not contribute since $\xi^a$ is taken to 
be tangent to $\ns$. The minus sign in this equation appears due to the choice 
of orientations of $\ns$ and $\partial\Sigma$, discussed in section 
\ref{sec:notation}.
The first term on the right of this equation is interpreted as the symplectic flux
out of the subregion, while the second term involving $\Delta_{\h\xi}(\ell'+b')$ 
is an anomalous violation of the conservation equation \cite{Chandrasekaran:2020wwn}.

\subsection{Ambiguities and boundary canonical transformations}
\label{sec:JKM}

At this point, it is worth commenting that various objects introduced above, 
such as $\Omega$ (\ref{eqn:Omcorner})  and $h_\xi$ (\ref{eqn:Hxi}), have been defined
without the prime notation.  This is because these objects are in fact insensitive to
the two ambiguities we have mentioned to this point, corresponding to 
shifting $L'\rightarrow L'+d\ja$ and $\theta'\rightarrow \theta' + \delta \ja + d\jl$,
where $\ja$ and $\jl$ are allowed to be noncovariant in general.
We refer to these shifts of $L'$ and $\theta'$ as JKM transformations,
having first been identified in the work of Jacobson, Kang, and Myers 
\cite{Jacobson:1993vj}.
Under such a transformation, we require that the flux $\beom$ remain invariant,
since we are taking the form of the flux to be a physical input defining the dynamics
of the subregion.  
In order to keep $\beom$ invariant even while $\theta'$ changes under the transformation,
we must also shift the quantities $\ell'$ and $\beta'$ appearing in the decomposition
(\ref{eqn:thdecomp}). The JKM transformations of the basic quantities defining the 
phase space are then given by
\begin{subequations}
\label{eqn:JKMtrans}
\begin{align}
L' &\ra L' + d\ja \label{eqn:L'shift}\\
\theta' &\ra \theta' + \delta \ja + d\jl \label{eqn:th'shift}\\
\beom &\ra \beom 
\end{align}
\end{subequations}
which then imply
\begin{subequations}
\begin{align}
 \omega' &\ra \omega' + d\delta\jl\\
 b' &\ra b' + \ja\\
 \lambda' &\ra \lambda' + \jl\\
 \ell' &\ra \ell' -\ja \\
 \beta' &\ra \beta' + \jl \\
 J_\xi' &\ra J_\xi' + d(i_\xi \ja + I_{\h\xi}\jl) \\
 Q_\xi' &\ra Q_\xi' + i_\xi \ja + I_{\h\xi}\jl
\end{align}
\end{subequations}
Given these transformations, it is immediate to check from the definitions
(\ref{eqn:Omcorner}), (\ref{eqn:Hxi}), (\ref{eqn:Axi}), and (\ref{eqn:action}) 
that the 
symplectic form $\Omega$, the charge density $h_\xi$,  the symplectic
flux $\flx_\h\xi$, and the subregion action $S$ are all invariant.

Note in particular that if we start with
a noncovariant Lagrangian and symplectic potential $L'$ and $\theta'$, and 
perform a JKM transformation with $\ja = -b'$, $\jl=-\lambda'$, we obtain
a covariant Lagrangian and symplectic potential, $\cov L$ and 
$\cov \theta$, which are often used in standard treatments of the covariant
phase space.
In this case, the expressions for the charges and symplectic form reduce to 
those constructed in \cite{Chandrasekaran:2020wwn}, which utilized the 
covariant Lagrangian and symplectic potential.  
Invariance of the charges under 
JKM transformations then implies that these expressions will agree with charges
constructed using a Lagrangian and symplectic form that differs by the addition of 
noncovariant boundary and corner terms.  
The main message here is that the only choices that affect the charges are 
the field equations $E\cdot \delta \phi = 0$ and the form of the flux $\beom$. 
The quantity $\beom$ can be viewed as a boundary equation of motion, 
analogous to the bulk expression $E\cdot \delta\phi$, 
and, although we do not impose this field equation for generic localized
subregions, the equation $\beom=0$ would be the boundary condition one would 
have to impose to have a well-defined variational principle for the subregion.  
Note that it is not necessary to absorb all the boundary terms in the 
action principle into total derivative terms in the definition of the Lagrangian
$L'$.  Instead, when writing a variational principle for the subregion, one
should view $L'$ as the bulk part of the action, and $\ell'$ as the boundary
contribution, as in equation (\ref{eqn:action}), 
and together they produce an action that is independent of 
JKM transformations.  The choice of a particular $L'$ is then largely a matter 
of convenience. When discussing consequences of diffeomorphism invariance,
such as the first law of black hole mechanics
\cite{Wald:1993nt, Iyer:1994ys},
it is usually most transparent to work with covariant $\cov L$ and $\cov \theta$.  However, 
it can also be advantageous to work with a noncovariant Lagrangian, such as 
when using the ADM formalism \cite{Arnowitt:1962hi}, or in the context of 
holographic renormalization in order to obtain finite spacetime integrals, as 
expanded upon in section \ref{sec:hr}.  The results of this section indicate 
that charges obtained in either formulation coincide, and the transformations
(\ref{eqn:JKMtrans})
provide a means for translating
between different choices.

Since the resolution of JKM ambiguities employed in this work relies on fixing
a preferred choice
of the flux $\beom$, it is worth commenting on the different choices that are 
available for $\beom$.  The different possible choices
are obtained from transformations that alter the decomposition
(\ref{eqn:thdecomp}) of the presymplectic potential, while leaving $L'$ and
$\theta'$ invariant.
Such transformations induce changes in the flux, corner and and boundary terms
of the form\footnote{By composing with a JKM transformation
(\ref{eqn:JKMtrans}) one can
obtain an alternative form of boundary canonical transformations
in which $L' \to L' + d B$ and $\theta' \to \theta' + \delta B - d \Lambda$ 
while $\ell'$ and $\beta'$
are invariant. However the form (\ref{eq:bct}) is more general
since it naturally accommodates transformation parameters $B$ and
$\Lambda$ that are discontinuous from one boundary component to the
other, while $L'$, $\theta'$ and the JKM parameters $a$ and $\nu$ are
required to be continuous.  See the discussion after Eq.\ (\ref{eqn:Omcorner}) above.}
\begin{subequations}
\label{eq:bct}
  \begin{eqnarray}
    \label{eq:boundarycanonicaltransformation}
    \beom&\ra&\beom +\delta B -d\Lambda, \\
    \ell'&\ra&\ell' + B, \\
    \beta'&\ra&\beta' + \Lambda.
\end{eqnarray}
\end{subequations}
We refer to such transformations as a {\it boundary canonical transformations},
since they affect the division of the canonical pairs that appear in $\beom$ into
coordinates and momenta.\footnote{See \cite{Papadimitriou:2010as}
for a related discussion interpreting such transformations
as a canonical transformation in the context 
of holographic renormalization.}
For example, starting with a Dirichlet 
flux on a timelike boundary $\beom = \pi^{ij}\delta h_{ij}$, the boundary
canonical transformation with $B = -\pi^{ij}h_{ij}$ yields a Neumann form
of the flux, $\beom_N = \beom -\delta(\pi^{ij}h_{ij}) = -h_{ij}\delta \pi^{ij}$.
Quantities that were invariant under the JKM transformations considered above
transform nontrivially under these boundary canonical transformations;
in particular, the action, symplectic form, and charge density change
according to
\begin{subequations}
\begin{align}
S &\ra S+\int_{\ns} B \\
\Omega &\ra \Omega -\int_{\partial\Sigma} \delta \Lambda \\
h_\xi &\ra h_\xi +i_\xi B -I_\h\xi \Lambda.
\end{align}
\end{subequations}
There are a number of situations where boundary canonical transformations
are relevant.  The most important example for the present work
is in considerations of holographic renormalization,
where the naive Dirichlet form for $\beom$ does not admit a finite limit to the 
asymptotic boundary.  In this case, one seeks to find a counterterm
$B = \ell_\text{ct}$ constructed from intrinsic quantities on the boundary
such that the resulting renormalized action is finite as the boundary is 
taken to infinity.  We show in section \ref{genarg} that this ensures that 
the renormalized flux $\beom_\text{ren}$ also has a finite limit, and hence is sufficient
to construct finite asymptotic charges.  Another example 
in which such boundary canonical transformations appear is in AdS/CFT
when considering the alternative quantization of low mass bulk fields
\cite{Klebanov:1999tb}. 

\subsection{Corner improvements} \label{sec:corners}

While specifying the form of the flux $\beom$ resolves the 
standard JKM ambiguities in the covariant phase space formalism, 
there is an additional ambiguity that remains even after fixing 
$\beom$.  This ambiguity occurs because the decomposition
(\ref{eqn:thdecomp}) determines $\ell'$ and $\beta'$ only up to shifts 
of the form
\begin{subequations}
\label{fshift}  
\begin{eqnarray}
\ell' & \ra & \ell' + df  \label{eqn:l'shift}\\
\beta' &\ra& \beta' + \delta f, \label{eqn:beta'shift}
\end{eqnarray}
\end{subequations}
with $f$ generically noncovariant.  Under such a shift, the 
charge density $h_\xi$ is not invariant, instead transforming as 
\beq
h_\xi \ra h_\xi -\Delta_\h\xi f - di_\xi f,
\eeq
and the term $\Delta_\h\xi f$ will affect the value of the integrated 
charge $H_\xi$.  A similar shift occurs in $h_\xi$ under the 
transformations of $b'$ and $\lambda'$ described in equations 
(\ref{eqn:b'ambig}) and (\ref{eqn:lambda'ambig}) by a noncovariant 
quantity $e$, leading to a shift in the charge density
\beq
h_\xi \ra h_\xi -\Delta_\h\xi e - di_\xi e.
\eeq

In order to handle these additional ambiguities, a correction
must be added to the charges that cancels the dependence on these shifts.  
This improvement term in the charges was described in appendix C of 
\cite{Chandrasekaran:2020wwn} when working with covariant $L'$ and $\theta'$
(so that $b'$ and $\lambda'$ are set to zero), and here we will describe the 
generalization of this procedure to generically noncovariant $L'$ and $\theta'$.

\begin{table}
\centering
\footnotesize
\begin{tabular}{|c|c|c||c|c|c|c||c|c|c|c|}
\hline
 & \makecell*{Transformation}  && $a \ \dagger$ & $\nu\ \dagger$ &
$e\ \dagger$ & $\chi\ \dagger$ &
$B$ &
$\Lambda$ &
$F$ &
$\zeta$ 
\\
\hline
 & \makecell*{Name}  && JKM & JKM &
 &  &
BCT &BCT &
CCT &
CCT 
\\
\hline
 & \makecell*{Equation} & & \ref{eqn:JKMtrans} & \ref{eqn:JKMtrans} &
\ref{eqn:echidef} & \ref{eqn:echidef} &
\ref{eq:bct} &
\ref{eq:bct} &
\ref{cct} &
\ref{cct} 
\\
\hhline{| = | = | = # = | = | = | = # = | = | = | =|}
 & \makecell*{Quantity} & Eqn. &  & &  & &  &  &  &  
\\ \hline
\makecell*{$L'\ \dagger$ } & bulk Lagrangian &\ref{eqn:dL} & $da$ &  & & &  & &  &
 \\
\hline
$\makecell*{\theta'} \ \dagger$ & symplectic pot.& \ref{eqn:dL} &$\delta a$  &$d\nu$
& &
& &  & & \\
\hline
\makecell*{$b' \ \dagger$} & $L'$ noncovariance&\ref{eqn:blambdadef}
&$a$ & &$de$ & & &  & &  \\
\hline
\makecell*{$\lambda' \ \dagger$} & $\theta'$ noncovariance & \ref{eqn:blambdadef} & 
& $\nu$& $-\delta e$&$d \chi$ & & &    &  \\
\hline
\makecell*{$\ell'$ } & boundary action & \ref{eqn:thdecomp}&$-a$ &  &&  & $B$&
 & & \\
\hline
\makecell*{$c'$} & corner action &\ref{eqn:bldecomp} & &  &$-e$ & & &   &$-F$
&  \\
\hline
\makecell*{$\gamma'$ }& & \ref{eqn:bldecomp} &   &
& &$-\chi$ & & & &$\zeta$   \\
\hline
\makecell*{$\varepsilon$ }& corner flux &\ref{eqn:bldecomp} &  & &
& &  &$\Lambda$ &$-\delta F$ & $-d\zeta$ \\
\hhline{| = | = | = # = | = | = | = # = | = | = | =|}
\makecell*{${\beom}$ }&boundary flux &\ref{eqn:thdecomp} & &  & & &$\delta B$ &$-d\Lambda$ & &  \\
\hline
\makecell*{$S$} & action & \ref{eqn:action}&
 &   & & & $\int_{\cal N} B$& &$-\int_{\partial N} F$
 &  \\
\hline
\makecell*{$\Omega$} & sympl. form & \ref{eqn:Omcorner}&
&& &  && $-\int_{\partial \Sigma} \delta \Lambda$ &    &  \\
\hline
\makecell*{$\beta'$ }& corner term & \ref{eqn:thdecomp} & & $\nu$ & & & &$\Lambda$  & &  \\
\hline
\makecell*{$J_\xi'$} & Noether current & \ref{eqn:noethercurrent}& $d i_\xi a$ & $d I_{\hat \xi} \nu$
&$ - \Delta_{\hat \xi} de$ & & & & &    \\
\hline
\makecell*{$Q_\xi'$} &Noether charge & \ref{eq:Qprimexi}& $i_\xi a$ &  $I_{\hat \xi} \nu$
&$-\Delta_{\hat \xi} e$ &$d I_{\hat \xi} \chi$ & & & &    \\
 & &  &  &
&$ - d i_\xi e$ & & & & &    \\
\hline
\makecell*{$h_\xi$} &localized charge &\ref{eqn:Hxi} & &  &$-\Delta_{\hat \xi} e $  &$d I_{\hat \xi} \chi$ &$i_\xi B$ &$-I_{\hat \xi} \Lambda$ &
&   \\
 & & & &  &$ -
d i_\xi e$  & & & &
&   \\
\hline
\makecell*{ ${\tilde h}_\xi$ }& improved charge & \ref{eqn:hxiimproved}& &  &$-d i_\xi e$ & $d
I_{\hat \xi} \chi$ &$i_\xi B$ &$-I_{\hat \xi} \Lambda$ & $\Delta_{\hat
  \xi} F$& \\
\hline 
\end{tabular}
\caption{
A summary of the various transformations in the covariant phase space
formalism and how they act on the differential forms.
The first row lists the
eight different independent transformations, the second row their
names, and the third the equation numbers where the transformations
are defined.  The name acronyms are Jacobson-Kang-Myers (JKM) transformation,
boundary canonical transformation (BCT), and corner canonical
transformation (CCT).  The remaining rows list the various quantities
that occur in the formalism, their names, defining equations, and how
they transform under the transformations.  The first eight rows list
the fundamental independent quantities, while the remaining quantities are
derived from the first eight. Quantities indicated with a $\dagger$
are required to be continuous from one component of the boundary to
another, while those without this symbol may be discontinous at these
transitions. These discontinuities are associated with the appearance
of corner terms in the integrated action (\ref{ffs}).  The five
transformations in the columns for  $a,\nu,e, \chi$ and $\zeta$ 
do not change the
integrated action $S$, symplectic form $\Omega$, or integrated (improved)
localized charges, and thus are analogous to gauge freedom in the
formalism.
By contrast, the three transformations in 
the columns for 
$B, \Lambda$ and $F$ do alter these quantities, reflecting the fact that the boundary flux
${\cal E}$ and corner flux $\varepsilon$ must be specified (for
example via a complete action principle) in order to
determine unique localized charges.
}
\label{tab:operations}
\end{table}

\normalsize

The resolution comes from noting that we must not only fix the form of the 
flux $\beom$ on the bounding hypersurface $\ns$, but also must fix a preferred 
corner flux on the codimension-2 surface $\partial\Sigma$ on which 
the charge is being evaluated.  In this case, the quantity $\beta'-\lambda'$
serves as a higher codimension symplectic potential, and hence to 
resolve the ambiguities, we decompose  it in a similar manner as $\theta'$ from 
equation (\ref{eqn:thdecomp}): 
\beq \label{eqn:bldecomp}
\beta' - \lambda' = - \delta c' +d\gamma' + \cflx,
\eeq
where $\cflx$ is the corner flux.  
We will obtain unambiguous 
charges by specifying a preferred form of $\cflx$, which could be determined 
by a Dirichlet variational principle for a subregion of spacetime that 
includes corners, as discussed, for example, in \cite{Lehner_2016}.
In this case, $c'$ is related to the corner terms one adds to the action,
although the full action must include terms coming from both hypersurfaces 
intersecting at the corner.\footnote{In slightly more detail, we consider a region $\sr$
bounded by two hypersurfaces $\ns^+$ and $\ns^-$ intersecting at a codimension-2 corner $\cn$,
oriented such that $\partial \sr \supset \ns^+ - \ns^-$,  $\ns^+\supset \cn$, and $\ns^-\supset \cn$,
where the signs indicate the relative orientations.  The action including contributions from only these 
boundaries is given by
\beq
S = \int_\sr L' + \int_{\ns^+} \ell_+' - \int_{\ns^-}\ell_-' +\int_\cn(c_+'-c_-').
\eeq
Here, $\ell_\pm'$ and $c_\pm'$ arise from independent decompositions on $\ns^\pm$, and these 
quantities, along with $\beta'_\pm$, 
need not be continuous when moving from $\ns^-$ to $\ns^+$ through $\cn$.
On the other hand, $b'$ and $\lambda'$ should be continuous across $\cn$, since they arise from
$L'$ and $\theta'$ which are continuous throughout $\sr$. Invariance under the standard
JKM transformations follows as before, and we can also check invariance under the 
$e$ and $f$ ambiguities described in equations 
(\ref{eqn:b'ambig}), (\ref{eqn:lambda'ambig}), 
(\ref{eqn:l'shift}), and (\ref{eqn:beta'shift}).  For these ambiguities,
the quantity $e$ is required to be continuous through $\cn$, but $f$ can take on separate
values $f_\pm$ at $\cn$.  This then implies that the action is invariant under these 
transformations,
\beq
S\ra S +\int_{\ns^+} df_+ - \int_{\ns^-} df_- + \int_\cn(-e-f_+ +e+f_-) = S.
\eeq
The corner improvement in the present section only considers contributions from
the single hypersurface $\ns^+$ ending on $\cn$, but it would be interesting to extend
this analysis to account for the contributions from $\ns^-$.  
\label{ftn:corneraction} } 
The quantity $\gamma'$ can be viewed as a codimension-2 
symplectic potential, and in principle we could further consider decomposing it in a similar
manner to $\theta'$ and $\beta'-\lambda'$.  Doing so would yield quantities associated contributions
to the action and flux
associated with codimension-3 defects in the shape of the subregion.  Such features would
arise at caustics of a null hypersurface, and also when considering singular diffeomorphisms 
such as superrotations that produce defects on a codimension-2 surface
\cite{Barnich:2009se, Barnich:2011ct, Strominger:2016wns, 
Compere:2016jwb, Adjei:2019tuj, Donnay:2020guq}.  We note, however, in
the absence of such codimension-3 features, the quantity $\gamma'$ drops out of any
expression for the charges, and hence we will not consider it further in this work, although
a careful analysis of this type of term would be an interesting future direction.

The quantity $\Delta_\h\xi(\beta'-\lambda')$ appears in the
identity (\ref{eqn:IxiOm}), and the decomposition (\ref{eqn:bldecomp})
motivates including the $c'$ term in
the localized charge as opposed to the flux.  The improved charge density
is then defined to be 
\begin{align}
\tilde h_\xi &= h_\xi - \Delta_\h\xi c' \label{eqn:hxiimproved} \\ 
&= \mcov{Q}_\xi  + i_\xi(\ell'+b'+dc') -I_\h\xi \cflx + d(i_\xi c'-I_\h\xi\gamma')
\label{eqn:htilde2}
\end{align}
where the expression in the second line follows from applying the definitions (\ref{eqn:Hxi}), (\ref{eq:Qprimexi}),
and (\ref{eqn:bldecomp}).
The improved flux that generalizes equation (\ref{eqn:Axi}) is given by
\beq \label{eqn:tildeflx}
\tilde \flx_\h\xi = \int_{\partial\Sigma}\Big( i_\xi \beom
-\Delta_\h\xi \cflx + \tilde h_{\delta\xi}\Big).
\eeq
Defining the improved  localized charge $\tilde H_\xi$ as the integral
over $\partial\Sigma$ of $\tilde h_\xi$, we find that improved charges 
and fluxes still satisfy the modified Hamilton's equation,
\beq
\delta \tilde H_\xi = -I_\h\xi\Omega + \tilde \flx_\h\xi.
\eeq

Once a preferred form for the corner flux $\cflx$ is chosen, 
the shifts in $\beta'$ and $\lambda'$ described in (\ref{eqn:beta'shift})
and (\ref{eqn:lambda'ambig}) requires that $c'$ transform according to
\beq
c'\ra c' -f-e.
\eeq
It follows immediately that the improved charge density (\ref{eqn:hxiimproved})
shifts only by exact terms under the transformation, and hence the integrated
improved charge $\tilde H_\xi$ is invariant.\footnote{A slightly different proposal
for a corner-improved charge was recently considered in \cite{Odak:2021axr},
which amounts to defining the improved charge density to be $\bar h_\xi = h_\xi+\lie_\xi c'
= \tilde h_\xi +I_\h\xi \delta c'$.  We note that this alternative proposal does not
have the same invariance properties under the ambiguities as does $\tilde h_\xi$,
which serves as an argument in favor of (\ref{eqn:hxiimproved}).}

The corner flux $\varepsilon$ in Eq.\ (\ref{eqn:bldecomp}) can be
shifted by exact terms and total variations, leaving the left hand side $\beta'-\lambda'$ fixed.     
The transformations that achieve this are analogous to the boundary
canonical transformations (\ref{eq:bct}), but arise in the codimension-2
context rather than in codimension 1.  We call these transformations
{\it corner canonical transformations}, given that
they change the form of $\cflx$.
One type of corner canonical transformation is
an adjustment of the decomposition (\ref{eqn:bldecomp}) by 
$\gamma' \to \gamma' + \zeta$, $\varepsilon \to \varepsilon - d \zeta$,
leaving all other quantities fixed.  A second type is 
a transformation (\ref{fshift}) with parameter $f = F$,
followed by a boundary canonical transformation
(\ref{eq:bct}) with parameters $B = -dF$ and $\Lambda = -\delta F$.
Under the combined transformations we have
  \be
  \label{cct}
  c' \to c' - F, \ \ \ \ \ \gamma' \to \gamma' + \zeta,
  \ \ \ \ \ \varepsilon \to \varepsilon - \delta F - d \zeta,
  \ee
while $\ell'$ and $\beta'$ are invariant and ${\tilde h}_\xi \to
{\tilde h}_\xi + \Delta_{\hat \xi} F$.
This combination of transformations is designed to leave $\theta'$ invariant.
We will make use of these corner canonical transformations in our discussion of
holographic renormalization in Sec.\ \ref{sec:hr} below.

We emphasize that in our formalism the charges are uniquely determined by a choice of subregion action principle.  The various
canonical transformations considered here coincide with a 
change in action principle and a corresponding change in the charges.
A small subtlety related to this point occurs 
in regard to the effect of the boundary canonical
transformation $\beom\ra\beom-d\Lambda$.  For consistency,
this transformation must shift the corner flux
according to $\varepsilon\ra\varepsilon + \Lambda$ (see Table \ref{tab:operations}),
and these combined transformations have the 
effect of leaving the variation $\delta S$ of the 
subregion action invariant.  One might be tempted to 
conclude that the corner-improved charges should 
then be invariant under this transformation since 
the subregion action is meant to uniquely determine the 
charges; however, according to 
Table \ref{tab:operations}, this transformation in
fact shifts the charge density $\tilde h_\xi$ nontrivially.
This suggests that one must not only specify the form of 
the subregion action, but also the full form of the 
fluxes $\beom$ and $\varepsilon$, in order to 
obtain unique expressions for the charges.  
In actuality, both $\beom$ and $\varepsilon$ can be 
uniquely extracted from the action provided one 
specifies on which quantities the on-shell action
functionally depends.  In particular, for a Dirichlet
action principle, the on-shell action is a functional
of the boundary induced metric $h_{ij}$ and the corner
induced metric $q_{AB}$.  This uniquely determines the 
momenta $\pi^{ij} = \frac{\delta S}{\delta h_{ij}}$ and 
$\pi^{AB} = \frac{\delta S}{\delta q_{AB}}$, and hence 
the fluxes by the relation $\beom  = \pi^{ij}\delta h_{ij}$
and $\varepsilon = \pi^{AB}\delta q_{AB}$.  Hence,
even though the charges depend on the precise forms 
of the fluxes, we see that these are ultimately
determined in terms of the subregion action principle.

A summary of the various transformations we have defined in this
section is given in
Table \ref{tab:operations}.

\section{Brackets of localized charges}
\label{sec:algebra}

With the definition (\ref{eqn:Hxi}) of the localized charges in hand, we would next like
to compute the algebra they satisfy.
Since these charges arise from the action of diffeomorphisms on a subregion of a 
spacetime manifold, we should expect the charge algebra to be closely related to the 
algebra satisfied by the corresponding vector fields $\xi^a$
on spacetime under the Lie bracket.  
Diffeomorphisms of spacetime induce an action on the solution
space $\sls$ which serves as a phase space of our theory, leading to a related Lie
bracket of the vector fields $\h\xi$ associated with the spacetime vector fields $\xi^a$.
As mentioned in equations (\ref{eqn:fsbrack}) and (\ref{eqn:brmod}), the field
space bracket is simply minus the spacetime Lie bracket for field-independent 
generators, and for field-dependent generators, it is given by minus the modified 
Lie bracket $\brmod{\cdot}{\cdot}$.  

Normally when dealing with Hamiltonian charges
for a symplectic manifold, the Poisson bracket of the charges can be obtained 
by contracting the vector fields generating the symmetry into the symplectic form.  
However, the localized charges do not satisfy Hamilton's equation due to the term 
involving $\flx_\h\xi$
 in (\ref{eqn:Hammod}).  This means that the charges $H_{\xi}$ do not generate
the same flow as the vector $\h\xi$ on the phase space.  
Nevertheless, 
the charges $H_{\xi}$ are functions on phase space, and hence possess a well-defined 
Poisson bracket.   
We will find in this section that this Poisson bracket on the 
subregion phase space reproduces the bracket introduced by Barnich and Troessaert 
in \cite{Barnich:2011mi}, providing a novel derivation of this bracket and justifying 
its use in defining the algebra of localized charges.
Due to the difference between the flows generated by $H_\xi$ and $\h\xi$, we will find
that the charge algebra differs from the algebra of vector fields under the 
modified bracket by the extension terms $K_{\xi,\zeta}$ appearing 
in equations (\ref{eqn:HHbrack}) and (\ref{eqn:Kxizeta}).

To compute the Poisson bracket, it is first helpful to introduce an abstract index notation for 
tensors on field space.  We let $\n A, \n B,\ldots$ denote tensor indices on 
$\slsu$, so that, for example, the symplectic form on phase space can be written
$\Omega_{\n A \n B}$.  We then define $\Omega^{\n A\n B}$ to be an inverse of 
$\Omega_{\n B\n C}$.  The meaning of this statement is somewhat subtle because 
$\Omega_{\n B\n C}$ is degenerate on $\slsu$, and so inverting it requires some
gauge-fixing procedure to define the subspace of $\slsu$ on which we are constructing
the inverse.  This gauge fixing will yield a tensor $\Omega^{\n A \n B}$ satisfying
\be
  \Omega_{\n A \n B} \Omega^{\n B \n C} \Omega_{\n C \n D} = \Omega_{\n A \n D}
  \label{standard}
\ee
Note that the true subregion phase space $\psu$ is obtained from $\slsu$ by 
quotienting out the degenerate directions, and on this quotient space we expect 
$\Omega^{\n A \n B}$ to descend to a well-defined inverse that is independent of the 
gauge-fixing procedure.  We also assume that the vector fields $\h \xi^{\n A}$ 
have been constructed to be tangent to the gauge-fixed submanifold, so that 
$\Omega^{\n A \n B} \Omega_{\n B \n C} \h\xi^{\n C} = \h\xi^{\n A}$.  Often, this requirement introduces field
dependence into the vector $\xi^a$, which is one of the reasons
for considering field-dependent symmetry generators.
 
The Poisson bracket of the localized charges is defined to 
be\footnote{The sign in the definition of the Poisson bracket here is the more common choice, which is 
opposite to the sign employed in \cite{Chandrasekaran:2020wwn}. } 
\beq
\{ H_\xi , H_\zeta\} = \Omega^{\n A\n B}(\delta H_\xi)_{\n A} (\delta H_{\zeta})_{\n B}.
\eeq
Then since the variation of the localized charges satisfies (\ref{eqn:Hammod}), we find
for the Poisson bracket
\begin{align}
\{ H_\xi , H_\zeta \}
&=
\Omega^{\n A\n B}\left(\Omega_{\n A \n C} \h\xi^{\n C} + (\flx_\h\xi)_{\n A} \right)
\left( \Omega_{\n B\n D} \h\zeta^{\n D} + (\flx_\h\zeta)_{\n B}\right)
\\
&= 
-\h \xi^{\n B} \h\zeta^{\n D} \Omega_{\n B \n D}- \h\xi^{\n B}(\flx_\h\zeta)_{\n B}
+\h\zeta^{\n A} (\flx_\h\xi)_{\n A}+ \Omega^{\n A\n B} (\flx_\h\xi)_{\n A}
(\flx_\h\zeta)_{\n B}
\\
&=
\{ H_\xi, H_\zeta\}_\text{BT} +  \Omega^{\n A\n B} (\flx_\h\xi)_{\n A}
(\flx_\h\xi)_{\n B} \label{eqn:PB}
\end{align}
where in the last line we have introduced the Barnich-Troessaert bracket 
\beq \label{eqn:BTbrack}
\{ H_\xi , H_\zeta\}_{\text{BT}} = 
I_{\h\xi}I_{\h\zeta} \Omega - I_{\h\xi} \flx_\h\zeta + I_{\h\zeta} \flx_\h\xi
=-I_{\h\xi}\delta H_\zeta +I_{\h\zeta}\flx_\h\xi.
\eeq

This bracket was proposed in \cite{Barnich:2011mi} as a means for
defining an algebra for localized charges that only satisfy the modified version
of Hamilton's equation, (\ref{eqn:Hammod}), and it was left open the problem of 
interpreting it in terms of a Poisson bracket on a phase space.
From equation
(\ref{eqn:PB}), we see that the BT bracket in fact coincides with the ordinary 
Poisson bracket of the charges $H_\xi$, $H_\zeta$, provided we can argue that the 
final term quadratic in the fluxes $\flx_\h\xi$, $\flx_\h\zeta$ vanishes.  
To see how this occurs, we first assume that the corner term $\beta'$ that appears 
in (\ref{eqn:thdecomp}) is covariant, $\Delta_{\h\xi}\beta' = 0$, that the generators 
are field-independent $\delta \xi^a = 0$, and that the flux has been put into Dirichlet
form, as  in (\ref{eqn:timeD}) for a timelike surface or (\ref{eqn:nullD}) for a null surface.
In this case, the symplectic flux $\flx_\h\xi$ for a timelike surface simply reduces to
\beq \label{eqn:AxiD}
\flx_\h\xi = \int_{\partial\Sigma} i_\xi \pi^{ij} \delta h_{ij} 
\eeq
In this form, we see that the final term in (\ref{eqn:PB}) involves the contraction
of $\Omega^{\n A \n B}$ into two metric variations $\delta h_{ij}(x) \delta h_{kl}(x')$ 
at each pair of points $x, x'$ on the spatial codimension-2 surface $\partial \Sigma$.
Hence, it is determined entirely in terms of the Poisson bracket of the intrinsic metric
on the surface $\{h_{ij}(x), h_{kl}(x')\}$:
\beq
\int dx \int dx' i_\xi\pi^{ij}(x) i_\zeta\pi^{kl}(x') \{ h_{ij}(x), h_{kl}(x')\}.
\eeq
However, this bracket should vanish
on general grounds, since it involves values of the induced metric (without time derivatives)
at causally separated points on $\partial\Sigma$.
Additionally, at coincident points $x=x'$, no delta functions
should appear in the Poisson bracket due to the absence of time
derivatives.  
Because of this, we conclude that the 
final term (\ref{eqn:PB}) vanishes, and hence the Poisson bracket of the localized charges
agrees with the BT bracket, at least in the case that the symplectic flux 
has been reduced to the form
 (\ref{eqn:AxiD}).\footnote{Interestingly, reference \cite{Freidel:2020svx} found that commutativity 
 of the intrinsic metric on a codimension-2 slice of the boundary is violated in the presence
of a nonzero Immirzi parameter when utilizing the first order formulation of general
relativity.  This suggests that the naive BT bracket would be modified in this case, and it 
would be interesting to investigate these corrections in more detail.  }
 The story is entirely analogous for a null surface, and similarly relies on the vanishing
of the brackets between intrinsic quantities at the cut $\partial\Sigma$, $\{q_{ij}, q_{kl}\}
= \{n^i, q_{jk}\} = \{n^i, n^j\} = 0$.
For example, for the components $\delta q_{AB}$ of the induced metric
perturbation on the future horizon of a Schwarzschild black hole,
Ref.\ \cite{Hawking:2016sgy} derives the commutators
\be
\label{schw}
\{  q_{AB}(\p\theta,v),  q_{CD}(\p\theta',v') \} \propto (
q_{AC} q_{BD} + q_{AD} q_{BC} - q_{AB} q_{CD})
\delta^2(\p\theta,\p\theta') \left[ \Theta(v-v') - \frac{1}{2} \right].
\ee
where $(\p\theta,v) = (\theta,\phi,v)$ are coordinates on the
horizon.  This commutator vanishes\footnote{That $\Theta(v-v')-1/2$ should be interpreted as $0$ for $v=v'$ can be seen by integrating Eq.\ (\ref{schw}) against $w^{AB}(\p\theta) w^{CD}(\p\theta')$ for some $w^{AB}$ which yields at $v=v'$ the commutator of an operator with itself.}
at $v = v'$.

We can now ask whether any of the conditions leading to this conclusion can be relaxed.  
We can allow for the additional terms 
$\Delta_{\h\xi}(\beta'-\lambda')$ and $h_{\delta\xi}$ that appear in (\ref{eqn:Axi}),
provided that these also can be put into Dirichlet form.  For the corner term $\beta'-\lambda'$,
 this can be done using the corner improvement procedure described in section
 \ref{sec:corners} by selecting a Dirichlet form for the corner flux 
 $\cflx$ appearing in (\ref{eqn:bldecomp}).  For the $h_{\delta\xi}$ term, 
 this likely leads to some restrictions on the allowed field dependence
 in $\xi^a$.\footnote{Once the field-dependence
 of $\xi^a$ has been fixed, one could decompose $h_{\delta \xi} = -\delta a
 + d\tau + \epsilon$, and include the $a$ contribution in the charge and 
 $\epsilon$ in the flux.  Such a decomposition will lead to additional 
 modifications of the brackets of the charges.  
 This kind of decomposition has been investigated recently
 in \cite{Adami:2020ugu,Ruzziconi:2020wrb, Adami:2021sko,Geiller:2021vpg,Grumiller:2021cwg,Adami:2021nnf}.} 
 Finally, we can even allow for choices of the flux term
 $\beom$ that are not Dirichlet form, for example, we could instead require 
 Neumann form $\beom = -h_{ij}\delta\pi^{ij}$.  As long as the flux $\beom$ 
 is of the form where the equation $\beom = 0$ defines a valid boundary condition
 for the variational principle,\footnote{This can equivalently 
 be phrased as finding a Lagrangian submanifold
 for the boundary phase space involving the symplectic pairs
 $(\pi^{ij},h_{ij})$. } and the condition of vanishing symplectic
 flux $\flx_\h\xi = 0$ imposes no further restrictions, one will still be able to 
 argue that the flux terms in (\ref{eqn:PB}) vanish.  

The interpretation given here of the Barnich-Troessaert bracket as 
an ordinary Poisson bracket on the subregion phase space is a novel
proposal of the present work, and can be compared to previous 
arguments for arriving at this definition of the bracked for localized
charges $H_\xi$.  In \cite{Chandrasekaran:2020wwn}, two of us suggested a
heuristic derivation of the bracket, in which the bracket represented the Poisson
bracket on a larger phase space consisting of the subregion and a complementary phase
space that collects the flux, yielding a closed global phase space.  This interpretation
is similar to the one presented in the present work, but differs in that our 
present proposal shows that no auxiliary system is needed to interpret the 
bracket as a Poisson bracket.  It is likely the two proposals are consistent with
each other, after employing a gluing construction as discussed in
section \ref{sec:gluing}.  Another proposal by Troessaert
\cite{Troessaert:2015nia}
suggested an interpretation in which the boundary fields on which Dirichlet 
conditions would be imposed in the variational principle are interpreted as 
classical sources for the subregion phase space, motivated by similar interpretations
appearing in AdS/CFT.  This interpretation appears to be close in spirit to the 
proposal in the present paper; however, Troessaert's construction involves an explicit
decomposition of the bracket into an ordinary bulk Poisson bracket and corrections 
involving variations of the boundary sources.  This makes comparison to the present 
interpretation difficult, although it would be interesting to further investigate
whether the two proposals are consistent with each other.  Finally, we mention
some recent work by Wieland \cite{Wieland:2021eth}
in which the bracket arose as a Dirac bracket after 
constraining the phase space to remove all radiative modes from the theory.  
By contrast, the bracket in the present work imposes no such constraint, 
and hence disagrees with Wieland's proposal.  However, it may be that the two
proposals agree for a specific choice of transformations and charges that are 
``purely Coulombic,'' as might be expected for charges associated with diffeomorphisms
acting near the boundary.  

Having argued that the BT bracket coincides with the Poisson bracket of the 
localized charges, we can use it to compute the canonical algebra satisfied by 
these functions on the local phase space.  A straightforward calculation (see 
appendix \ref{sec:phase-space-calc}) using 
the bracket definition (\ref{eqn:BTbrack}) and the expressions (\ref{eqn:Hxi})
and (\ref{eqn:Axi}) for the charge density and symplectic flux
 yields the following charge representation theorem:
\begin{align}
\{ H_\xi, H_\zeta\} &= \left(H_{\brmod{\xi}{\zeta} } + K_{\xi,\zeta}\right)
\label{eqn:HHbrack}\\
K_{\xi,\zeta}  &= \int_{\partial\Sigma}\left(i_\xi \Delta_{\h\zeta}(\ell'+b') 
- i_\zeta\Delta_{\h\xi} (\ell'+b')\right). \label{eqn:Kxizeta}
\end{align}
Hence, we see that the bracket reproduces the algebra of the vector fields
$\xi^a$ under the modified bracket (\ref{eqn:brmod}), up to an extension
parameterized by a new set of generators $K_{\xi,\zeta}$.  
The combination $\ell'+b'$ that appears in the formula for the extension
is a JKM-invariant quantity, and reduces to the expression for the extension
in Ref.\ \cite{Chandrasekaran:2020wwn} when utilizing a covariant 
Lagrangian $L'$.
Note that equation (\ref{eqn:mxizeta}) indicates that the extension would involve
an additional contribution $i_\xi i_\zeta(L'+d\ell')$, except for the fact that 
we have assumed that $\xi^a$ and $\zeta^a$ are both parallel to the 
same hypersurface, causing this term to pull back to zero.  In fact, this 
term was first derived in Ref.\ \cite{Speranza2018a} 
assuming boundary conditions
to make such transformations integrable, and was also recently explored 
in \cite{Freidel:2021cjp}.  Assuming charges associated with
the two independent normal directions to $\partial\Sigma$ can be consistently
defined \cite{Speranza2018a, Freidel:2021cjp, Ciambelli:2021vnn}, 
this suggests the full formula 
for the 
extension is given by the sum of (\ref{eqn:Kxizeta}) and the integral 
of $i_\xi i_\zeta(L'+d\ell')$.

The generators $K_{\xi,\zeta}$ are  local functionals of the vector
fields $\xi^a, \zeta^a$ and the geometric quantities defined on the boundary,
and we can therefore compute their brackets with the original localized
charges using a similar set of arguments as above:
\begin{align}
\{H_\xi, K_{\zeta,\psi} \} = \Omega^{\n A \n B} \big(\delta H_\xi\big)_{\n A}
\big(\delta K_{\zeta,\psi}\big)_{\n B}
=-\h\xi^{\n B} \big(\delta K_{\zeta,\psi}\big)_{\n B} + \Omega^{\n A \n B} 
\big(\flx_{\xi}\big)_{\n A} \big(\delta K_{\zeta,\psi} \big)_{\n B}.
\label{eqn:HK}
\end{align}
To simplify this further, we postulate that $K_{\xi,\zeta}$ 
is a functional of only intrinsic variables on the surface, $K_{\xi, \zeta} =
K_{\xi,\zeta}[h_{ij}]$ (including any field-dependence in the vectors 
$\xi^a$, $\zeta^a$).  It can be checked that this condition is satisfied in 
general relativity with a finite null boundary
\cite{Chandrasekaran:2020wwn}, and we expect it to hold more 
generally for theories that admit a Dirichlet variational principle.  
Its variation can therefore be written 
\beq \label{eq:deltaK}
\delta K_{\xi,\zeta} = \int_{\partial\Sigma}( k^{ij}_{\xi,\zeta} \delta h_{ij} +d \sigma_\xi)
\eeq
and the exact piece $d\sigma_\xi$ integrates to zero on $\partial\Sigma$.  
On a null surface, we similarly require that $\delta K_{\xi,\zeta}$ 
involve only the variations $\delta q_{ij}$ and $\delta n^i$. 
As before, using 
the assumption that $\flx_\h\xi$ is in Dirichlet form and  that 
the intrinsic variables commute on $\partial\Sigma$, we see that the second
term in (\ref{eqn:HK}) drops out, and we derive the relation
\beq \label{eqn:HKbrack}
\{H_\xi, K_{\zeta,\psi}\} = -I_{\h\xi}\delta K_{\zeta,\psi}.
\eeq
Finally,  the assumption that $K_{\xi,\zeta}$ is a functional
of only intrinsic quantities leads by the same arguments to the result that 
the $K_{\xi,\zeta}$ generators commute among themselves,
\beq\label{eqn:KKbrack}
\{K_{\xi,\zeta}, K_{\psi, \chi}\} = 0.
\eeq
As before, if we are instead working with a flux $\beom$ that is not of Dirichlet form,
the commutation relations (\ref{eqn:HKbrack}) and (\ref{eqn:KKbrack}) will 
remain valid as long as $K_{\xi,\zeta}$ is a functional only of quantities that would 
be fixed by the variational principle associated with the chosen form of the flux. 

The  relations (\ref{eqn:HHbrack}), (\ref{eqn:HKbrack}), and (\ref{eqn:KKbrack}) fully define the algebra satisfied by the canonical charges $H_\xi$ and 
the extension charges $K_{\xi,\zeta}$.  In the case that $I_\h\xi\delta
K_{\zeta,\psi}$ can be expressed only in terms of the generators 
$K_{\chi, \rho}$, the charges $(H_\xi, K_{\zeta,\psi})$ yield
a representation of an abelian extension of the algebra 
satisfied by the vector fields under the bracket $\brmod{\xi}{\zeta}$.
This condition was confirmed, for example, for a class of vector 
fields satisfying a Witt algebra acting on Killing horizons
in \cite{Chandrasekaran:2020wwn}, where it further was demonstrated 
that only a single independent generator $K_{\xi,\zeta}$ arises, 
yielding a central extension, the Virasoro algebra.  In the most
general case, however, one would expect $I_\h\xi \delta K_{\zeta,\psi}$ 
to be expressed as a sum of $H_\xi$ and $K_{\chi,\rho}$, producing a
more complicated algebra, presumably related to $\text{Diff}(\ns)$ 
or $\text{Diff}(\sr)$, in which
$K_{\xi,\zeta}$ generate an abelian subalgebra.  
It would be interesting to explore these 
more complicated algebras in future work.

The requirement that $K_{\xi,\zeta}$ be a functional of intrinsic data on the boundary
is a nontrivial consistency requirement in order to conclude  
the algebraic relation (\ref{eqn:HKbrack}). To further motivate
it, we remark that 
this requirement can be related to a generalized notion of symmetry for the subregion.\footnote{We
thank Don Marolf for discussions on this point.}
Normally, symmetries are defined as transformations that leave the subregion
action  (\ref{eqn:action}) invariant.  However, we can also consider transformations
that change the action only by a boundary term that is constructed 
entirely from intrinsic data,
\beq
I_{\h\xi}\delta S = \int_{\ns} \anom_\xi.
\eeq
In the Dirichlet variational principle where the intrinsic data is fixed 
by a boundary condition, requiring $\anom_\xi$ to depend only on intrinsic
quantities then says that the action is invariant up to a constant.  In such a 
situation, one can still derive a Noether charge that generates the symmetry, and it 
is conserved up to quantities constructed from
the intrinsic geometry, which commute with all observables. The quantity
$I_{\h\xi}\delta S$
can be reexpressed using the anomaly operator as 
\beq
I_\h\xi\delta S = \int_{\ns}\Delta_{\h\xi}(\ell'+b') + i_\xi(L' + d\ell'),
\eeq
where we find both of the contributions that arise in the formula 
for the extension, as discussed below 
(\ref{eqn:Kxizeta}).\footnote{As a side consequence, this demonstrates that 
extensions appear in the bracket of canonical charges only for transformations
that do not leave the subregion action invariant.}  Restricting
to $\xi^a$ that is tangent to $\ns$, we see that $\anom_\xi$ and 
$\Delta_\h\xi(\ell'+b')$ coincide.  
Hence, $\Delta_\h\xi (\ell'+b')$ must be expressible 
purely in terms of intrinsic quantities to be consistent with this generalized 
symmetry principle.

The charge algebra (\ref{eqn:HHbrack}), (\ref{eqn:Kxizeta}) was derived for localized
charges $H_\xi$ constructed without employing the corner improvement described
in section \ref{sec:corners}.  When working the with improved charges $\tilde H_\xi$ 
defined as integrals of the improved charge density (\ref{eqn:hxiimproved}), the 
Poisson bracket of the charges is again given by the BT bracket (\ref{eqn:BTbrack}) after
replacing the flux terms involving $\flx_\h\xi$ to  the modified fluxes 
$\tilde \flx_\h\xi$, defined 
in (\ref{eqn:tildeflx}).  As before, the charge algebra reproduces the 
modified bracket algebra of the vector fields up to an extension (derived 
in Appendix \ref{sec:phase-space-calc}),
\begin{align}
\{\tilde H_\xi, \tilde H_\zeta\} &= 
\tilde H_{\brmod{\xi}{\zeta}} + \tilde K_{\xi,\zeta} \label{eqn:tilHtilH}\\
\tilde K_{\xi,\zeta} &= \int_{\partial\Sigma}\left(
i_\xi\Delta_\h\zeta(\ell'+b'+dc') - i_\zeta\Delta_\h\xi(\ell'+b'+dc')\right). \label{eqn:tilK}
\end{align}

An important property of the BT bracket is that it reduces to a Dirac bracket 
for the generators $H_\xi$ when boundary conditions are imposed to 
make them integrable, meaning they satisfy Hamilton's equation (\ref{eqn:Ham}) with 
no fluxes.  
For a flux in Dirichlet form, this boundary condition is just that the intrinsic
quantities on the surface are fixed.  In this case, since we also require that 
$\Delta_\h\xi (\ell'+b')$ is constructed purely from intrinsic quantities, the boundary
condition also imposes that $\Delta_{\h\xi}(\ell'+b')$ is constant on the 
constrained phase space, and hence $\delta K_{\xi,\zeta} = 0$.  In this case,
the vector fields generating the symmetry must be chosen to preserve 
the boundary condition, and we find that 
the generators $K_{\xi,\zeta}$ yield a central extension of the vector field
algebra, as required by general arguments 
\cite{Arnold1978, kirillov2004lectures, CFP}
on the properties of group actions on a symplectic manifold.

The more general setup considered in this work does not impose such a boundary
condition, and this allows for abelian extensions or more general 
forms of the algebra.  
The new generators $K_{\xi,\zeta}$ appearing in the extension are functionals
of the intrinsic geometry evaluated on a slice of the boundary $\ns$.  
It is helpful to view the collection of all such intrinsic functionals as 
forming an abelian algebra of boundary observables localized on the cut $\partial\Sigma$.
The charges $H_\xi$ then act on any such functional $f[h_{ij}]$, generating
its evolution along the vector $\xi^a$ just as 
in equation (\ref{eqn:HKbrack}), 
\beq
\{ H_{\xi}, f[h_{ij}] \} = -I_{\h\xi}\delta f[h_{ij}].
\eeq
Hence, even when $\ell$ is covariant so that the extension terms $K_{\xi,\zeta}$ in
(\ref{eqn:HHbrack}) vanish, we can still construct an extended algebra 
by allowing the generators $H_{\xi}$ to act on the intrinsic functionals
$f[h_{ij}]$.  These intrinsic functionals are reminiscent of the edge modes
arising in \cite{Takayanagi:2019tvn} when accounting for the Hayward term
in the gravitational subregion action.  It would be interesting to 
further explore this connection between edge mode degrees of freedom
and the action of the localized charges $H_\xi$ on 
functionals of intrinsic data.

\section{Vacuum general relativity at null infinity}
\label{sec:compendium}

In the formalism developed in the past two sections, we have
implicitly assumed that the boundaries ${\cal N}$ are at finite
locations in spacetime, and that all quantities are finite on those
boundaries.  In Sec.\ \ref{sec:hr} below, we will extend the formalism
to handle the case of asymptotic boundaries, treated in a conformal completion framework to
bring them to finite locations.  In this context, the Lagrangian and symplectic
potential can diverge at the boundaries and must be suitably
renormalized using the techniques of holographic renormalization
\cite{Balasubramanian:1999re,
DeHaro2001, Papadimitriou:2005ii, MM-term, Compere:2018ylh}.
In this section, we take a detour to provide 
a motivating example for our treatment of holographic
renormalizataion of Sec.\ \ref{sec:hr}: an analysis of various
asymptotic symmetry groups for vacuum
general relativity in asymptotically flat spacetimes.

As discussed in the introduction, in recent years a number of
different field configuration spaces have been suggested that modify
the boundary conditions imposed at $\scp$, and that give rise to
extensions of the Bondi-Metzner-Sachs (BMS) group of asymptotic
symmetries.  The BMS group arises when one defines a field
configuration space by fixing some of the diffeomorphism freedom on
the boundary.  However, some of the relevant linearized diffeomorphisms are
not degeneracy directions of the symplectic form, and thus do not
correspond to gauge degrees of freedom.  Hence it is a nontrivial restriction on the theory
to impose these conditions. Lifting some of the boundary conditions
leads to an enlarged symmetry group called the 
generalized BMS group
\cite{Compere:2018ylh,CL,Campiglia:2014yka,Flanagan:2019vbl}.

In this section, we first review
the field configuration space definitions which lead to the BMS group and
generalized BMS groups, using the language of \cite{CFP}. We then
further relax the
boundary conditions at null infinity so as to uncover an even bigger
symmetry group. This procedure allows us to uncover new boundary
degrees of freedom, or edge modes \cite{Donnelly2016a,Speranza2018a,Geiller:2017xad,Geiller:2019bti}.
The enlarged symmetry group
coincides with the symmetry group
on finite null surfaces derived in Ref.\ \cite{CFP}.  Following
Ref.\ \cite{freidel2021weyl} we will call this group the {\it Weyl BMS} group.
Some of the details of the analysis are relegated to Appendix \ref{sec:derivegroups}.

\subsection{Field configuration spaces}
\label{sec:fcs}

We describe asymptotically flat spacetimes using the conformal
completion framework, reviewed in Appendix \ref{sec:conf}.
Some of the relevant fields on spacetime are the physical metric
$\gphys_{ab}$, the conformal factor $\conf$, the unphysical metric
$\gunphys_{ab} = \conf^2 \gphys_{ab}$ and normal $n_a = \nabla_a
\conf$, while some of the fields on $\scp$ are the null generator $n^i$,
inaffinity $\nonaffinity$, and volume forms $\volume_{ijk}$ and $\volumesmall_{ij}$.
We restrict attention throughout to 3+1 dimensions.  Higher dimensional spacetime do posess
supertranslation symmetries \cite{Kapec:2015vwa} and associated charges\footnote{Ref.\ \cite{Hollands:2016oma}
argued for imposing boundary conditions that removes these supertranslation symmetries,
because of the divergence of the associated symplectic flux.  However, as  argued in 
in section \ref{sec:hr} of this paper,
such divergences can generically be addressed using holographic renormalization
and so should not be used as a criterion for determining which boundary conditions to impose.}
\cite{Aggarwal:2018ilg}, and it would be interesting to investigate
analogous extensions of 
those symmetry groups.

We now review a number of field configuration spaces
corresponding to different boundary conditions at $\scp$.
To start, we fix $\stm$ and $\scp$ and consider the set of all
asymptotically flat spacetimes $(\stm, \gunphys_{ab}, \Phi)$.
Since everything should be invariant under the conformal
transformation $(\gunphys_{ab},\conf) \to (e^{-2\sigma} \gunphys_{ab},
e^{-\sigma} \conf)$, it is convenient to fix the conformal freedom by
fixing a choice of conformal
factor $\conf_0$ on a neighborhood ${\cal D}$ of $\scp$,
with $\conf_0 =0$  and $\nabla_a \conf_0 \ne 0$ on $\scp$.
We now define the large configuration phase of all unphysical metrics with that conformal factor:
\be
\Gamma_0 = \left\{ (\stm,\gunphys_{ab},\conf) \right| \left.  \ \ \  \conf =
\conf_0 \  {\rm on} \ {\cal D},  \ \ \ \ {\tilde G}_{ab} = 0\ \right\}.
\label{Gamma0def}
\ee
This is the most general configuration space consistent with the
equations of motion.  All of the spaces we will consider will
correspond to subspaces of $\Gamma_0$ obtained by imposing specific
boundary conditions.

Consider first the BMS field configuration space
\cite{Wald:1999wa,Flanagan:2019vbl}.
We fix a conformal factor $\conf_0$ on a neighborhood
${\cal D}$ of $\scp$, fix a particular unphysical metric
$\gunphys_{0\,ab}$ on $\scp$, and define
\be
\Gamma_{\rm BMS} = \left\{ (\stm, \gunphys_{ab}, \conf) \, \right|
\left. \, \gunphys_{ab \, |\scp} =
\gunphys_{0\,ab \, |\scp},  \ \ \ \conf =
\conf_0\  {\rm on}\  {\cal D}, \ \ \
	 \nabla_{a}\nabla_{b}\conf|_{\scp}=0\, \right\}.
\label{eq:fcs}
  \ee
Here we have used the conformal freedom (\ref{conffreedom}) to fix the
conformal factor, imposed the Bondi condition (\ref{bondicondition})
to set $\nabla_a \nabla_b \Phi$ to zero on $\scp$,
and fixed the unphysical metric on $\scp$.
The original justification for imposing these conditions was that
the entire space $\Gamma_0$ can be obtained by taking the orbit of
$\Gamma_{\rm BMS}$ under diffeomorphisms and conformal transformations \cite{Wald:1999wa}.
However, not all of these diffeomorphisms are gauge in the sense of
corresponding to degeneracy directions of the symplectic form (see Sec.\ \ref{expcalc} below for more details), which 
has led to the recent consideration of enlarged configuration spaces
and weaker boundary conditions \cite{Stro-lectures}.
The enlargement leads to new degrees of freedom on the boundary, 
known as boundary gravitons or edge modes
\cite{Donnelly2016a,Speranza2018a,Geiller:2017xad,Geiller:2019bti}.

We next consider the generalized BMS\footnote{The terminology ``extended BMS group'' was
used in Ref.\ \cite{Flanagan:2019vbl}, but ``generalized BMS group'' seems to be more common.}
configuration space
of Campiglia and Laddha \cite{Compere:2018ylh,CL,Campiglia:2014yka}.
Here we fix a conformal factor $\conf_0$ on a neighborhood
${\cal D}$ of $\scp$, fix a volume form ${\bar \volume}_{0\,ijk}$ and
null generator
${\bar n}^i_0$ on $\scp$ that satisfy the identity (\ref{relations1c})
with $\nonaffinity=0$, and define \cite{Flanagan:2019vbl}
\be 
\Gamma_{\rm GBMS} = \left\{ (\stm,g_{ab},\conf) \right| \left. n^i =
      {\bar n}^i_0, \ \ \ \ \volume_{ijk} = {\bar \volume}_{0\,ijk},  \ \ \ \conf =
  \conf_0\  {\rm on}\  {\cal D}, \ \ \
	 \nabla_{a}\nabla_{b}\conf|_{\scp}=0\right\}.
\label{eq:fcs1}
\ee
Here the normal $n^i$ and volume form $\volume_{ijk}$ are understood
to be computed from $\gunphys_{ab}$ and $\conf$ as described in Appendix
\ref{sec:conf}.  Compared to the BMS configuration space
(\ref{eq:fcs}), we have replaced the unphysical metric evaluated on $\scp$ with the
volume form and the intrinsic normal.

The field configuration space can be further expanded by dropping the
volume form.  This yields the Weyl BMS field configuration space \cite{freidel2021weyl}
  \be
\Gamma_{\rm WBMS} = \left\{ (\stm,g_{ab},\conf) \right| \left. n^i =
      {\bar n}^i_0,  \ \ \ \conf =
  \conf_0\  {\rm on}\  {\cal D}, \ \ \
	 \nabla_{a}\nabla_{b}\conf|_{\scp}=0\right\}.
\label{eq:fcs11}
  \ee  
  Note that the three different configuration spaces we have defined are related by
  \be
  \Gamma_{\rm BMS} \subset \Gamma_{\rm GBMS} \subset \Gamma_{\rm WBMS},
  \ee
when ${\bar n}_0^i$ and ${\bar \eta}_{0\,ijk}$ are those computed from $g_{0\,ab}$ on $\scri$.

\subsection{Symmetry groups}
\label{sec:symgroups}

We now turn to describing the symmetry groups and algebras of the three
field configuration spaces (\ref{eq:fcs}), (\ref{eq:fcs1}), and
(\ref{eq:fcs11}).
The derivations of these groups are given in Appendix \ref{sec:derivegroups},
where we use the universal structure approach of Ashtekar
\cite{2014arXiv1409.1800A} and the techniques of Ref.\ \cite{CFP}.

We start by picking a convenient class of coordinate
systems on $\scp$.
Choose a cross section ${\cal C}$ of $\scp$ and choose coordinates
$\theta^A = (\theta^1, \theta^2)$ on ${\cal C}$.  Extend the
definition of $\theta^A$ to all of $\scp$ by demanding that $\theta^A$ be
constant along integral curves of ${\bar n}_0^i$.
Here for the spaces $\Gamma_{\rm GBMS}$ and $\Gamma_{\rm WBMS}$, ${\bar
  n}_0^i$ is the intrinsic normal that appears explicitly in the
definitions,
while for the BMS case (\ref{eq:fcs}), ${\bar n}_0^i$ is computed from
the metric $\gunphys_{0\,ab}$ on $\scp$ and from $\conf_0$.
Along each integral curve we
define a parameter $u$ by setting $u=0$ on ${\cal C}$ and demanding
that
\be
   {\vec {\bar n}}_0 = \partial / \partial u.
   \label{ucoorddef}
\ee
This construction yields a
coordinate system $y^i = (u,\theta^A)$ on $\scp$.

In this class of coordinate systems, the diffeomorphisms $\varphi :
\scp \to \scp$ have the following form for all 
three groups:
\bes
\label{diffeogen}
\bea
\label{barudef}
    {\hat u} &=& e^{\alpha(\theta^A)}  \left [ u + \gamma(\theta^A)
      \right],\\
    \label{barthetadef}
    {\hat \theta}^A &=& \chi^A(\theta^B),
\eea
\ees
where for a point ${\cal P}$ on $\scp$ we have defined $y^i =
y^i({\cal P})$ and ${\hat y}^i = y^i(\varphi({\cal P}))$.  
Equation (\ref{barthetadef}) defines a mapping $\chi$ from ${\cal C}$
to itself, or equivalently from the space of generators of $\scp$ to
itself.
This set of maps is isomorphic to the set 
${\rm Diff}(S^2)$ of diffeomorphisms of the two-sphere.
Writing the map (\ref{diffeogen}) as $(\alpha,\gamma,\chi)$, the group
composition law is
\be
(\alpha_2, \gamma_2, \chi_2 ) \circ (\alpha_1, \gamma_1, \chi_1 )  =
(\alpha_3, \gamma_3, \chi_3),
\ee
where
\bes
\label{groupcompos}
\bea
\alpha_3 &=& \alpha_1 + \alpha_2 \circ \chi_1, \\
\gamma_3 &=& \gamma_1 + e^{-\alpha_1} \gamma_2 \circ \chi_1, \\
\chi_3 &=& \chi_2 \circ \chi_1.
\eea
\ees
The linearized version of the mapping (\ref{diffeogen}) is
${\hat y}^i = y^i + \xi^i$, where the generator ${\vec \xi}$ is
\be \label{eq:vec-field0}
{\vec \xi} = \left[ \gamma(\theta^A) + \alpha(\theta^A) u \right]
\partial_u + \xi^A(\theta^B) \partial_A.
\ee

Using the notation (\ref{diffeogen}), the structure of the three
different groups can be summarized as follows:

\begin{itemize}

\item For the BMS group,
  as is well known, the function $\gamma$
  parametrizes supertranslations and can be freely specified.  The
  map $\chi$ and the function $\alpha$ are constrained by
  \be
  \chi^* {\bar \inducedmetric}_{AB} = e^{2 \alpha} {\bar \inducedmetric}_{AB}
  \label{alpha444}
  \ee
where $\chi^*$ is the pullback operator and $q_{AB}$ is the spatial metric.
It follows that $\chi$ is a global conformal isometry of the sphere,
of which there is a six parameter
family isomorphic to the Lorentz group, and $\alpha$ is determined by
$\chi$.
The group structure is the semidirect product
\be
   SO(1,3)  \ltimes {\cal S},
   \label{bmsstructure}
\ee
where ${\cal S}$ is the normal subgroup of supertranslations given by
$\alpha=0$, $\chi = $ identity, and the subgroup $\gamma=0$ is
isomorphic to the Lorentz group $SO(1,3)$.
Note that
the semidirect product in Eq.\ (\ref{bmsstructure}) has the property that
the supertranslation $\gamma$ transforms under the conformal
isometries of the two-sphere as a scalar density of weight $1/2$, 
$\gamma \to e^{-\alpha} \chi^* \gamma$, from Eqs.\ (\ref{groupcompos}) and
(\ref{alpha444}) \cite{Barnich:2021dta}.

\item For the generalized BMS group, the only
  change is that the six parameter group of conformal isometries is
  replaced by the infinite dimensionsal group ${\rm Diff}(S^2)$ of
  diffeomorphisms of the two-sphere.
  Thus, the supertranslation function $\gamma$ can be chosen freely as
  before, the diffeomorphism $\chi$ can be freely chosen, and the
  function $\alpha$ is determined as a function of $\chi$ by
    \be
  \chi^* {\bar \volumesmall}_{AB} = e^{2 \alpha} {\bar
    \volumesmall}_{AB},
  \label{alpha333}
  \ee
  where ${\bar \volumesmall}_{ij} = - {\bar \volume}_{ijk} {\bar
    n}^k$.
The group structure is the semidirect product
\be
   {\rm Diff}(S^2)  \ltimes {\cal S}.
   \label{genbmsstructure}
   \ee
The semidirect product here has the property that
the supertranslation $\gamma$ still transforms as a scalar density of
weight $1/2$,
\be
\gamma \to e^{-\alpha} \chi^* \gamma,
\label{scalardensity}
\ee
but now under all diffeomorphisms of the two-sphere instead of just
the conformal isometries.  This follows from 
Eqs.\ (\ref{groupcompos}) and (\ref{alpha333}).

\item Finally, for the Weyl BMS group, the constraint (\ref{alpha333}) that determines $\alpha$ as a
  function of $\chi$ is lifted, and $\alpha$ can now be freely chosen.
  This group is isomorphic to the group of symmetries on any null surface
  at a finite location defined in Ref.\ \cite{CFP}, as can be seen by
  comparing Eqs.\ (\ref{diffeogen}) here with Eqs.\ (4.7) of that paper.
 The group has the following structure.
  We define the subgroup ${\cal T}$ of all supertranslations to
  be given by $\chi = $ identity, which is parameterized by $\alpha$ and $\gamma$.
  This is a normal subgroup and the total group has the structure
\be
   {\rm Diff}(S^2)  \ltimes {\cal T}.
   \ee  
   The supertranslation group ${\cal T}$ contains two subgroups.
   First, there is the subgroup ${\cal S}$ given by $\alpha=0$,
   parameterized by $\gamma$.  These were called {\it affine
     supertranslations} in Ref.\ \cite{CFP} since they correspond to
   displacements in affine parameter.  This is a normal subgroup of
   both ${\cal T}$ and of the full group.  Second, there is the
   non-normal subgroup ${\cal W}$ given by $\gamma=0$, parameterized by $\alpha$.
   These were called {\it Killing supertranslations} in
   Ref.\ \cite{CFP} since they correspond to displacements in Killing
   parameter when there is a Killing vector field. They were also
   called Weyl rescalings in Ref.\ \cite{freidel2021weyl}, as
   mentioned earlier. The supertranslation group has the structure
   \be
      {\cal T} = {\cal W} \ltimes {\cal S},
    \ee
    so the full symmetry group can be written as
\be
   {\rm Diff}(S^2)  \ltimes ( {\cal W} \ltimes
   {\cal S} ).
\label{BMSWeylStructure}
\ee
Here the first semidirect product is such that the supertranslation
functions $\alpha$ and $\gamma$ transform as scalars
under diffeomorphisms of the
two-sphere, not as scalar densities,\footnote{However, if we change from the ${\rm Diff}(S^2)$ subgroup
$(\alpha,\gamma,\chi) = (0,0,\chi)$ to the alternative ${\rm
  Diff}(S^2)$ subgroup given by $(\alpha,\gamma,\chi) = [\alpha(\chi),
0, \chi]$, with $\alpha(\chi)$ the function of $\chi$ given by imposing
Eq.\ (\ref{alpha444}), then in the semidirect product
(\ref{BMSWeylStructure}) the Killing supertranslations ${\cal W}$ transform as
scalars but the affine supertranslations ${\cal S}$ transform as scalar densities 
as in Eq.\ (\ref{scalardensity}), from Eqs.\ (\ref{groupcompos}).  This alternative ${\rm Diff}(S^2)$
subgroup is the one that arises naturally within the generalized BMS subgroup $(\alpha,
\gamma, \chi ) = [\alpha(\chi), \gamma, \chi]$, which is why affine
supertranslations transform as scalar densities in the GBMS and BMS
cases \cite{Barnich:2021dta}.}\footnote{One can also express the Weyl BMS group as the semidirect product 
$({\rm Diff}(S^2)  \ltimes  {\cal W}) \ltimes {\cal S}$. In this case
the action of the second semidirect product endows the affine
supertranslations with a certain weight under the Weyl rescalings
${\cal W}$, as well
as an independent density weight under the diffeomorphisms, which
again can be adjusted at will by choosing the ${\rm Diff}(S^2)$ 
subgroup appropriately.} from Eqs.\ (\ref{groupcompos}).

\end{itemize}

\section{Gravitational charges at asymptotic boundaries: holographic renormalization}
\label{sec:hr}

\subsection{Introduction and Overview}
\label{sec:hr1}

We now return to the context of general theories and general spacetime dimensions.
The general formalism for gravitational boundary symmetries and
charges developed in Sec.\ \ref{sec:charges} above assumed that the boundaries are
at finite locations in spacetime (for example future event horizons),
and that all the quantities defined are finite on those boundaries.
In this section we will extend the formalism to handle cases of
asymptotic boundaries, treated in a conformal completion framework to
bring them to finite locations.\footnote{Although we assume a
conformal completion, our general framework would also be applicable
to situations like odd-dimensional asymptotically flat spacetimes
where the conformal framework does not apply \cite{Godazgar:2012zq}, by making use of a
radial foliation.}
In general the Lagrangian and symplectic
potential can then diverge at the boundaries.  Some previous general treatments of the
covariant phase space framework either did not treat asymptotic
boundaries in general \cite{Harlow:2019yfa}, or used the finiteness of
certain quantities on the boundary as a criterion to determine which
boundary conditions to impose and which field configuration space to
use \cite{Wald:1999wa},which is in general unnecessarily restrictive.

The key idea of the extended formalism is holographic renormalization
\cite{Henningson1998a, Balasubramanian:1999re,DeHaro2001,Papadimitriou:2005ii,
Compere:2008us,CL,Campiglia:2014yka,Compere:2018ylh,Freidel:2019ohg},
which exploits the boundary canonical transformations and JKM transformations 
discussed in Sec.\ \ref{sec:JKM} above
to make the integrated action and symplectic potential $\theta'$
finite on the boundary. 
Once one has a renormalized symplectic potential, the formalism of
Sec.\ \ref{sec:charges} then yields finite gravitational global and
localized charges.

In Sec.\ \ref{sec:charges} we discussed the fact that
it is sometimes necessary to introduce a background structure which violates
covariance in order to obtain gravitational charges.  For example this
occurs with certain boundary conditions on finite null surfaces
\cite{Chandrasekaran:2020wwn}.
Similarly, here it is sometimes the case that the transformations that
are necessary to renormalize the Lagrangian and symplectic form
require the introduction of some background structures. This is the
case, for example, for generalized BMS symmetries (Sec.\ \ref{sec:fcs} above) in vacuum general relativity at
null infinity, where it was shown in Ref.\ \cite{Flanagan:2019vbl}
that a completely covariant renormalization of the symplectic
potential is impossible.  A similar situation arises in asymptotically
AdS spacetimes, where it is necessary to introduce a spacetime
foliation near the boundary as a background structure when renormalizing the 
action.  Dependence of the renormalized quantitities on this foliation 
signals the appearance of Weyl anomalies in the dual CFT description 
\cite{Henningson1998a, Balasubramanian:1999re,
Skenderis:2000in, Papadimitriou:2005ii, 
Compere:2008us}.

The various possible background structures that we will
consider are (i) a foliation of spacetime near the boundary; (ii) A
choice of conformal factor $\conf$ near the boundary; and (iii) A
choice of vector field ${\vec v}$ near the boundary which satisfies
$v^a \nabla_a \conf =1$, which we will call a {\it rigging vector
  field}.  We expect that in many situations only the foliation will
be necessary to effect holographic renormalization.  However,
later in this section will make use of the more restrictive assumption
of the existence of a rigging vector field to argue that holographic
renormalization can always be successfully carried out.

We start by defining our notation and conventions for the covariant
phase space framework with asymptotic boundaries.
As discussed in Sec.\ \ref{sec:presymplecticform}, our phase subspaces are defined by a spacetime subregion $\sr$,
and a spatial slice $\Sigma$ whose boundary $\partial \Sigma$ lies in
$\partial \sr$.  We define ${\cal D} = \sr \cap I^+(\Sigma)$, the
intersection of the subregion with the chronological future
of the spatial slice, i.e.\
the set of all points 
to the future of $\Sigma$ in $\sr$.  The region ${\cal D}$ will be the setting for the
action principle for the phase subspace.  We will denote by ${\cal
  N}_j$ the various components of the boundary of ${\cal D}$, one of
whom will be the portion of the boundary ${\cal N}$ discussed in
Sec.\ \ref{sec:presymplecticform} in the chronological future of $\Sigma$, 
and one of whom will be the spatial
slice $\Sigma$.  
For asymptotic boundaries we use the conformal completion framework
and work with conformally rescaled fields which are finite on the
boundary.
The conformal factor $\Phi$ is chosen to be vanishing on
asymptotic boundaries and to have a nonzero gradient there, and to be positive on
${\cal D}$.

In order to discuss renormalized actions, we define a 
cutoff manifold ${\cal D}_\upsilon$
to be the set of points in
${\cal D}$ with $\Phi > \upsilon$, which excludes a neighborhood of
the asymptotic boundaries.  We will assume that the boundary of the truncated
manifold can be decomposed into a number of components in the same way
as the full manifold: 
\be
   \partial{\cal D}_\upsilon = \bigcup_j {\cal N}_{j,\upsilon}.
\ee
The standard example we will have in mind is depicted in Figure \ref{fig:holog}.  
Here, the region $\mathcal{D}_\ups$ has a single timelike boundary $\mathcal{N}_\ups$
that limits to a segment of $\scp$, and $\Sigma_\ups$ comprises the past boundary
of $\mathcal{D}_\ups$.  We also include a future boundary $\Sigma_\ups^f$ 
so that the cutoff region is bounded in spacetime, making all cutoff integrals
manifestly finite.


\begin{figure}
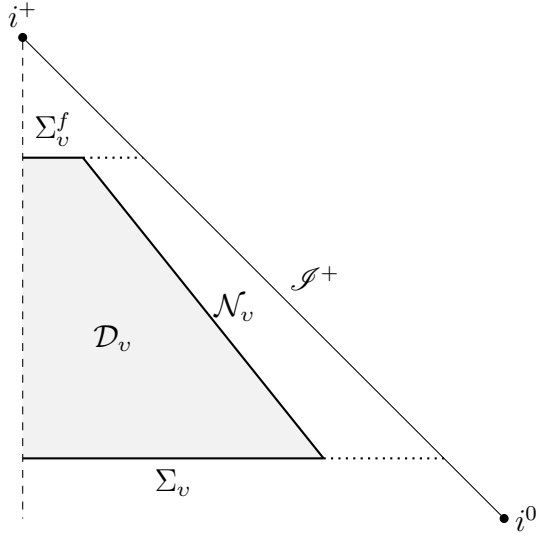

\centering
\tikz [scale = 0.8,
        pt/.style = {draw, circle, fill=black, inner sep =0 mm, minimum size = 3pt}
        ]
{    
    \coordinate  (p1f) at (0,6);
    \coordinate  (p1p) at (0,1);
    \coordinate  (p2f) at (1,6);
    \coordinate  (p2p) at (5,1);
    \coordinate  (scrif) at (2,6);
    \coordinate  (scrip) at (7,1);
    \coordinate  (origin) at (0,0);
    
    \fill [gray!10] (p1f) -- (p2f) -- (p2p) -- (p1p) -- cycle;
    
    \node [pt] (ip) at (0,8) {};
    \node [pt] (i0) at (8,0) {};
    \node (u) at (1.5,3) {$\mathcal{D}_\ups$};
    
    \draw (ip) node[above]{$i^+$} -- (i0) node[right] {$i^0$} node[midway,right=2mm]{$\scp$};
    \draw [thick] (p2f) --(p2p) node[midway, right] {$\mathcal{N}_\ups$};
    \draw [thick] (p1f) -- (p2f) node[midway, above] {$\Sigma^f_\ups$};
    \draw [thick] (p1p) -- (p2p) node[midway, below] {$\Sigma_\ups$};
    
    \draw [dashed] (ip) -- (origin);
    \draw [dotted, thick] (p2f) -- (scrif);
    \draw [dotted, thick] (p2p) -- (scrip);

}
\caption{Standard setup for holographic renormalization in asymptotically
flat spacetimes.  The subregion under consideration is associated with a segment of 
$\scp$ to the future of a spatial surface $\Sigma$.  The cutoff subregion $\mathcal{D}_\ups$
is 
depicted in gray, and its boundary components $\mathcal{N}_{j,\ups}$ consist
of $\Sigma_\ups$, $\mathcal{N}_\ups$, and $\Sigma_\ups^f$.\label{fig:holog} }
\end{figure}

On each boundary we define a boundary action $\ell'_j$, and on the
boundaries $\partial {\cal N}_j$ we define corner actions $c'_j$.  Then
the total action is defined to be
\be
S_\upsilon = \int_{{\cal D}_\upsilon} L' + \sum_j \int_{{\cal
    N}_{j,\upsilon}} \ell'_j + \sum_j \int_{\partial {\cal
    N}_{j,\upsilon}} c'_j.
\label{ffs}
\ee
A successful renormalization of the action consists of 
finding definitions of $\ell'_j$ and $c'_j$ so that
$S_\upsilon$ has a finite limit as $\upsilon \to 0$.
We will generally drop the subscript $j$ in the remainder of this section.

In the action (\ref{ffs}) we allow the boundary term $\ell'_j$ to have
different forms on different boundaries ${\cal N}_j$, and be in effect
discontinuous across the interfaces $\partial {\cal N}_j$ between two
different boundaries.  This is the reason for including the corner
terms, which otherwise would give a vanishing contribution if
$\ell'_j + d c'_j$ were a continuous function on $\partial
{\cal D}$.  An additional reason for separating out corner terms is
as follows. One might imagine eliminating such terms by replacing
$\ell'_j \to \ell'_j + d c'_j$.  However such a replacement may violate
corner
covariance; it can arise that there exist  definitions of
$c'_j$ that are covariant with respect to 
corner-preserving diffeomorphisms, but no  extensions of these definitions
to objects that are fully covariant with respect to diffeomorphisms of the entire boundary.

The general scheme for holographic renormalization can be described in
terms of a number of stages, starting with conventional stages to
renormalize the action, and then subsequent stages to renormalize the
symplectic potential and to adjust corner terms.  We now give an overview of the various
stages.  Although in practice the charges and other JKM-invariant quantities
would only be computed at the end of the process, we list in the overview
which of the these quantities would change at each stage, in order
to clarify the logical structure of the process.
In later subsections we will show that the steps described
here can be carried out successfully for general theories using a
rigging vector field and a boundary vector field as background structures (although we expect that
a bulk foliation will in general be sufficient).

\begin{enumerate}

\item We start with a divergent Lagrangian $L'$ and symplectic
  potential $\theta'$.  We choose the decomposition
  (\ref{eqn:thdecomp}) to make $\ell'=\beta'=0$.  We assume that the
  initial quantities are covariant, so that $b' = \lambda'=0$, and it
  follows from Eq.\ (\ref{eqn:bldecomp}) that $c' =\gamma' = \varepsilon=0$.
The action $S$, symplectic form $\Omega$, gravitational charges
${\tilde H}_\xi$ and flux ${\cal E}$ are all divergent.

\item We perform a boundary canonical
  transformation (\ref{eq:bct}) with $B$ and $\Lambda$ chosen 
  to put the flux ${\cal E}$ into Dirichlet form.
  The modifies ${\cal E}$, $\ell'$, $\beta'$, $S$ and ${\tilde H}_\xi$, and in particular involves
  adding a boundary term to $\ell'$.  For example, in vacuum general relativity with a
  timelike boundary the corresponding 
  boundary term $B$ is the Gibbons-Hawking-York term.

\item We perform a second boundary canonical transformation
  (\ref{eq:bct}) to make $L' + d\ell'$ finite and also to make the flux ${\cal
  E}$ finite, while preserving the Dirichlet form of the flux.
  The parameter $B$ of the transformation is 
  a boundary counterterm that is added to $\ell'$ to make $L' + d \ell'$
  finite on the boundary, which is a functional of the intrinsic data
  on the boundary \cite{Henningson1998a, Balasubramanian:1999re, MM-term}. 
  The parameter $\Lambda$ is computed from $B$ in
  the same way that the symplectic potential $\theta'$ is computed
  from the bulk Lagrangian, by varying with respect to the intrinsic
  data, see for example \cite{Fiorucci:2020xto}.  Schematically
  $$
  \delta B = \frac{\delta B}{\delta ({\rm intrinsic})} \delta ({\rm
    intrinsic}) + d \Lambda,
  $$
which guarantees that the modification ${\cal E} \to {\cal E} + \delta
B - d\Lambda$ to the flux preserves Dirichlet form.
The quantities which change in this step are ${\cal E}$, $\ell'$, 
  $\beta'$, $\Omega$, $S$ and ${\tilde H}_\xi$.
  In Sec.\ \ref{exparg} below we show explicitly that it is always possible
  to find a boundary canonical transformation $(B,\Lambda)$ that makes
  both $L' + d \ell'$ and ${\cal E}$ finite.  (It should be possible to
  always specialize the transformation to preserve Dirichlet form
  although we do not show this here.)

\item We next repeat these steps in one lower dimension to adjust
  corner terms.
  We perform a corner
  canonical transformation (\ref{cct}) parameterized by forms $F$ and $\zeta$ in order to put the corner flux
  $\varepsilon$ into Dirichlet form.  This involves identifying
  appropriate intrinsic degrees of freedom on the corners $\partial
  {\cal N}$.  Additional quantities that change are $S$ and ${\tilde H}_\xi$.

\item We then perform a second corner canonical transformation
  (\ref{cct}) to make
  the integrated action $S$ and corner flux $\varepsilon$ finite,
  while maintaining the Dirichlet form of the corner flux.
  The parameter $-F$ of the transformation is 
  a corner counterterm that is added to $c'$ to make the integrated
  action finite, which is a functional of the intrinsic data
  on the corner. 
  The parameter $\zeta$ is computed from $F$ as described in step 3 above, by
  varying with respect to the intrinsic data, to ensure that the
  modification (\ref{cct}) to the corner flux $\varepsilon$ preserves Dirichlet
  form.
  The gravitational charges ${\tilde H}_\xi$ as well as the
symplectic form $\Omega$ are now finite, if the corner
terms can be chosen so that $\Delta_{\hat \xi} (\ell' + dc')$ is finite
[from Sec.\ \ref{sec:corners} the charges
    ${\tilde H}_\xi$ will be finite if ${\cal E}$, $L' + d \ell'$ and
    $\Delta_{\hat \xi}(b' + \ell' + d c')$ are all finite, and $b'=0$ here].
In Sec.\ \ref{exparg1} below we show explicitly that this is the case: it is always possible
  to find a corner canonical transformation $(f,\zeta)$ of this kind that makes
  the integrated action $S$, corner flux $\varepsilon$ and $\ell' + d c'$ finite.  The general proof requires the introduction of additional background structure on the boundary.

\item We use a JKM transformation (\ref{eqn:JKMtrans}) with $a = \ell'$
  and $\nu = -\beta'$.  This step is not necessary to
  obtain finite charges, but it is convenient in order to make other
  quantities in the formalism finite.
  It makes finite the Lagrangian $L'$
  and symplectic potential $\theta'$, and also modifies $b'$, $\lambda'$
  and $\ell'$, but does not affect the charges, action or symplectic form.

\item Finally we use a transformation of the form (\ref{eqn:echidef})
  with $e = c'$, $\chi = \gamma'$ and $\cov \cb = \cov
  \cgamma=0$. This step is also optional, since it does not change the
  charges, but it is convenient as it makes $c'$ and $\gamma'$ vanish and
  $\lambda'$ finite.  The other quantity that is modified is $b'$, 
  which can still be divergent. However $\Delta_{\hat
    \xi} b' = \Delta_{\hat \xi}(b' + \ell' + dc')$ will be finite
  (since this quantity was finite at step 5 and is not modified by steps 6 or 7), which is sufficient for
  finiteness of the charges.

\end{enumerate}

The algorithm described here is summarized in Table \ref{tab:ren},
which shows which quantities change at each step, and when quantities
diverge or become finite.

Although we show that this procedure can always be carried
out in order to obtain finite renormalized charges,
there are a number of subtleties that arise 
in the asymptotically flat case that are not present in 
asymptotically dS or AdS spaces.\footnote{We thank Kostas
Skenderis and Ioannis Papadimitriou for discussions on this point.}
These subtleties relate to 
the form of the counterterm for the action $\ell_{\text{ct}}$ 
and their dependence on the free data associated with the cutoff
surface $\ns$.  In asymptotically (A)dS spaces, this free
data can be chosen to be the induced metric on the cutoff surface,
which is fully unconstrained by the equations of motion in 
the limit that the cutoff is taken to the boundary.  The counterterms
needed to renormalize the action are covariantly constructed 
from the boundary metric, and hence they are given by local
expressions in terms of the free data.  Locality and covariance
of the counterterms are important in the holographic correspondence,
as they allow the on-shell action to be interpreted as the 
generating functional of correlation functions in a local CFT
dual
\cite{Witten:1998qj, Gubser:1998bc, Bianchi:2001kw}.  

For asymptotically flat spacetimes, one finds that the 
induced metric on the cutoff surface is not freely
specifiable, but instead satisfies a number of differential
constraints in the limit that the cutoff surface is taken
to infinity \cite{deHaro:2000wj}.
Even though counterterms that are local and covariant
with respect to the induced metric on the cutoff 
surface can be constructed \cite{MM-term}, since this induced 
metric is not freely specifiable, the variation of the on-shell
action with respect to this induced metric is necessarily subject
to constraints.  This generically leads to nonlocal dependence of the 
counterterms on the free data on the cutoff surface, and further
complicates a simple holographic interpretation for 
asymptotically flat spaces
\cite{deHaro:2000wj, Skenderis:2002wp, Papadimitriou:2010as}.  
While this does not affect the 
main results of this work on obtaining finite gravitational charges,
addressing these subtleties is an important question for 
further developing the holographic correspondence in flat space.

\begin{table}
\centering
\footnotesize
\begin{tabular}{|c|c|c||c|c|c|c|c|c|c|}
\hline
 & \makecell*{Step}  && 1 & 2 &
3 & 4 &
5 &
6 &
7 \\
\hline
 & \makecell*{Transformations}  && &$B,\Lambda$ & $B,\Lambda$ &
$F,\zeta$ &$F,\zeta$   &$a,\nu$ &
$e,\chi$ \\
\hline
 &\makecell*{Quantity}& Eqn. &  & &  & &  &  &  
\\
\hhline{| = | = |= # =|=|=|=|=|=|=|}
\makecell*{$S$} & action & \ref{eqn:action}& $\infty$
 & $\to \infty $  &$\to \infty $ & $\to \infty $ &$\to$ f & & 
   \\
\hline
\makecell*{$\Omega$} & sympl. form & \ref{eqn:Omcorner}& $\infty$
& $\to \infty$ &$\to $ f &  &&  &      \\
\hline
\makecell*{$\int_{\partial \Sigma}{\tilde h}_\xi$ }& integ. charge &
\ref{eqn:hxiimproved}&$\infty$ &$\to \infty$  &$\to \infty$ & $\to
\infty$ &$\to $ f &  &  \\
\hhline{| = | = |= # =|=|=|=|=|=|=|}
\makecell*{$L'$} & bulk Lagrangian &\ref{eqn:dL} & $\infty$ &$\infty$  &$\infty$
&$\infty$ &$\infty $  &$\to $ f &  
 \\
\hline
\makecell*{$\theta'$} & symplectic pot.& \ref{eqn:dL} &$\infty$  &$\infty$
& $\infty$ & $\infty$
&$\infty$  &$\to$ f  &  \\
\hline
\makecell*{${\cal E}$} &boundary flux &\ref{eqn:thdecomp} &$\infty$ & $\to \infty$
&$\to $ f & & & &   \\
\hhline{| = | = |= # =|=|=|=|=|=|=|}
\makecell*{$b'$} & pot., noncovar.&\ref{eqn:blambdadef}
&$0$ &$0$ &$0$ &$0$ &$0$ &$\to \infty$  & $\to \infty$  \\
\hline
\makecell*{$\lambda'$} & pot., noncovar. & \ref{eqn:blambdadef} & $0$
& $0$& $0$ &$0$ &$0$ & $\to \infty$ &$\to$ f       \\
\hline
\makecell*{$\ell'$} & boundary action & \ref{eqn:thdecomp}&$0$ &$\to \infty$  &
$\to \infty$& $\infty$ & $\infty$&$\to 0$
&  \\
\hline
\makecell*{$\beta'$} & corner term & \ref{eqn:thdecomp} &$0$ & $\to \infty$ &$\to
\infty$ &$\infty$ &$\infty$ &$\to 0$  &   \\
\hhline{| = | = |= # =|=|=|=|=|=|=|}
\makecell*{$c'$} & corner action &\ref{eqn:bldecomp} &$0$ & $0$ &$0$ &$\to \infty$
&$\to \infty$ & $\infty$  &$\to 0$
  \\
\hline
\makecell*{$\gamma'$} & & \ref{eqn:bldecomp} & $0$  &$0$
&$0$ &$\to \infty$ &$\to \infty$ &$\infty$ &$\to 0$    \\
\hline
\makecell*{$\varepsilon$} & corner flux &\ref{eqn:bldecomp} & $0$ &$\to \infty$ &
$\to \infty$&$\to \infty$ & $\to$ f & &  \\
\hline
\end{tabular}
\caption{
A summary of the steps in the general holographic renormalization procedure
of this paper.
The first row lists the
numbered steps described in Sec.\ \ref{sec:hr1}, and the second row the
transformations used in getting to each step from the previous step
(detailed in Table \ref{tab:operations}).
The remaining rows list the various quantities
that occur in the formalism, their names, defining equations, and how
they transform under the steps. The symbol $\to$ means that the
corresponding quantity is altered by the transformation of that step.
The symbols $\infty$, f, and $0$ mean that the quantity is diverging,
finite, and vanishing, respectively, at the end of the
step.
}
\label{tab:ren}
\end{table}

The remainder of this section is organized as follows.
In Section \ref{genarg} we provide a very general argument which shows
that the existence of a renormalized total action (including bulk,
boundary and corner terms) is sufficient to show that it is possible to
renormalize the symplectic potential.
The renormalized symplectic potential and charges will in general depend on the choice of
foliation used to renormalize the action, as discussed above.
In Section \ref{exparg} we present a similar and complementary
result. Given any covariant theory in which the Lagrangian
and symplectic potential diverge near the boundary, we
demonstrate the existence of a renormalized Lagrangian and
a renormalized symplectic potential,  by explicitly
computing the counterterms that one needs to subtract off
in order to obtain finite quantities.  These counterterms again depend
on the choice of conformal factor, and in addition depend on a choice
of rigging vector field.

As an application of our holographic renormalization formalism, in Sec.\ \ref{expcalc} below
we specialize to vacuum general relativity in asymptotically flat
3+1 dimensional spacetimes, specialize to the generalized BMS field configuration
space, and compute the renormalized symplectic potential and the associated localized charges at future
null infinity.

\subsection{Existence of renormalized symplectic potential: general argument assuming a finite action}
\label{genarg}

In this section we show that a renormalized action functional is
sufficient to provide a renormalized symplectic potential. From this one can obtain
a complete set of IR finite observables which act on the boundary
phase space.

We emphasize from the outset that non-covariances and
background structures play a crucial role in this renormalization
procedure, which thus avoids the no-go theorem of
\cite{Flanagan:2019vbl}. In particular, the general argument relies on
the introduction of a background foliation near the boundary, on
which the renormalized action depends [cf.\ Eq.\ (\ref{ffs}) above]. Thus, any renormalization of
the boundary action can itself involve non-covariant counter-terms.
This is entirely analogous to the way holographic renormalization
works in AdS/CFT, where a radial foliation is
introduced near the AdS boundary, and different choices of foliations
lead to different values of the boundary action
\cite{Henningson1998a, Balasubramanian:1999re,
Skenderis:2000in, DeHaro2001, Papadimitriou:2005ii, 
Compere:2008us}. On the boundary, this is simply the statement
that UV regulators 
break conformal invariance. 
However, we do demand that the renormalization procedure respects
boundary covariance, a point which 
was also emphasized in \cite{Harlow:2019yfa}. Thus, we do not
introduce any background structures beyond the 
foliation.

We start by taking a variation of the integrated action (\ref{ffs}) and
using the equation of motion (\ref{eqn:dL}) to eliminate the bulk contribution.
Applying the decomposition
(\ref{eqn:thdecomp}) of the symplectic potential results in\footnote{Note that 
some of the boundaries $\partial \ns_{j,\ups}$ may have multiple components, in which
case the quantities $c_j'$ and $\cflx_j$ are understood to be independently specifiable 
on each component.}
\be
\delta S_\upsilon = \sum_j \int_{{\cal N}_{j,\upsilon}} {\cal
  E}_j +  \sum_j \int_{\partial {\cal N}_{j,\upsilon}} (\beta'_j +
\delta c_j).
\ee
We next use the decomposition (\ref{eqn:bldecomp}) of $\beta_j'$ to give
\be
\delta S_\upsilon = \sum_j \int_{{\cal N}_{j,\upsilon}} {\cal
  E}_j +  \sum_j \int_{\partial {\cal N}_{j,\upsilon}} (\lambda' +
\varepsilon_j).
\ee
Finally, we note that $\lambda'$ is assumed to be continuous on
$\partial {\cal D_\ups}$, implying continuity at the interfaces between
different boundaries ${\cal N}_j$.  Since the contributions from $\lambda'$ 
always occur in pairs with opposite signs coming from the two $\mathcal{N}_{j,\ups}$ 
intersecting at each corner, the overall contribution
from $\lambda'$ vanishes, and so
\be
\delta S_\upsilon = \sum_j \int_{{\cal N}_{j,\upsilon}} {\cal
  E}_j +  \sum_j \int_{\partial {\cal N}_{j,\upsilon}} 
\varepsilon_j.
\label{xx}
\ee
By assumption,  the left hand side has a finite limit
as $\ups\ra 0$, implying that the sum of all the boundary and corner
fluxes on the right must also be finite in the limit.  Hence, if any individual contribution
in these sums diverges, it must be canceled by a divergence appearing in a different term.  
In this case, one expects to be able to choose each of the corner fluxes
$\cflx_j$ such that $\beom_j+d\cflx_j$ has a finite limit.  

To see how this is borne out in more detail, we can focus on the 
standard example given in Figure \ref{fig:holog} in which the codimension-1
boundaries consist of $\Sigma_\ups$, $\Sigma_\ups^f$ and $\ns_\ups$, and the corners
are $\partial\Sigma_\ups$ and $\partial\Sigma_\ups^f$.  
The spatial surfaces $\Sigma_\ups$ and $\Sigma_\ups^f$ intersect the 
$\Phi$ foliation transversally, and so any divergence coming from an
integral of the respective fluxes over these surfaces must be localized in the 
$\Phi\ra 0$ regions of these surfaces, which are just their boundaries. Since
the boundaries of $\Sigma_\ups$ and $\Sigma_\ups^f$ coincide with the boundaries
of $\ns_\ups$, it follows that any remaining divergence in the flux on $\ns_\ups$ must 
cancel against divergences at its boundary.  Hence, it is possible to shift 
the flux $\beom_{\ns}$ by an exact term to cancel its divergence.  This allows us to 
conclude that we can arrange for the $\cflx_j$ to be chosen such that 
$\beom_j + d\cflx_j$ is finite on each boundary.

Finally, from the decomposition (\ref{eqn:bldecomp}), it follows that
the symplectic potential $\theta'$ is finite up to exact terms and
total variations.  Hence we can  find a JKM transformation (\ref{eqn:JKMtrans})
that makes the symplectic potential at any given asymptotic boundary finite.
This renormalized symplectic potential will in general depend on the choice of
foliation.

\subsection{Explicit renormalization using background structures}
\label{exparg}

In this section we provide an explicit algorithm for holographic renormalization in
general contexts, based on allowing the counterterms to depend
on a rigging vector field in addition to a foliation.  The intent is
to provide an existence proof for background-dependent counterterms.
However we expect that in applications it will be generally possible
to find counterterms that depend only on a foliation.  Note that 
the dependence of the counterterms on the additional background
structure provided by the rigging vector is possibly related to the nonlocality of these terms
relative to the free data on the asymptotic boundary, as 
discussed in \cite{deHaro:2000wj, Skenderis:2002wp}.  

As discussed in Sec.\ \ref{sec:hr1} above, the setup is that we have a region
${\cal D}$ in a $d+1$ dimensional spacetime, and a portion ${\cal N}$ of the boundary of
${\cal D}$, where we are using conformal compactification to treat
the asymptotic boundary ${\cal N}$ as a finite boundary.
We assume that the Lagrangian $L'$ and 
symplectic potential $\theta'$ are smooth in the interior of ${\cal D}$
but can diverge on ${\cal N}$, and assume a smooth conformal
factor $\conf$ with $\conf=0$ on ${\cal N}$.  We fix a rigging vector
field ${\vec v}$
which is defined on a neighborhood of ${\cal N}$, is nowhere
vanishing, and satisfies $v^a \nabla_a \conf = 1$
\cite{Mars1993,Schouten1954}. Note that $\conf$ is determined in terms
of ${\vec v}$ from
this condition and the condition $\conf=0$ on ${\cal N}$.

Consider now a boundary canonical transformation (\ref{eq:bct})
parameterized by $B$ and $\Lambda$.  If we combine this with a JKM
transformation (\ref{eqn:JKMtrans}) with $ a = B$ and $\nu = -
\Lambda$, the Lagrangian and symplectic potential transform as
\begin{subequations}
  \label{JKM11}
  \begin{eqnarray}
\label{Lren}
L' \to L'_{\rm ren} &=& L' + d B, \\
\label{thetaren}
\theta' \to \theta'_{\rm ren} &=& \theta' + \delta B - d\Lambda,
  \end{eqnarray}
\end{subequations}
The two main results of this section are:
\begin{enumerate}

\item There exists a 
  transformation of this kind
for which the renormalized Lagrangian and symplectic potential (\ref{JKM11})
(and not just
their pullbacks to surfaces of constant $\conf$) have finite limits to
the boundary ${\cal N}$. We will construct this transformation explicitly.
This is sufficient to make the charges ${\tilde H}_\xi$ finite,
assuming the property of corner terms described in step 3 of Sec.\ \ref{sec:hr1}.

\item The anomaly
  \be
  \Delta_{\hat \xi} \p\theta'_{\rm ren}
  \ee
in the pullback to the boundary of the renormalized symplectic
potential is the sum of an exact term and a total variation, both of which
are finite on the boundary.

\end{enumerate}
The derivation of these results in given in Appendix \ref{app:hr}.
Here we summarize some of the details.

\subsubsection{Renormalized symplectic potential and Lagrangian}
\label{sec:rspL}

We start by defining some notations.  We define for $\upsilon > 0$ a map
\be
\pi_\upsilon: {\cal N} \to {\cal D}
\ee
which moves any point $\upsilon$ units along the
integral curve of ${\vec v}$ passing through that point.  The image of
${\cal N}$ under this mapping is the surface $\conf = \upsilon$
which we will denote by ${\cal N}_\upsilon$.  Any differential form $\Lambda$
on a neighborhood of ${\cal N}$ is uniquely determined by specifying
(i) $i_v \Lambda$, and (ii) the pullbacks $\pi_{\upsilon}^* \Lambda$ for the values of
$\upsilon$ that cover the neighborhood, via
$\Lambda = d\Phi\wedge i_v \Lambda + \Lambda_h$ with $i_v \Lambda_h =
0$ and $\pi_\upsilon^*\Lambda_h = \pi_\upsilon^*\Lambda$.

The quantities $B$ and
$\Lambda$ that define the boundary canonical transformation (\ref{JKM11}) are given
by
\bes
\label{bprimelambdaprime}
\bea
i_v B &=& 0, \\
\pi_{\upsilon}^* B &=&  \int_{\upsilon}^{\upsilon_0} d{\bar
  \upsilon} \, \pi_{{\bar \upsilon}}^* i_v L', \\
i_v \Lambda &=& 0, \\
\pi_{\upsilon}^* \Lambda &=&  -\int_{\upsilon}^{\upsilon_0} d{\bar
  \upsilon} \, \pi_{{\bar \upsilon}}^* i_v \theta',
\eea
\ees
where we have chosen a fixed $\upsilon_0 > 0$.
The renormalized Lagrangian and 
symplectic potential are given by
\bes
\label{thetaprimeans}
\bea
i_v L'_{\rm ren} &=& 0, \\
i_v \theta'_{\rm ren} &=& 0, \\
\pi_{\upsilon}^* \theta'_{\rm ren} &=&  \pi_{\upsilon_0}^* \theta'.
\eea
\ees

The expressions (\ref{bprimelambdaprime}) and (\ref{thetaprimeans})
become more transparent when expressed in a suitable class of 
coordinate systems.
We choose a coordinate system $(x^0,x^1, \ldots ,x^d)$ for which $x^0 = \conf$
and ${\vec v} = \partial / \partial x^0$.
We define for convenience the basis $d$-forms, $(d-1)$-forms and $(d-2)$-forms 
\bes
\bea
\varpi &=& dx^1 \wedge \ldots \wedge dx^d, \\
\varpi_i &=& - i_{\partial_i} \varpi = (-1)^i dx^1 \wedge \ldots \wedge {\widehat {dx^i}} \wedge \ldots \wedge dx^d, \\
\varpi_{ij} &=& i_{\partial_i} \varpi_j,
\eea
\ees
where the hat on a basis one-form means that that one-form is omitted
in the wedge product,
and $i$ and $j$ run over $1 \ldots d$.
We expand the symplectic potential $\theta'$, Lagrangian $L'$, and boundary canonical
transformation forms $B$ and $\Lambda$ as
\bes
\label{fieldsexpand}
\bea
\label{Lexpand}
L'  &=& {\cal L}  \, dx^0 \wedge \varpi,\\
\label{thetaexpand}
\theta' &=& \theta^{\prime\,0} \varpi +  \theta^{\prime\,i} \, dx^0 \wedge \varpi_i,\\
\label{bprimeexpand}
B &=& B^{0} \varpi +  B^{i} \, dx^0 \wedge
\varpi_i,\\
\Lambda &=& \Lambda^{i} \varpi_i +  \Lambda^{ij} \, dx^0 \wedge
\varpi_{ij},
\eea
\ees
together with a similar notation for $L'_{\rm ren}$ and $\theta'_{\rm ren}$.
Using these notations Eqs. (\ref{bprimelambdaprime}) reduce to
\bes
\label{translated}
\bea
B^{i} &=& 0,\\
B^{0}(\upsilon,x^j) &=& \int_{\upsilon}^{\upsilon_0}
d {\bar \upsilon} \, {\cal L}({\bar \upsilon},x^j),\\
\Lambda^{i}(\upsilon,x^j) &=& -\int_{\upsilon}^{\upsilon_0}
d {\bar \upsilon} \, \theta^{\prime\,i}({\bar \upsilon},x^j),\\
\Lambda^{ij}(\upsilon,x^j) &=& 0,
\eea
\ees
where we have written
$(x^0,x^1, \ldots x^d) = (\upsilon,x^1, \ldots x^d) = (\upsilon,x^j)$.  Similarly
Eqs.\ (\ref{thetaprimeans}) reduce to
\bes
\label{thetaLfinite}
\bea
\label{thetaifinite}
{\cal L}_{\rm ren} &=& 0,\\
\theta^{\prime\,i}_{\rm ren} &=& 0, \\
\label{theta0finite}
\theta^{\prime\,0}_{\rm ren}(\upsilon,x^j) &=& \theta^{\prime\,0}_{\rm ren}(0,x^j) =
\theta^{\prime\,0}(\upsilon_0,x^j).
\eea
\ees

We now turn to some examples of applications of this formalism.
In many cases we can split the Lagrangian and symplectic potential
into diverging and finite pieces, $L' = L'_{\rm div} + L'_{\rm finite}$,
$\theta' = \theta'_{\rm div} + \theta'_{\rm finite}$, such that the
diverging pieces obey the identity (\ref{eqn:dL}) on shell,
$\delta L'_{\rm div} = d \theta'_{\rm div}$.  It is convenient then to compute the
boundary canonical transformation using just the diverging pieces, which is
sufficient to make $L'_{\rm ren}$ and $\theta'_{\rm ren}$ finite.  In this case the result
(\ref{thetaLfinite}) for the renormalized Lagrangian and symplectic potential
on the boundary becomes
\bes
\bea
{\cal L}_{\rm ren}(0,x^j) &=& {\cal L}_{\rm finite}(0,x^j), \\
\theta^{\prime\,i}_{\rm ren}(0,x^j) &=& \theta^{\prime\,i}_{\rm finite}(0,x^j), \\
\label{theta0finite1}
\theta^{\prime\,0}_{\rm ren}(0,x^j) &=&
\theta_{\rm finite}^{\prime\,0}(0,x^j)+ \theta_{\rm div}^{\prime\,0}(\upsilon_0,x^j).
\eea
\ees
If we can further choose $\upsilon_0$ to make the second term in Eq.\ (\ref{theta0finite1})
vanish, then we obtain 
\be
\label{theta0finite2}
\theta^{\prime\,0}_{\rm ren}(0,x^j) =
\theta_{\rm finite}^{\prime\,0}(0,x^j),
\ee
so the pullback of the renormalized symplectic potential is just the
pullback of its finite piece.

A class of examples which includes vacuum general relativity at
null infinity (see Sec.\ \ref{expcalc} below) is when $L'_{\rm div}$ and
$\theta'_{\rm div}$ are finite polynomials in $\conf^{-1}$.
In this case the second term in Eq.\ (\ref{theta0finite1}) can be made to vanish by
choosing $\upsilon_0=\infty$.  The choice $\upsilon_0 = x^0 =
\infty$ seems at first glance to be problematic, since
the coordinates need only be defined for a finite range of values of $x^0$.
However, we can regard
the choice of value of $\upsilon_0$ as a specification of a
prescription for finding antiderivatives of the specific functions encountered
in Eqs.\ (\ref{translated}), so the inconsistency can be finessed.

A more general class of examples is obtained by allowing log terms,
which arise for example in asymptotically AdS spacetimes. 
Suppose that there exists some integer $t$ for
which $\conf^t L'$ and $\conf^t \theta'$ are smooth functions of $\conf$ and $\conf \log \conf$.
Then we can expand $L'$ and $\theta'$ as
\bes
\label{ex}
\bea
\label{e1}
\theta^{\prime\,0} &=& 
\sum_{p=-t}^{-1} \sum_{q=0}^\infty \theta^{0\,(p,q)} \conf^p ( \conf \log \conf)^q +
\theta^{\prime\,0}_{\rm
  finite}, \\
\label{e2}
\theta^{\prime\,i} &=& 
\sum_{p=-t}^{-1} \sum_{q=0}^\infty \theta^{i\,(p,q)} \conf^p ( \conf \log \conf)^q +
\theta^{\prime\,i}_{\rm
  finite}, \\
\label{e3}
{\cal L} &=&
\sum_{p=-t}^{-1} \sum_{q=0}^\infty {\cal L}^{(p,q)} \conf^p ( \conf \log \conf)^q +
{\cal L}_{\rm
  finite}, 
\eea
\ees
where the coefficients depend only on $x^i$ and not on $x^0 = \conf$.
The integrals in Eqs.\ (\ref{translated}) can conveniently be
evaluated by assuming similar expansions for the integrals and
equating coefficients of $\conf^p ( \conf \log \conf)^q$, which yields
recursion relations that can be solved.
This yields
\begin{eqnarray}
\Lambda^{i} &=&   \sum_{\substack{k=-t+1 \\ k\ne 0}}^\infty
\sum_{p=-t+1}^{{\rm min}(k,-1)} \sum_{j=-t}^{p-1} \frac{1}{k} \left\{\prod_{l=j+1}^{p-1}
\left(\frac{l}{k}-1\right) \right\}
  {\theta}^{i\,(j,k-j-1)}
\conf^p (\conf \log \conf)^{k-p}  \nonumber \\
&&- \sum_{p=-t}^{-1} \frac{1}{p}  {\theta}^{i\,(p,-p-1)}
\conf^p (\conf \log \conf)^{-p}
\label{ans1}
\end{eqnarray}
and 
\begin{eqnarray}
B^{0} &=&
- \sum_{\substack{k=-t+1 \\ k\ne 0}}^\infty
\sum_{p=-t+1}^{{\rm min}(k,-1)} \sum_{j=-t}^{p-1} \frac{1}{k} \left\{\prod_{l=j+1}^{p-1}
\left(\frac{l}{k}-1\right) \right\} {\cal L}^{(j,k-j-1)}
\conf^p (\conf \log \conf)^{k-p} 
\nonumber \\
&& + \sum_{p=-t}^{-1} \frac{1}{p}
{\cal L}^{(p,-p-1)} \conf^p (\conf
\log \conf)^{-p}.
\label{ans2}
\end{eqnarray}
Here we have effectively chosen $\upsilon_0=\infty$ for terms in
the integrands with $p+q<-1$, $\upsilon_0=1$ for $p+q=-1$, and
$\upsilon_0 =0$ for $p+q>-1$.
These choices again make the second term in Eq.\ (\ref{theta0finite1})
effectively vanish\footnote{Note that the condition $\delta L'_{\rm div} =
d \theta'_{\rm div}$ is not satisfied in this case due to mixing with
the finite terms, but one can
directly check that Eq.\ (\ref{theta0finite2}) is still valid.}, so
we again recover the result (\ref{theta0finite2}).

\subsubsection{Anomalies of renormalized symplectic form and Lagrangian}

Although the renormalized Lagrangian $L'_{\rm ren}$ and renormalized symplectic potential $\theta'_{\rm ren}$ are finite,
they are no longer covariant, assuming one with starts with covariant quantities.  In this section we will discuss
the corresponding anomalies, which are computed explicitly in
Appendix \ref{app:hr}.  This will yield
the second result described above, that 
the dependence of the pullback of the renormalized symplectic
potential on the background structures arises only through
corner and boundary terms.

We will show that the anomalies in $B$ and $\Lambda$ are of the form
\bes
    \label{conj}
    \bea
    \label{conj1}
    \Delta_{\hat \xi} B &=&     (\Delta_{\hat \xi} B)_{\rm finite} + d
    \kappa_\xi, \\
    \label{conj2}
    \Delta_{\hat \xi} \Lambda &=&     (\Delta_{\hat \xi} \Lambda)_{\rm finite} + \delta
    \kappa_\xi - d \mu_\xi- I_{\hat {\delta \xi}} \Lambda,
\eea
\ees
where the indicated quantities are finite and $\kappa_\xi$ and
$\mu_\xi$ are quantities that in general can diverge on the boundary.
Inserting these expressions into Eqs.\ (\ref{JKM11})
for the renormalized Lagrangian and symplectic potential, and acting
with the anomaly operator yields for field-independent symmetries
\bes
\bea
\Delta_{\hat \xi} L'_{\rm ren} &=& d ( \Delta_{\hat \xi} B)_{\rm finite}, \\
\Delta_{\hat \xi} \theta'_{\rm ren} &=& \delta ( \Delta_{\hat \xi} B)_{\rm
  finite} - d (\Delta_{\hat \xi} \Lambda)_{\rm finite}.
\eea
\ees
If we now take a pullback to the boundary, and use the fact that the
pullback operator commutes with the anomaly operator and the spacetime
and phase space exterior derivatives $d$ and $\delta$, we obtain
\be
\Delta_{\hat \xi}\p\theta'_{\rm ren} = \delta \, \pi_{0}^{\ *} ( \Delta_{\hat \xi} B)_{\rm
  finite} - d \, \pi_{0}^{\ *} (\Delta_{\hat \xi} \Lambda)_{\rm finite}.
\label{rightform}
\ee
Here on the right hand side we have denoted the pullback by
$\pi_{0\,*}$ instead of using our usual boldface notation.

The explicit expressions for the quantities appearing in
the anomalies (\ref{conj}) are as follows.  The finite pieces are given by
\bes
\bea
i_v ( \Delta_{\hat \xi} B)_{\rm finite} &=& 0, \\
\pi_{\upsilon}^* ( \Delta_{\hat \xi} B)_{\rm finite} &=&
\pi_{\upsilon_0}^* i_\xi L', \\
i_v ( \Delta_{\hat \xi} \Lambda)_{\rm finite} &=& 0, \\
\pi_{\upsilon}^* ( \Delta_{\hat \xi} \Lambda)_{\rm finite} &=&
-\pi_{\upsilon_0}^* \left[ i_v \theta' \lie_\xi \Phi \right].
\eea
\ees
In coordinate notation these relations are
\bes
\bea
\label{wq1}
( \Delta_{\hat \xi} B)^i_{\rm finite} &=& 0, \\
\label{wq2}
( \Delta_{\hat \xi} B)^0_{\rm finite}(\upsilon,x^j) &=&
(\xi^0 {\cal L})(\upsilon_0,x^j), \\
\label{wq3}
( \Delta_{\hat \xi} \Lambda)^{ij}_{\rm finite} &=& 0, \\
\label{wq4}
( \Delta_{\hat \xi} \Lambda)^i_{\rm finite}(\upsilon,x^j) &=&
-(\xi^0 \theta^{\prime\,i})(\upsilon_0,x^j), 
\eea
\ees
where we have decomposed the symmetry generator as ${\vec \xi} = \xi^0
\partial_0 + \xi^i \partial_i$.
The quantities $\kappa_\xi$ and $\mu_\xi$ are given by
\bes
\bea
i_v \kappa_\xi &=& 0,\\
\pi_{\upsilon}^* \kappa_\xi &=&
\int_\upsilon^{\upsilon_0} d {\bar \upsilon} \, \pi_{{\bar
    \upsilon}}^* i_\xi i_v L'
- i_{\pi_{\upsilon}^* \xi} \int_\upsilon^{\upsilon_0} d {\bar \upsilon} \, \pi_{{\bar
    \upsilon}}^* i_v L',\\
i_v \mu_\xi &=& 0,\\
\pi_{\upsilon}^* \mu_\xi &=&
\int_\upsilon^{\upsilon_0} d {\bar \upsilon} \, \pi_{{\bar
    \upsilon}}^* i_\xi i_v \theta'
- i_{\pi_{\upsilon}^* \xi} \int_\upsilon^{\upsilon_0} d {\bar \upsilon} \, \pi_{{\bar
    \upsilon}}^* i_v \theta'.
\eea
\ees
In coordinate notation these relations are
\bes
\label{kappamuans}
\bea
\label{kappaxic}
\kappa_\xi &=& \left( \xi^i \int_{\upsilon}^{\upsilon_0} {\cal
  L} \, d {\bar
  \upsilon}  
- \int_{\upsilon}^{\upsilon_0} 
\xi^i {\cal L} \, d {\bar \upsilon}
\right) \varpi_i,
\\
\mu_\xi &=& - \frac{1}{2} \left[ \xi^i
  \int_{\upsilon}^{\upsilon_0} \theta^{\prime\,j} \, d {\bar \upsilon}-
  \xi^j \int_{\upsilon}^{\upsilon_0} \theta^{\prime\,i} \, d {\bar \upsilon}
  - \int_{\upsilon}^{\upsilon_0} ( \xi^i \theta^{\prime\,j} - \xi^j
  \theta^{\prime\,i}) \, d {\bar \upsilon}
  \right] \varpi_{ij},
\eea
\ees
where the integrands are evaluated at $\upsilon= {\bar \varepsilon}$.

\subsection{Renormalization of corner terms}
\label{exparg1}

In this section we show that it is always possible to find
a corner canonical transformation (\ref{cct}) that makes the 
  the integrated action $S$, corner flux $\varepsilon$ as well as
  $\ell' + d c'$ finite.

The basic idea is the trivial integral identity, for any function
${\cal L}(u,v)$ of two variables $u,v$:
\begin{eqnarray}
\int_{u_0}^\infty d {\bar u} \int_{v_0}^\infty d{\bar v} {\cal
  L}({\bar u},{\bar v}) &=& \int_{u}^\infty d {\bar u} \int_{v}^\infty d{\bar v} {\cal
  L}({\bar u},{\bar v}) - \int_u^{u_0} d {\bar u} \int_v^\infty d{\bar
  v} {\cal L}({\bar u},{\bar v})
\nonumber \\ &&
- \int_u^\infty d {\bar u} \int_v^{v_0} d{\bar
  v} {\cal L}({\bar u},{\bar v})
+ \int_u^{u_0} d{\bar u} \int_v^{v_0} d{\bar v} {\cal L}({\bar
  u},{\bar v}).
\end{eqnarray}
Here the first term on the right hand side will be the bulk action,
which can diverge as $v \to 0$ or $u \to 0$ (we assume that the
integrals are finite at large $u$ and $v$). The second and third terms
on the right hand side are boundary terms that are added at the
boundaries $u  = $ constant and $v = $ constant
(Eq.\ (\ref{translated}) above).  Finally the fourth term is the corner
term that when added makes the total integral manifestly finite in the
limit $u\to 0$, $v \to 0$, as can be seen from the left hand side.

We now translate this idea into a covariant notation and add the
additional dimensions which were suppressed in the above argument.
Consider two boundaries ${\cal N}$ and ${\tilde {\cal N}}$ which
intersect in a $(d-1)$-surface ${\cal C}$.  In Sec.\ \ref{exparg} above
above we introduced a 
vector field ${\vec v}$ and a diffeomorphism $\pi_v$ which moves
points $v$ units along integral curves of ${\vec v}$. We also
introduced a coordinate $v$ which vanishes on ${\cal N}$ and for which
$v^a \nabla_a v = 1$.

We now introduce the additional background structure of a nowhere
vanishing vector field ${\vec u}$ on the boundary ${\cal N}$.  We can
then introduce a coordinate
$u$ on ${\cal N}$ by $u=0$ on ${\cal C}$ and $u^a \nabla_a u=1$ on
${\cal N}$.  We can then extend the definitions of ${\vec u}$ and $u$
off ${\cal N}$ by Lie transporting with respect to ${\vec v}$.
Finally we define the diffeomorphism ${\tilde \pi}_u$ to be the map
that moves points $u$ units along integral curves of ${\vec u}$.

Using this notation, the boundary term $B$ associated with the
boundary ${\cal N}$ is given by Eqs.\ (\ref{bprimelambdaprime}) and (\ref{translated}) above.
There is an analogous boundary term ${\tilde B}$ for the boundary
${\tilde {\cal N}}$, given by
\bes
\label{bprimelambdaprime1}
\bea
i_u {\tilde B} &=& 0, \\
{\tilde \pi}_{u}^* {\tilde B} &=&  \int_{u}^{u_0} d{\bar
  u} \, {\tilde \pi}_{{\bar u}}^* i_u L'.
\eea
\ees
Finally the corner term $c$ on ${\cal C}$ required to make the total
action finite is
given by
\bes
\label{bprimelambdaprime2}
\bea
i_u c &=& 0, \\
i_v c &=& 0, \\
{\tilde \pi}_{u}^* \pi_{v}^*
c &=&
\int_{u}^{u_0} d{\bar  u}
\int_{v}^{v_0} d{\bar  v}
\, {\tilde \pi}_{{\bar u}}^* \, \pi_{{\bar v}}^*
i_u i_v L'.
\eea
\ees

\section{Vacuum general relativity at future null infinity: gravitational charges}
\label{expcalc}

In this section, to explicitly demonstrate the holographic
renormalization procedure described in section~\ref{exparg}, 
we specialize to the case of future null infinity in vacuum general relativity in four dimensional
asymptotically flat spacetimes, and
 to the generalized BMS (GBMS) field configuration space \eqref{eq:fcs1}. We then compute the
renormalized symplectic potential and the localized charge using the results laid out in 
section~\ref{exparg} and section~\ref{sec:charges}, with the conformal factor and rigging vector field
taken to be those associated with Bondi-Sachs coordinates. 
The results obtained in this section 
agree with expressions for the charges obtained previously in,
e.g., \cite{Compere:2018ylh, FN}, explicitly demonstrating the utility of the 
holographic renormalization algorithm described in section \ref{exparg}. 

We identify the coordinate system  discussed in section~\ref{exparg} 
with 
the Bondi-Sachs coordinates according to
$(x^0,x^1,x^2,x^3)=(\conf,u,\theta,\phi)\equiv(\conf,u,x^{A})$. 
The physical metric in these coordinates corresponds to the line element \cite{Compere:2018ylh, FN} \footnote{We use the symbol ``$F$'' here in place of the symbol $\beta$ that is used in \cite{Compere:2018ylh,FN} to avoid confusion with our symbol for the corner term in decomposition of the symplectic potential [as in \eqref{eqn:thdecomp}].} 
\be \label{eq:phys-line-element}
d \tilde{s}^{2} =-U e^{2 F} d u^{2}+2 e^{2 F} \conf^{-2} d u\, d
\conf+\conf^{-2} h_{A B}\left(d x^{A}-U^{A} d u\right)\left(d
x^{B}-U^{B} d u\right).
\ee
Here we have used $\conf = 1/r$ in place of the radial
coordinate $r$ used in \cite{Compere:2018ylh, FN} (see, e.g, the discussion in \cite{GPS} for more
details on the construction of the Bondi-Sachs coordinate system in the physical spacetime). In these coordinates, $\conf=0$ corresponds to $\scp$, and the null generator of $\scp$ is given by $n^{i} \hateq (\partial_{u})^{i}$. Moreover, in these coordinates, we have a foliation of $\scp$ by cross-sections of constant $u$. The 1-form on $\scp$ normal to this foliation is given by $l_{i}:=-\nabla_{i} u$.  The functions appearing in \eqref{eq:phys-line-element} are smooth in their dependence on $(\conf, u, x^{A})$ and their expansions in powers of $\conf$, after imposing the Einstein equations, are 
given by \cite{Compere:2018ylh}
\begin{subequations} 
\begin{align}\label{eq:bondi-metric-exp}
U&=\frac{1}{2} \mathcal{R}-2 \conf M+O\left(\conf^{2}\right)\,, \\
F &=-\frac{1}{32} \conf^{2} C^{A B} C_{A B}+O\left(\conf^{3}\right)\,, \\
U^{A} &=-\frac{1}{2} \conf^{2} \mathscr{D}_{B} C^{A B}+2 \conf^{3} L^{A}+O\left(\conf^{4}\right)\,, \\
h_{A B} &=q_{A B}+\conf C_{A B}+\frac{1}{4} \conf^{2} q_{A B} C^{C D} C_{C D}+O\left(\conf^{3}\right) \label{eq:sphere-met-expansion}\,,
\end{align}
\end{subequations}
where $\mathcal{R}$ denotes the  Ricci scalar of the leading order
sphere metric $q_{AB}$, and $\sqrt{q}$ denotes the square root of its
determinant. In addition, $\ms{D}_{A}$ is the derivative operator
compatible with $q_{AB}$. Moreover, $C_{AB}$ is ($-2$ times) the shear associated with the auxiliary normal $l^{a}:=-\conf^{2} \tilde{g}^{ab}\nabla_{a} u$  and satisfies $q^{AB} C_{AB} =0$ as well as $\delta (q^{AB}
C_{AB})$ for all perturbations. Furthermore, $M$ denotes the Bondi
mass aspect and $L^{A}$ is related to the angular momentum
aspect.\footnote{The angular momentum aspect $N_{A}$, defined in
\cite{FN}, is related to $L_{A}$ by $N_{A} = -3 L_{A} + \frac{3}{32}
\ms{D}_{A} (C_{BC} C^{BC}) + \frac{3}{4} C_{A}{}^{B} \ms{D}^{C}
C_{BC}$.} Note also that it follows from \eqref{eq:fcs1} that $\delta n^{i}= \delta \sqrt{q} =0$. Finally, capital Roman indices are raised and lowered using
$q_{AB}$ throughout this section.\\

Next, we compute the symplectic potential $\theta'$, which we take to be the spacetime covariant one given in Eq.\ (39) of Ref.\ \cite{Wald:1999wa} and so here  $\theta' = \cov\theta$. We will also take $L'$ to be the covariant Einstein-Hilbert Lagrangian, so $L' = \cov L$.  We make use of the Einstein equations for the background metric, the linearized Einstein equations for the perturbations and the Bondi condition (\ref{bondicondition}). We find that the symplectic potential diverges as $\conf^{-2}$ and that there are no logarithmic divergences. In particular, the divergent pieces  are given in the notation of (\ref{ex}) by \footnote{We omit writing the explicit expressions for $\theta_{A}^{(-2,0)}$ and $\theta_{A}^{(-1,0)}$ since they will not be needed for the explicit charge calculation later in this section.} (setting $16\pi G = 1$) 

\begin{align} \label{eq:thetadiv}
    \theta_{0}^{(-2,0)}&=0\,, \, \,\,\,\,\,\,\,\quad \quad \theta_{0}^{(-1,0)}=\sqrt{q} \big(-\delta \mathcal{R}- \frac{1}{2} N_{AB} \delta q^{AB}\big)\,, \notag \\
    \theta_{1}^{(-2,0)} &= - \frac{\sqrt{q}}{2} C_{AB} \delta q^{AB}\,, \quad  \theta_{1}^{(-1,0)}=0\,.
\end{align}
 Note that here  $\delta q^{AB}$ is the variation of the inverse metric, $q^{AB}$, and not
$q^{AC} q^{BD}\delta q_{CD}$. In addition, we have
\begin{equation} \label{eq:finpiece}
    \theta_{0}^{(0,0)} = \sqrt{q} \big[2 \delta M\, +  2 \partial_{u} \delta Z\,  + \delta (\ms{D}_{A} \ms{U}^{A})\, +\, \frac{1}{2} N_{AB} \delta C^{AB} -\frac{1}{4} \mathcal{R} C_{AB} \delta q^{AB} - \ms{D}_{A} \ms{U}_{B} \delta q^{AB}\big]\,.
\end{equation}
Note also that here
\begin{equation}\label{eqn:NABdef}
  N_{AB}:=\partial_{u} C_{AB}\,, \,  Z:=-\frac{1}{32} C_{AB} C^{AB}\,, \, \ms{U}^{A}:=-\frac{1}{2} \ms{D}_{B} C^{AB} \,.
\end{equation}
 We now consider the renormalization of the symplectic potential. Using the expression \eqref{ans1} specialized to $t=2$, $d=3$ with the coefficients of the logarithmic terms taken to vanish, we have that
 \begin{align} \label{eq:lambdapvacGR}
\Lambda &=   -\left[ \log \conf  \, \theta^{(-1,0)}_1 -  \conf^{-1} \theta_1^{(-2,0)} \right]
  dx^2 \wedge dx^3 -  \left[ \log \conf  \, \theta^{(-1,0)}_3 -  \conf^{-1} \theta_3^{(-2,0)} \right]
  dx^1 \wedge dx^2  \notag \\
 & - \left[ \log \conf  \, \theta^{(-1,0)}_2 -  \conf^{-1} \theta_2^{(-2,0)} \right]
  dx^3 \wedge dx^1.
\end{align}
Note that since we have taken $L'= \cov{L}$ and $\theta'= \cov{\theta}$, in carrying out the renormalization, we pick $\Lambda=-\lambda'$ and $B=b'$ [see \eqref{eqn:covL}, \eqref{eqn:covth} and compare with \eqref{Lren}, \eqref{thetaren}]. Note also that in vacuum general relativity with zero cosmological constant, the Lagrangian vanishes on shell. For that reason, it does not need to be renormalized, and so we take $b'= B=0$. Using the explicit expressions for the (unrenormalized) symplectic potential along with the linearized Einstein equations to compute $\theta'_{\text{ren}} = \theta' - d\Lambda$ [\eqref{thetaren} with $B=0$], we find that the effect of adding $d\Lambda$ is to cancel the diverging pieces in each component of $\theta'$ while leaving the finite pieces unchanged. Moreover, the pullback of the renormalized symplectic potential to $\scp$ is given by (\ref{eq:finpiece}).\\

Note that our expression for the symplectic potential and the subsequent renormalization procedure, when implemented in Bondi coordinates, coincide with those in \cite{Compere:2018ylh}. Note also that even though we have demonstrated our renormalization procedure for  conditions that correspond to the generalized BMS configuration space, subject to the Bondi condition, the procedure itself is completely general and can be applied to any of the extensions of the BMS algebra discussed in section~\ref{sec:compendium}, with or without the Bondi condition. It is guaranteed to work in any of these cases using the general algorithm described in section \ref{exparg}.\\

Having obtained an expression for the pullback of the (renormalized) symplectic potential, we now seek to obtain a decomposition of it into a boundary term, a corner term, and a flux term in keeping with (\ref{eqn:thdecomp}), that is, a decomposition of the form

\be \label{eq:theta-decom-2}
\underline{\theta}'_{\text{ren}} \hateq -\delta \ell' + d\beta' + \beom\,.
\ee
Comparing this with \eqref{eq:finpiece}, suggests the following choice for the flux term $\beom$
\be \label{eq:flux-in-Dirichlet}
\beom =  -\eta\bigg[\frac{1}{2} N_{AB} \delta C^{AB} -\frac{1}{4} \mathcal{R} C_{AB} \delta q^{AB} -\ms{D}_{A} \ms{U}_{B} \delta q^{AB}\bigg]\,,
\ee
where $\eta$ is the volume element on $\scp$ given by $\eta = -\sqrt{q} du \wedge d\theta \wedge d\phi$. Moreover, we can read off that
\be \label{eq:bplp}
\ell'=\eta \bigg[ 2 M + 2 \partial_{u} Z +\ms{D}_{A} \ms{U}^{A}\bigg]\,, \quad  \beta' = 0\,,
\ee
where we have used the fact that $\partial_{u} \delta Z = \lie_{n} \delta Z = \delta \partial_{u} Z$ since $\delta n^{i} =0$.

Note that while $C_{AB}$ is not an intrinsic quantity on $\scri^{+}$, its variation still occurs in our expression for the flux in \eqref{eq:flux-in-Dirichlet}.
This appears to be at odds with the Dirichlet form of the flux advocated for in this 
work, since $C_{AB}$ is related to the extrinsic geometry of $\scp$ with respect to the 
auxiliary null direction $l^a$. In asymptotically dS or AdS spaces, the equations
of motion allow one to solve for $C_{AB}$ in terms of the leading metric $q_{AB}$ at
$\scp$
\cite{Poole:2018koa, Compere:2019bua, Barnich2010}, and hence flux terms involving $\delta C_{AB}$ 
are still consistent with Dirichlet form.  This is no longer the case in
asymptotically flat spacetimes, in which $C_{AB}$ represents free data on 
$\scp$.  Nevertheless, from \eqref{eq:sphere-met-expansion} we see that 
$C_{AB}$ is a subleading component of the spherical part of the metric, $h_{AB}$, 
which is an intrinsic quantity on each $\Phi=\text{const.}$ surface, which limit
to $\scp$.  It is therefore not entirely surprising that $C_{AB}$ appears as a configuration
variable in the expression for the flux.  
Furthermore, the news tensor $N_{AB}$ that appears conjugate 
to $C^{AB}$ in the expression of the flux is given by the 
$u$-derivative of $C_{AB}$ according to 
(\ref{eqn:NABdef}), as one would expect of a momentum
variable, lending additional support to interpreting 
(\ref{eq:flux-in-Dirichlet}) as the appropriate analog of the 
Dirichlet form of the flux.
An interesting question for future work would be to understand better the principle 
for selecting a preferred form of the flux for asymptotically null surfaces, rather
than postulating the form as is done in this section.

We now proceed to calculate 
\be \label{eq:integratedcharge}
H_\xi = \lim_{S'\to S} \int_{S'} h_\xi\,,
\ee
where, $S'$ here denotes $u=const$ cross-sections  of a one-parameter family of $\conf=const$ surfaces that limit to $\scp$ in the unphysical spacetime. As denoted above, to define the charge, $H_{\xi}$, on a cross-section, $S$, of $\scp$, we will perform this integral and then take the limit  $S'\to S$. We calculate (\ref{eqn:Hxi}),
where $\xi^{a}$ for a generalized BMS vector field is given by \cite{Compere:2018aar}

\begin{align} \label{eq:vec-field}
\xi^{a} &= f \partial_{u} + \bigg[Y^{A}- \conf \ms{D}^{A} f + \frac{1}{2} \conf^{2} C^{AB}\ms{D}_{B}f + O(\conf^{3})\bigg]  \partial_{A} \notag\\
&+ \conf^{2}\bigg[\frac{1}{2} \conf^{-1} \ms{D}_{A} Y^{A} -\frac{1}{2} \ms{D}^{2} f - \frac{1}{2} \conf \ms{U}^{A} \ms{D}_{A} f + \frac{1}{4} \conf \ms{D}_{A} (\ms{D}_{B} f C^{AB}) + O(\Phi^{2})\bigg]\partial_{\conf}\,.
\end{align}
Here $f(u,x^{A})= \gamma(x^{A}) + \frac{1}{2} u \ms{D}_{B} Y^{B}(x^{A})$ where $\gamma(x^{A})$ is the supertranslation function and $Y^{A}$ is the generator of arbitrary smooth diffeomorphisms on $S^{2}$. Moreover, $Q_{\xi}'$ is given by \eqref{eq:Qprimexi} and $\mcov Q_{\xi}$ in vacuum general relativity is given by (where recall that we have set $\frac{1}{16 \pi G}=1$) 
\be
\mcov {Q_{\xi}}_{ab} = -\tilde{\epsilon}_{cdab} \tilde{\nabla}^{c} \xi^{d}\,.
\ee
Using \eqref{eq:thetadiv}, \eqref{eq:lambdapvacGR}, \eqref{eq:vec-field}) and the fact that 
\be
I_{\hat{\xi}} \delta q^{AB}=  - \ms{D}^{A} Y^{B} - \ms{D}^{B} Y^{A} + q^{AB} \ms{D}_{C} Y^{C}\,,
\ee
we find that
\begin{align} \label{eq:contr-lambdap}
I_{\hat{\xi}} \lambda'= I_{\hat{\xi}} \Lambda= -\conf^{-1} \, \mu\, C_{AB} \ms{D}^{A} Y^{B} + \cdots \,,
\end{align}
where $\cdots$ denotes terms that vanish upon pullback to $S'$ and are hence not relevant for the calculation of the charge. Note also that $\mu = -i_{n} \eta = \sqrt{q}\, d\theta \wedge d\phi$. Then, explicitly calculating $\mcov Q_{\xi}$, one finds that its pullback to $S'$ has a piece that diverges as $\Phi \to 0$ that is given by $\Phi^{-2} \mu\, \ms{D}_{A} Y^{A}-\Phi^{-1}\, \mu\, Y^{A} \ms{D}^{B} C_{AB}$. The first term is a total derivative which drops out of the integral over $S'$. Moreover, the second term cancels with \eqref{eq:contr-lambdap} up to a total derivative term. We therefore see that upon integrating over $S'$, the diverging piece drops out of $\int_{S'}Q_{\xi}'$. Taking the limit $S'\to S$ we then obtain
\be
\int_{S} Q_{\xi}'= -\int_{S} \mu \bigg[2 f \partial_{u} Z - 2 f M - \ms{U}^{A} \ms{D}_{A} f - 2 Y^{A} \{N_{A} -\frac{1}{4} C_{A}{}^{B} \ms{D}^{C} C_{BC} -\frac{1}{16} \ms{D}_{A} (C_{BC} C^{BC})\}\bigg]\,.
\ee
Using this in addition to \eqref{eq:bplp} and \eqref{eq:vec-field} to compute \eqref{eqn:Hxi} and dropping total derivative terms, we find that the final expression for $H_{\xi}$ is given by

\be \label{eq:finalGBMScharge}
H_{\xi} = \int_{S} \mu \bigg[  4 f M + 2 Y^{A} \{N_{A} -\frac{1}{4} C_{A}{}^{B} \ms{D}^{C} C_{BC} -\frac{1}{16} \ms{D}_{A} (C_{BC} C^{BC})\}\bigg]\,.
\ee

This expression is the same as the one derived for the (usual) BMS charge in, for example, \cite{FN,GPS,Barnich:2011mi} and is also consistent with the expression for the charge given in (9.21) of \cite{freidel2021weyl}. It was pointed out in \cite{Compere:2018ylh} that this expression diverges in limits to the end-points of $\scp$ (i.e as $u \to \pm \infty$) when one allows for the most general fall-offs in $u$ of  $C_{AB}$: $C_{AB} = O(u)$, that are compatible with the action of the GBMS group on the boundary fields. To cure these ``corner'' divergences, one would have to implement an additional  renormalization step, similar in spirit to the one discussed in section \ref{exparg}. Presumably, one would have to add to the expression for $\Lambda$ in \eqref{eq:lambdapvacGR} terms that are finite as $\conf \to 0$ but which diverge as $u \to \pm \infty$. This would modify the expression for $\beta'$ after which one would have to pick an expression for $\upsilon$ (see \eqref{eq:betpdecomp} and the discussion around it)  which would lead to a different expression for the charge, $H_{\xi}$. However, addressing this issue is beyond the scope of the present paper and we leave it to future work. Indeed, it would be interesting to carry out these steps to attempt to derive the GBMS charge expression in (5.49) of \cite{Compere:2018ylh} which does not have the aforementioned divergences.

We note that the decomposition we picked in \eqref{eq:bplp} was not unique even after having picked the expression for the flux, $\beom$, which we take to be given by \eqref{eq:flux-in-Dirichlet}. Instead of the choice made in \eqref{eq:bplp}, one could instead have picked
\be
\ell'=\eta \bigg[2 M + \ms{D}_{A} \ms{U}^{A}\bigg]\,, \, \,\beta' = 2 \delta Z\mu \,.
\ee
Also, because of the Bondi condition, $(\ms{D}_{A} \ms{U}^{A}) \eta = d(i_{\ms{U}} \eta)$ and therefore, one could also consider a decomposition of \eqref{eq:finpiece} whereby
\be
\ell' =\eta \bigg[2 M +2\partial_{u}Z\bigg]\,, \, \,\beta' = -\delta ( i_{\ms{U}} \eta) \,.
\ee
where we have defined a vector $\ms{U}^{i}$ on $\scp$ such that  $\ms{U}^{i} l_{i} =0$ and $\ms{U}^{A} = \ms{U}^{i} e_{i}{}^{A}$ where $e_{i}{}^{A}$ is a projector onto angular directions. Finally, one could also consider
\be
\ell'= 2 \eta M\,, \, \,\beta' = -\delta ( i_{\ms{U}} \eta - 2 Z \mu) \,.
\ee

  To resolve the ambiguity between these choices, as described in section.~\ref{sec:corners}, one needs to implement a corner improvement
where one looks for a decomposition of $\beta'-\lambda'$ of the form [see \eqref{eqn:bldecomp}]
\be \label{eq:betpdecomp}
\beta'-\lambda'= -\delta c' + d\gamma '+ \varepsilon  \,.
\ee
As described in \eqref{eqn:hxiimproved}, the improved expression for the charge density is given by 
$\tilde h_{\xi} =  h_{\xi} - \Delta_{\hat{\xi}} c'$.
To get unambiguous charges,\footnote{The $\gamma'$ term in \eqref{eq:betpdecomp} only contributes an exact piece to the charge density and therefore, in the present context, its choice does not effect the charge. We therefore pick $\gamma'=0$ here for convenience.}
one needs to fix an expression for $\varepsilon$ in addition to the expression for $\beom$ which we fixed to be given by \eqref{eq:flux-in-Dirichlet}. Here, we pick $\varepsilon=-\lambda'$.\footnote{Note from \eqref{eq:lambdapvacGR} that $\lambda'=\Lambda$ is actually divergent on $\scp$ and so really this decomposition is done on a cutoff surface \emph{near} $\scp$ after extending $\beta'$ arbitrarily away from $\scp$. Obtaining a finite corner flux \emph{on} $\scp$ would entail a more careful analysis of corner terms in the action of vacuum general relativity which we leave to future work.}\textsuperscript{,}\footnote{Note also that to ensure finiteness of the charge in the $u \to \pm \infty$ limits described earlier (an issue we have chosen to ignore here), one would need to pick a different expression for $\varepsilon$. Presumably, this would follow from a more careful analysis of the corner terms in the action and the resulting charge expression \textit{will} indeed be modified in that case.} It is then easy to check that calculating the charge in the same way as before but with $h_{\xi}$ replaced with $\tilde h_{\xi}$, the final charge expression remains unchanged and is (still) given by \eqref{eq:finalGBMScharge}. This demonstrates that as highlighted earlier in the paper, fixing an expression for the flux terms in the problem, on the boundary as well as the corners, ($\beom$ and $\varepsilon$ in this case), gives us an unambiguous expression for the charge.

\section{Discussion}
\label{sec:discussion}

We conclude with a discussion of a number of future directions and applications 
of the present work.

\subsection{More general asymptotic symmetries}

The holographic renormalization argument presented in section
\ref{sec:hr} demonstrates that all asymptotic charges can be rendered 
finite once appropriate counterterms have been found to produce 
a finite renormalized gravitational action.  This holds without
imposing asymptotic boundary conditions on the dynamical fields, and hence motivates
exploring formulations of the theory in which the standard boundary
conditions are relaxed.  Indeed, the arguments of section \ref{sec:hr}
were inspired by similar considerations for asymptotically anti-de Sitter
spacetimes \cite{Compere:2008us} in which the standard Dirichlet boundary
condition was relaxed. This produces an enlarged asymptotic symmetry
algebra for these spacetimes, which have been 
further explored in the works on the $\Lambda$-BMS group 
\cite{Compere:2019bua, Compere:2020lrt}.  
In the past, finiteness of the action and charges has been
suggested as a reason for selecting  boundary conditions for the theory, but 
the analysis of the present work suggests that this is unnecessary, since 
finiteness can instead be achieved through holographic renormalization.  
The only reason for imposing boundary conditions should be to obtain a 
well-defined variational principle, or, equivalently, to ensure the phase 
space describes a closed system that does not lose symplectic flux through its boundary.

Relaxing the standard boundary conditions of four dimensional asymptotically
flat spacetimes leads to the enlarged symmetry algebras discussed in section
\ref{sec:compendium}.  Each of the symmetry groups described there still fixes some
structure at null infinity, but since holographic renormalization applies in the 
absence of any such boundary condition, it is tempting to propose an even more general
set of symmetries.  These would be obtained by relaxing the final condition
leading to the Weyl BMS configuration space (\ref{eq:fcs11}), namely, not imposing
$n^i$ be fixed.  We would expect to obtain in this manner all diffeomorphisms of 
$\scp$ as asymptotic symmetries, and it would be interesting to compute 
expressions for the associated charges.\footnote{The appearance of 
$\text{Diff}(\scri)$
has also been suggested to appear in the context of asymptotically de-Sitter and 
anti-de Sitter spaces in \cite{Ashtekar:2014zfa}.}  
The enlarged algebra may also be related to the extended symmetries 
of finite null surfaces explored in \cite{Adami:2020amw}.

Another context in which  extended symmetries can arise
is in higher dimensional asymptotically flat spacetimes.  In higher than four 
spacetime dimensions, there exist consistent boundary conditions that eliminate
the supertranslations as asymptotic symmetries.  However, in light of the relation
between supertranslations and the Weinberg soft graviton theorems 
\cite{Weinberg:1965nx, Strominger:2013jfa, He:2014laa}, which hold
in all dimensions, it is desirable to find a phase space in higher dimensions that 
admits a nontrivial action of supertranslations.  Such relaxed boundary conditions
have been explored in 
\cite{Kapec:2015vwa, Pate:2017fgt, Campoleoni:2020ejn}, 
and the general holographic renormalization argument suggests that 
a phase space can be constructed in which these transformations produce finite 
charges.\footnote{These relaxed boundary conditions have been questioned in
\cite{Hollands:2016oma} on the grounds of not leading to 
finite fluxes through $\scp$, but such divergences can always be handled by
the procedure of holographic renormalization, at the expense of introducing
some dependence on a background structure. 
See, for example, \cite{Aggarwal:2018ilg}.}
It would be interesting to carry out the holographic renormalization procedure 
in these higher dimensional cases and to construct the phase space on which the 
renormalized BMS charges are defined, as well as to 
obtain charges associated with higher dimensional
versions of the symmetry algebras described in section \ref{sec:compendium}.
Some ideas in this direction have been explored in \cite{Colferai:2020rte,
Fiorucci:2020xto, Capone:2019aiy, Capone:2021ouo}.

A final application would be to investigate the recently proposed $w_{1+\infty}$ 
symmetry of 4D asymptotically flat gravity, which was derived at the level
of celestial amplitudes
\cite{Strominger:2021lvk}.  An interesting question to address is whether the charge
generators of this algebra arise from asymptotic diffeomorphisms, to give 
a spacetime interpretation of the symmetry transformations.  The holographic
renormalization procedure in the present work provides an ideal framework for
investigating this question.

\subsection{Gluing and quantization}
\label{sec:gluing}

One of the main motivations for considering localized charges is to understand 
the embedding of the localized phase spaces and their observables into the global
phase space of the theory.  In the classical context, understanding this embedding
can help give meaning to quasilocal notions of energy, which are relevant in 
practice since astrophysical processes are usefully described using local descriptions
of objects' locations and momenta, despite the fact that local observables are
nonperturbatively ill-defined in a diffeomorphism-invariant theory.  
There is a natural construction known as Marsden-Weinstein symplectic reduction 
\cite{Marsden:1974dsb}
by which localized phase spaces can be assembled into
a global phase space, ensuring in the process that the localized charges 
become trivial, as would be expected for charges associated with 
a gauge symmetry.  
This application of symplectic reduction to the gluing of local phase spaces
was discussed in the work of Donnelly and Freidel \cite{Donnelly2016a}.
The idea is to take two adjacent localized
phase spaces $\ps_1$ and $\ps_2$, each containing a set of charges $H_\xi^i$, $i=1,2$, 
associated with diffeomorphisms that act at their common boundary.  One then
constructs the product phase space $\ps_{12} = \ps_1 \times \ps_2$, which 
also admits an action of the boundary symmetry, generated 
by the sum of the  the individual charges, $H_\xi^\text{tot}
= H_\xi^1 + H_\xi^2$.
The reduced phase space is obtained by then restricting to the submanifold of zero
total charge $H_\xi^\text{tot} =0$, and further quotienting by the flow generated by
the charges within this submanifold.  This two-step process results in 
a new phase space $\ps_\text{red} = \ps_{12}/\!/G$, with $G$ the group
of boundary symmetries.  The fact that the boundary symmetries should act
trivially on the global phase space is now encapsulated by the restriction to the 
zero charge submanifold and further quotienting by the group action.  This process
thus gives a way of realizing the individual phase spaces $\ps_i$ within the 
global phase space, although they are not symplectic submanifolds due to the 
quotient procedure needed in the construction.

The work of Donnelly and Freidel focused on 
symmetries in general relativity
that preserve the codimension-2 boundary of a subregion Cauchy
surface \cite{Donnelly2016a}.
Such symmetries are simpler
to handle since the flux term in Hamilton's equations 
(\ref{eqn:IxiOm}) identically vanishes (assuming covariant $\beta'-\lambda'$ and 
field-independent generators), and 
the Wald-Zoupas procedure is not needed in order to construct localized charges.  The more 
general localized charges considered in the present work are defined even when there are 
nonzero fluxes, and in section \ref{sec:algebra} 
we showed that their Poisson brackets on the 
localized phase space are given by the BT bracket, which reproduces the 
diffeomorphism algebra of the vector fields (or a suitable modification when generators
are field-dependent) whenever the extension term $K_{\xi,\zeta}$ can be shown to vanish.  
This is enough to apply the Marsden-Weinstein reduction procedure, since the localized charges
generate an action of the boundary symmetry group on the localized phase space, even though
this action does not generically act like a diffeomorphism on all observables, due to the 
failure of such a transformation to satisfy Hamilton's equation.  
It would be very interesting to carry out this procedure in more detail in order to better
understand the relevance of localized charges within the full global phase space.

An even more interesting question is to understand how to apply the reduction in the case
of nonvanishing extension terms in the algebra of localized charges,
as in equation (\ref{eqn:HHbrack}).  The 
extensions $K_{\xi,\zeta}$ represent additional independent charges,
and together with the $H_\xi$ generators they produce an algebra that is larger than the 
original set of boundary symmetries.  There is a question of how to interpret these additional
charges, and how to interpret the reduction with respect to the additional generators.  
The mathematical machinery for handling such situations is called symplectic reduction
by stages \cite{Marsden2006}, and 
it would be worth investigating 
whether the reduced phase space obtained using this procedure reproduces the expected
global phase space.  

Another major motivation for carrying out this reduction procedure is in the applications
to the quantum theory of subregions in a gravitational theory.  
There is an analogous procedure to Marsden-Weinstein reduction 
whereby the physical Hilbert space $\mathcal{H}_\text{phys}$ is realized
as a subspace of the tensor product $\mathcal{H}^1\otimes \mathcal{H}^2$,
where $\mathcal{H}^i$ are the Hilbert spaces constructed via quantization
of the localized phase spaces $\ps^i$\cite{Donnelly:2014gva, Donnelly2016a}.  
This subspace is defined as the zero charge
eigenspace associated with the boundary symmetries in the localized phase spaces,
and restricting to this physical Hilbert space has the interpretation of imposing the 
constraints associated with diffeomorphism invariance.  There are a number of results 
beginning with the works of Guillemin and Sternberg that show in certain situations that 
the process of quantization commutes with symplectic reduction \cite{Guillemin1982}.  
Hence, we should expect that the localized 
phase spaces $\ps^i$ provide useful semiclassical descriptions of the local
Hilbert spaces $\mathcal{H}^i$.  

These local phase spaces are important when addressing questions regarding entanglement
entropy for subregions in gravitational theories and the entropy associated with 
black hole horizons.  It has long been appreciated that black holes possess an entropy
proportional to their area 
\cite{Bekenstein:1972tm, Bekenstein:1973ur, Hawking:1974rv}, and in a variety of contexts, this entropy
can be usefully interpreted as entanglement entropy 
\cite{Sorkin:2014kta, Bombelli1986, Srednicki1993a, Frolov1993}.  
Even more generic subregions 
in gravity are expected to possess a finite entropy
\cite{Susskind:1994sm, Jacobson:1994iw, Bousso:2015mna};
for example, in holography, subregions bounded by extremal codimension-2 surfaces 
have an entropy given by the Ryu-Takayanagi formula, which is interpreted
as the entanglement entropy of a subregion of the boundary conformal field theory
\cite{ Ryu:2006bv}.  The construction of localized Hilbert spaces as described above
is then crucial for giving a bulk Hilbert space interpretation of this entropy.  
The larger Hilbert space $\mathcal{H}^1\otimes\mathcal{H}^2$ in which the 
physical Hilbert space is embedded is known as the extended Hilbert space, and 
contains additional edge mode degrees of freedom that contribute to the entanglement entropy
\cite{Donnelly:2014gva}.   These edge modes can be viewed as objects charged under 
the boundary symmetries considered in this present work, and hence the localized charges 
play a central role in characterizing edge mode degrees of freedom.  
In some cases, considerations of boundary symmetries can in fact be shown to determine the 
entropy given some reasonable assumptions on the quantization of the localized phase space.
The best examples of this often involve a set of Virasoro symmetries or a related centrally
extended algebra acting on a Killing horizon 
\cite{Strominger1998, Carlip_1999, Haco:2018ske, Chen:2020nyh, Chandrasekaran:2020wwn}.  In this case, 
the quantization is conjectured to involve a CFT, and the Cardy formula for such a theory
then is able to reproduce the Bekenstein-Hawking entropy of the horizon.  
It is interesting that the central extension in these examples seems to play an 
important role in determining the entropy, and this may be related to 
interesting properties of the reduction procedure for algebras involving nonzero
extensions.

\subsection{Corner improvements} \label{subsec:corner-imprv}

In section \ref{sec:corners}, we described an additional correction that must be added to 
the localized charges to arrive at an expression that is fully invariant
under the extra ambiguities mentioned in that section.  This correction was first 
described in appendix C of \cite{Chandrasekaran:2020wwn}, and the present work generalizes
the proposal to allow for noncovariances in $L'$ and $\theta'$.  As mentioned
in the text, the correction to the charge density involve a quantity $c'$ which appears
as a contribution to the subregion action from codimension-2 corners.  
Note there are additional questions involving the precise relation between the 
full corner contribution to the action and the $c'$ appearing in the charge, 
since,
as discussed in footnote \ref{ftn:corneraction}, there are independent contributions
to the corner action coming from the boundary of each hypersurface $\ns^\pm$ ending 
at the corner.  Spelling out the precise relation between these contributions to the 
action and the localized charges would be an interesting future direction to explore.  

Ambiguities of the type described in section \ref{sec:corners} arose in the 
construction of GBMS charges in section \ref{expcalc}, where there could have been other possible choices for the form of the corner flux than the one we picked.  
It would be interesting to relate the choice made there to 
a more careful analysis of boundary terms needed to obtain a finite variational principle
for subregions bounded by $\scp$, and to carefully derive these terms from
a corner Dirichlet principle as well as a corner-improved holographic renormalization
procedure, as described in section 
\ref{sec:hr1}.  A
possible result of such an analysis would be to obtain GBMS charges that are finite in 
the limit to either end of $\scp$.  
This would  allow comparison to the expression
obtained by Compere, Fiorucci, and Ruzziconi in Eq.~5.49 of \cite{Compere:2018aar}, 
which does satisfy this finiteness property but was derived 
somewhat indirectly by using input from soft theorems.

Finally, we  mention that localized charges constructed via the Brown-York
procedure, as described recently in \cite{Chandrasekaran:2021hxc}, 
also enjoy the property of being 
free of the ambiguities discussed in section \ref{sec:corners}, since these 
charges only depend on the form of the codimension-1 flux $\beom$.  On the other
hand, these charges can differ from the canonical charges for transformations
that act anomalously on the boundary structures, and hence may yield different expressions
than the corner-improved charges.  It would be useful to carry out this comparison
in detail.

\subsection{Alternative resolutions of the ambiguity}
In this work, we have emphasized that resolving the ambiguities in the covariant phase 
space construction amounts to choosing a preferred form of the flux.  
Following \cite{Chandrasekaran:2020wwn}, we advocated for the use of 
a Dirichlet form of the flux, given is close  connection to 
standard holographic constructions, junction conditions,
and the Brown-York formulation
of localized charges recently explored in \cite{Chandrasekaran:2021hxc}.
Additional intrinsic counterterms preserving the Dirichlet form of the flux are necessary for 
asymptotic symmetries, where they are needed to ensure a finite flux through the boundary,
and were related to the holographic renormalization of the action in section
\ref{genarg}.
Previously, there have been other proposals for resolving the ambiguities, and we take a moment
to briefly comment on these alternative approaches.

The approach initially advocated by Wald and Zoupas
\cite{Wald:1999wa}, and employed in subsequent work, for example
\cite{CFP}, fixes some ambiguities using a stationarity 
condition, although for sufficiently permissive
boundary conditions, this requirement either does not yield a unique result, or else
fails to hold.  
A different approach is to focus on the properties of a given Lagrangian, and to extract 
a preferred symplectic potential using homotopy operators of the variational
bicomplex
\cite{Anderson1989, Barnich:2001jy, Freidel:2020xyx}.  While this certainly
yields an unambiguous result, there is still a degree of arbitrariness in the fact that 
homotopy operators for a given complex in general are not unique.  
In fact, the original formulas by Iyer and Wald \cite{Iyer:1994ys}
for the symplectic potential are completely unambiguous. The ambiguity instead 
arises in 
addressing why one particular formula for the symplectic potential is preferred over 
another.  In this regard, we find the resolving the ambiguity by focusing 
on properties of the flux yields a clearer explanation of what choices have been made 
in finding the resolution.  It would still be interesting to carefully relate the resolutions 
we explore in the present work to those involving the variational bicomplex, and understand the 
extent to which these two approaches can be made equivalent.  Finally, we mention the 
work of Kirklin 
\cite{Kirklin:2019xug}, 
who uses a construction based on the path integral for a subregion, and 
extracts a manifestly unambiguous symplectic potential using ideas closely related to the 
Peierls bracket construction \cite{Peierls:1952cb}.  
This procedure has a number of advantages beyond being manifestly
unambiguous, including making a more direct connection to the quantum description of the 
subregion, and being completely covariant with respect the codimension-2 corner 
of the subregion; i.e., it does not require a preferred codimenion-1 hypersurface $\ns$
bounding the subregion.  Unfortunately, the construction is sufficiently different from
the standard covariant phase space that it is not immediately clear what the specific form
of the corner contribution to the symplectic potential is in Kirklin's construction.  It would 
be very interesting to make this comparison, and determine whether his construction
is related to the Dirichlet form of the flux that was the focus of the present work.

\subsection{Casimir energy of vacuum AdS}

A byproduct of the localized charge construction in section \ref{sec:WZcharges} 
is that the resulting charges are largely free
from the usual ambiguity to be shifted by phase space 
constants.  The reason for this is that there are 
fewer quantities that qualify as true constants when
no boundary condition is imposed on the intrinsic boundary data.  
The requirement that the charges satisfy equation (\ref{eqn:Hammod})
is therefore a stronger condition than occurs in 
standard canonical frameworks
in which boundary conditions are imposed to ensure the flux
$\flx_\h\xi$ vanishes.  The additional content in 
equation (\ref{eqn:Hammod}) is that the charge $H_{\xi}$ 
must satisfy this equation  even for 
variations that violate the boundary conditions.  For example,
when taking $\beom$ to be in Dirichlet form, and choosing 
$\xi^a$ such that $\Delta_\h\xi(\beta'-\lambda') + h_{\delta \xi}$
vanishes, one would find that imposing a Dirichlet boundary
condition causes the entire flux $\flx_\h\xi$ to vanish, and $H_\xi$
is then the charge that integrates Hamilton's equation for the
transformation.  However, any other quantity $ H_{\xi}'$
that differs
from $H_{\xi}$ by a functional of the intrinsic quantities
on the boundary would also satisfy Hamilton's equation, since 
such intrinsic functionals are phase space constants once the 
Dirichlet boundary condition is imposed.  On the other hand,
these intrinsic functionals  have a nontrivial variation
for fluctuations that do not hold the intrinsic data fixed, 
in which case $ H_{\xi}'$ will fail to satisfy (\ref{eqn:Hammod})
in the larger phase space considered in this work where such variations 
are permitted. 
This allows us to conclude that the charge $H_\xi$ is unique
up to an overall constant that is independent of the bulk and 
boundary geometry.  The expression (\ref{eqn:Hxiint}) represents
a valid choice for fixing this constant, and allows for 
meaningful comparison of the values of the charges in different
spacetimes. 

An important context in which such a comparison arises is in 
odd-dimensional asymptotically AdS spaces,
where, depending on the choice of boundary conformal frame,
the charges in vacuum AdS can take on nonzero values.  In particular,
for asymptotic time translations, the nonzero charge is interpreted 
as the Casimir energy for the dual CFT 
\cite{Balasubramanian:1999re}.  
This result crucially relies on the ability to compare the charges 
in different conformal frames, and for the alternative definition
of canonical charges proposed by Ashtekar, Magnon, and Das (AMD)
\cite{Ashtekar:1984zz, Ashtekar:1999jx}, 
the energy vanishes for vacuum 
AdS, regardless of the choice of conformal frame.  
The resolution of this discrepancy lies in the fact that the 
AMD charges differ from the charges constructed from
a holographic stress tensor by an intrinsic functional of the 
boundary geometry \cite{Hollands:2005wt,
Papadimitriou:2005ii}.  This intrinsic functional has the effect
of subtracting off the value of the charge of vacuum AdS in 
the appropriate conformal frame, so that the AMD charges always 
vanish in vacuum AdS.  

This raises the question as to which definition of charge 
coincides with the expression (\ref{eqn:Hxiint}) in the context 
of asymptotically AdS spacetimes.  The answer can be inferred 
from the results of \cite{Chandrasekaran:2021hxc}
(see also \cite{Hollands:2005ya, Papadimitriou:2005ii, Harlow:2019yfa}), 
which showed that when the flux is chosen 
to be of Dirichlet form, $H_\xi$ agrees with the Brown-York charges 
constructed from the boundary stress tensor obtained by 
varying the subregion action with the respect to the intrinsic 
boundary variables \cite{Brown:1992br}.\footnote{More precisely,
the equivalence between the Brown-York and canonical definitions 
of charges was shown to hold for transformations that act 
covariantly on the intrinsic geometry of the boundary.  In the 
case of a null boundary, we showed in
\cite{Chandrasekaran:2021hxc} 
that for transformations that act anomalously on the 
null generator $n^i$, in the sense $\Delta_{\h\xi} n^i = w_\xi n^i$
for some function $w_\xi$, the two definitions of charges 
differ by an intrinsic functional constructed from $w_\xi$.
In the asymptotically AdS context, a similar anomaly should 
arise for asymptotic symmetries associated with conformal 
isometries of the boundary metric with nontrivial conformal 
factors.  In these cases, the holographic charges and canonical 
charges $H_\xi$ likely differ, and it would be interesting 
to investigate whether this difference has any physical
interpretation.  Note that this subtlety does not 
affect the discussion of the Casimir energy, since that involves 
charges assoicated with time translation, which is 
a boundary isometry with vanishing conformal factor.}  
Since the Casimir energy is obtained from holographic
charges constructed using the Brown-York method, it is immediately
apparent that the charges $H_\xi$ considered here will 
reproduce the Casimir energy of asymptotically AdS spacetimes, and 
therefore differ from the AMD charges.  It is important to emphasize
that, like the holographic charges, any shifts in the localized 
charges $H_\xi$ are derived from a corresponding change in the subregion
action, since the action principle completely determines the expression
for the charges.  This property is not shared by the AMD charges, and 
there does not appear to be any action principle that would yield
the AMD formula for the charges via the method of section 
\ref{sec:WZcharges}.  
Our construction 
thus provides a novel means of obtaining this Casimir energy 
from canonical methods that does not suffer from ambiguities 
associated with shifting the charges by intrinsic functionals.

\subsection{Implications for holography}
There are a number of potential applications of the present work to various aspects of 
holography.  The arguments of section
\ref{sec:hr1}
on holographic renormalization
of the symplectic potential are largely motivated by well-known constructions that 
originated in AdS/CFT 
\cite{Witten:1998qj, 
Henningson1998a, Balasubramanian:1999re, DeHaro2001,
Papadimitriou:2005ii}.  Although Dirichlet
boundary conditions were initially thought to be necessary in order to obtain a 
finite symplectic form, it was pointed out in the work of Comp\`ere and Marolf that in fact
the holographic renormalization procedure also yields a finite boundary symplectic form,
after taking into account the appropriate corner contributions 
\cite{Compere:2008us}.  This then motivates
definitions of a wide class of charges associated with all boundary diffeomorphisms, 
instead of focusing only on the subalgebra of conformal Killing vectors of the boundary 
metric.  For example, in the context of asymptotically de Sitter or anti-de Sitter spaces, 
such considerations led to the identifications of the $\Lambda$-BMS symmetry algebras, which 
are useful in obtaining the BMS symmetries upon taking a flat space limit 
\cite{Compere:2019bua,Compere:2020lrt,Fiorucci:2020xto}.
The general proof in section \ref{genarg} that such renormalization is always possible, independent
of the details of the spacetime asymptotics, suggests that the associated generalized charges are 
always present, and hence should have an interpretation in the dual holographic description.  

One puzzling aspect of interpreting these charges holographically is that the symmetry 
algebras constructed in this way are much larger than the algebras typically
encountered in standard examples of AdS/CFT.  For example, in asymptotically AdS spaces,
the dual quantum theory is a conformal field theory,
where the only conserved diffeomorphism charges are those associated with conformal
isometries.  On the other hand, the charges considered in the present work are  generically
not conserved, due to the presence of nonzero fluxes through the boundary, and hence there 
is no immediate contradiction with standard holographic considerations.  
The existence of these charges appears to be most closely tied to the ability to 
define a local stress tensor operator in the dual theory.  
As recently reviewed in \cite{Chandrasekaran:2021hxc}, the entire set of 
localized charges can be constructed using the Brown-York stress tensor
on the subregion boundary.  
Although each individual charge may not be conserved,
the stress tensor itself satisfies a covariant conservation equation as a consequence 
of the gravitational constraints.
In a holographic dual picture,
the dictionary relates the Brown-York stress tensor to the local 
stress tensor of the dual field theory.  
Because the continuity equation relating the nonconservation of the charges to the flux
is intimately related to the covariant conservation equation of the stress tensor,
one could speculate that the diffeomorphism charges become important when characterizing 
the theory in a hydrodynamical regime,  which gives a coarse-grained, effective description 
of the quantum theory in which the important degrees of freedom are those 
associated with conserved quantities, such as the stress tensor.
This connection between gravity and hydrodynamics has been noted in holography in the fluid-gravity
correspondence
\cite{Bhattacharyya:2007vjd, Rangamani:2009xk}, 
and has also appeared in various other contexts including the membrane paradigm
of black holes \cite{Damour1982, Thorne:1986iy} and considerations of the Einstein equation of state
\cite{Jacobson:1995ab}.

There are a number of other possible holographic applications of the present work.  
The considerations of localized charges are well-adapted to describing gravitational theories
in local subregions, and in some cases these subregions can be given a holographic interpretation in 
terms of a CFT deformed by an irrelevant $T\bar T$ or $T^2$ deformation
\cite{McGough:2016lol,Hartman:2018tkw}.  Some  ideas relating the $T\bar T$
deformation to covariant phase space constructions were recently considered in \cite{Kraus:2021cwf}.  

Another area of interest to which the localized charges may be relevant is in the 
recent models of black hole evaporation that reproduce the Page curve
\cite{Almheiri:2019psf, Penington:2019npb, Almheiri:2020cfm}, where outgoing 
Hawking radiation in an asymptotically flat AdS black hole is collected in a non-gravitational theory
on flat space, in order induce evaporation.  This gluing construction is similar in spirit to 
the reduction procedure described in section \ref{sec:gluing} for combining subregions, and 
hence it may be worthwhile to understand the evaporation models from that perspective.
Furthermore, the gluing construction should in principle be possible in setups where both subregions
are gravitational, and hence may yield a useful way of understanding black hole evaporation
models without restricting one of the subregions to be nongravitational.  This may help address
recent criticisms of applicability of the evaporation models to genuine asymptotically 
flat gravitational systems raised in \cite{Raju-1,Raju-2}.  

Finally, the considerations 
of null surfaces and holographic renormalization is particularly well-adapted to applications
in celestial holography, which seeks to find a dual of asymptotically flat space in terms of 
a celestial CFT 
\cite{Aneesh:2021uzk, Raclariu:2021zjz, Pasterski:2021rjz}.   
In particular, it would be worthwhile to understand the covariant counterterms
needed to renormalize the action and the associated 
null Brown-York stress tensor recently considered in \cite{Chandrasekaran:2021hxc},
without explicitly employing the auxiliary
rigging vector used in section 
\ref{exparg}.

\section*{Acknowledgments}

We thank Luca Ciambelli, Geoffrey Comp\`ere, Laurent Freidel, Rob Leigh, 
Don Marolf, David Nichols, Ioannis Papadimitriou, 
Kartik Prabhu, Daniele Pranzetti, Romain Ruzziconi, and 
Kostas Skenderis
for helpful discussions.  E.F. and I.S. are supported in part by NSF
grants PHY-1707800 and PHY-2110463. I.S. also acknowledges support from the John and David Boochever prize fellowship in fundamental theoretical physics.
AJS is supported by the Air Force Office of Scientific Research under award number FA9550-19-1-036.
Research at Perimeter Institute is supported in part by the Government of Canada through the Department of Innovation, Science and Economic Development and by the Province of Ontario through the Ministry of Colleges and Universities. V.C. is supported in part by the Berkeley Center for Theoretical Physics; by the Department of Energy, Office of Science, Office
of High Energy Physics under QuantISED Award DE-SC0019380 and under contract DEAC02-05CH11231; by the National Science Foundation under grant PHY-1820912; and by a grant from the Simons Foundation (816048, VC).

\appendix

\section{Field space calculations}
\label{app:fs}
Here we collect some identities satisfied by various operators on field
space.
Given a vector field $V$ on $\fs$, its action on differential forms via the Lie
derivative is given by Cartan's magic formula
\beq
L_V = I_V\delta + \delta I_V.
\eeq
More generally, if $\nu$ is a vector-valued one-form on $\fs$, we can define 
a derivation of the exterior algebra of degree
$0$ denoted $I_\nu$ which is given by contraction
on the vector index and then antisymmetrization of the remaining covariant indices;
on a $p$-form $\alpha$, this is given by
\cite{Michor1993}
\beq
(I_\nu \alpha)_{\n A_1 \n A_2\ldots \n A_p} =  \nu\ind{^{\n B}_{\un{\n A_1}}}
\alpha\ind{_{\n B}_{\un{\n A_2 \ldots \n A_p}}},
\eeq
where the underline denotes antisymmetrization of the indices.
The graded commutator of $I_\nu$ with the exterior derivative $\delta$ 
defines a new derivation of degree $1$ denoted $L_\nu$,
\beq
L_\nu = I_\nu \delta - \delta I_\nu.
\eeq
In particular, a field dependent vector field $\xi^a$ has nontrivial variation $\delta
\xi^a$ which is a one form on field space.  The map $\xi^a\mapsto \hat \xi$
extends to $\delta \xi^a$, producing a vector-valued one form on $\fs$ denoted 
$\wh{\delta\xi}$.  This object then defines derivations $I_{\wh{\delta\xi}}$ and 
$L_{\wh{\delta\xi}}$ by the above definitions.
A vector valued differential form $\rho$ of higher degree defines derivations 
$I_\rho$ and $L_\rho$ in a similar manner. 

\begin{lemma} \label{lem:1} The various derivations defined above satisfy 
\begin{align}
[L_{\h\xi}, \lie_\zeta] &= \lie_{\left(I_{\h\xi} \delta \zeta \right)}\label{eqn:Llie}\\
[\lie_\xi, I_{\wh{\delta \zeta}} ] & = 0 \label{eqn:lieIdel}\\
[I_{\h\xi}, I_{\wh{\delta\zeta}}] &= I_{\wh{I_{\h\xi}\delta\zeta}} \label{eqn:IIdel} \\
[I_{\wh{\delta\xi}}, I_{\wh{\delta\zeta}}] &= I_{\hat\sigma}; 
\qquad \sigma^a = I_{\wh{\delta\xi}}\delta \zeta^a - I_{\wh{\delta\zeta}}\delta\xi^a
\label{eqn:IdelIdel}\\
[L_{\h\xi}, I_{\wh{\delta\zeta}}] &=
I_{\h\tau}; \qquad \tau^a = [\delta\zeta,\xi]^a +\delta I_{\h\xi}\delta \zeta^a-
I_{\wh{\delta\zeta}}\delta\xi^a 
\label{eqn:LIdel}\\
[L_{\h\xi}, L_{\h\zeta}] &= -L_{\wh{\brmod{\xi}{\zeta}}};
\qquad \brmod{\xi}{\zeta}^a = [\xi,\zeta]^a -I_{\h\xi}\delta \zeta^a + I_{\h\zeta}\delta\xi^a\label{eqn:LL}.
\end{align}
In particular, (\ref{eqn:LL}) implies that the field space Lie bracket is given by 
\beq\label{eqn:fslb}
[\h\xi,\h\zeta]_\fs = -\wh{\brmod{\xi}{\zeta}}.
\eeq
\end{lemma}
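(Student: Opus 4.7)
The plan is to establish the six identities in order of increasing complexity, exploiting the defining relation $L_{\hat\xi}\phi = \lie_\xi\phi$ together with standard graded-commutator algebra of derivations on the exterior algebra of field space. The general strategy will be to verify each identity first on the fundamental fields $\phi$ and $\delta\phi$, then extend to arbitrary differential forms by noting that both sides are derivations (or compositions of them) that commute with $\delta$ in a prescribed graded way, so that agreement on generators forces agreement everywhere.

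First I would dispose of the simpler commutators. For (\ref{eqn:Llie}), I apply the graded Leibniz rule for $L_{\hat\xi}$ to $\lie_\zeta\alpha$, using that $L_{\hat\xi}\zeta^a = I_{\hat\xi}\delta\zeta^a$ and that $L_{\hat\xi}$ acts as $\lie_\xi$ on the spacetime-tensor content of $\alpha$. Identity (\ref{eqn:lieIdel}) is immediate: $\lie_\xi$ is a purely spacetime operation, while $I_{\wh{\delta\zeta}}$ only contracts a field-space one-form index with a field-space vector index, and these two operations act on disjoint tensor factors. For (\ref{eqn:IIdel}), I evaluate both sides on $\delta\phi$ and on $\delta(\delta\phi)=0$ and use that $I_{\hat\xi}$ and $I_{\wh{\delta\zeta}}$ are derivations of degree $0$ on the field-space exterior algebra; here the key small computation is $[I_{\hat\xi},I_{\wh{\delta\zeta}}]\delta\phi = I_{\hat\xi}\lie_{\delta\zeta}\phi = \lie_{I_{\hat\xi}\delta\zeta}\phi = I_{\wh{I_{\hat\xi}\delta\zeta}}\delta\phi$.

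I would then derive the two more intricate commutators. For (\ref{eqn:IdelIdel}), I use Cartan's magic formula $L_{\wh{\delta\xi}} = I_{\wh{\delta\xi}}\delta - \delta I_{\wh{\delta\xi}}$ and the Jacobi-type identity $[I_{\wh{\delta\xi}},I_{\wh{\delta\zeta}}]=I_{[\wh{\delta\xi},\wh{\delta\zeta}]_{\fs}}$ for derivations associated with vector-valued forms, reducing the problem to identifying the resulting vector-valued form with $\hat\sigma$ for $\sigma^a = I_{\wh{\delta\xi}}\delta\zeta^a - I_{\wh{\delta\zeta}}\delta\xi^a$; this identification is checked by evaluating both sides on $\phi$. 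Identity (\ref{eqn:LIdel}) follows by combining (\ref{eqn:Llie}), (\ref{eqn:lieIdel}), and (\ref{eqn:IIdel}) using $L_{\hat\xi} = \lie_\xi + \Delta_{\hat\xi} + I_{\wh{\delta\xi}}$ (cf.\ (\ref{eqn:noncov-defn})) on the appropriate arguments, with the $[\delta\zeta,\xi]^a$ piece arising from the spacetime commutator and the remaining pieces from the field-space contributions.

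The main obstacle, and the payoff of the lemma, is (\ref{eqn:LL}). The cleanest route is to test both sides on $\phi$: using (\ref{eqn:Llie}),
\begin{align*}
[L_{\hat\xi},L_{\hat\zeta}]\phi
&= L_{\hat\xi}\lie_\zeta\phi - L_{\hat\zeta}\lie_\xi\phi \\
&= \lie_\zeta\lie_\xi\phi + \lie_{I_{\hat\xi}\delta\zeta}\phi - \lie_\xi\lie_\zeta\phi - \lie_{I_{\hat\zeta}\delta\xi}\phi \\
&= -\lie_{[\xi,\zeta]}\phi + \lie_{I_{\hat\xi}\delta\zeta - I_{\hat\zeta}\delta\xi}\phi
= -\lie_{\brmod{\xi}{\zeta}}\phi = -L_{\wh{\brmod{\xi}{\zeta}}}\phi.
\end{align*}
Since $[L_{\hat\xi},L_{\hat\zeta}]$ and $L_{\wh{\brmod{\xi}{\zeta}}}$ are both degree-zero derivations on the field-space exterior algebra that commute with $\delta$, agreement on $\phi$ extends to agreement on $\delta\phi$ and hence on all polynomial expressions in $\phi$ and $\delta\phi$, i.e., on all local field-space forms. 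The tricky point to handle carefully is the possible field-dependence of $\xi$ and $\zeta$: one must verify that the ``Koszul-type'' correction terms $-I_{\hat\xi}\delta\zeta + I_{\hat\zeta}\delta\xi$ precisely account for the failure of $L_{\hat\xi}L_{\hat\zeta}$ to reproduce the naive bracket, which is exactly what the calculation above shows. The final claim (\ref{eqn:fslb}) is then just the definition of the field-space Lie bracket, $[\hat\xi,\hat\zeta]_{\fs}$ being the unique vector field whose negative Lie derivative equals $[L_{\hat\xi},L_{\hat\zeta}]$.
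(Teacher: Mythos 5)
Your overall strategy coincides with the paper's: each commutator is a derivation of a known algebraic type, so it suffices to evaluate it on the generators $\phi$ and $\delta\phi$ and then extend, and your computation for (\ref{eqn:Llie}), (\ref{eqn:lieIdel}), (\ref{eqn:IIdel}) and especially (\ref{eqn:LL}) is essentially line-for-line the one in Appendix \ref{app:fs}. Two steps, however, need repair. First, for (\ref{eqn:IdelIdel}) you propose to identify the vector-valued form in $[I_{\wh{\delta\xi}},I_{\wh{\delta\zeta}}]=I_{\hat\sigma}$ ``by evaluating both sides on $\phi$.'' Both sides are algebraic derivations and therefore annihilate every field-space $0$-form, so this test is vacuous; the identification must instead be made on the one-forms $\delta\phi$, where $[I_{\wh{\delta\xi}},I_{\wh{\delta\zeta}}]\delta\phi = \lie_{I_{\wh{\delta\xi}}\delta\zeta}\phi - \lie_{I_{\wh{\delta\zeta}}\delta\xi}\phi = I_{\hat\sigma}\delta\phi$. (The paper simply quotes the Nijenhuis--Richardson bracket formula for two algebraic derivations, which packages the same computation.)

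Second, and more substantively, your derivation of (\ref{eqn:LIdel}) by expanding $L_{\h\xi}=\lie_\xi+\Delta_{\h\xi}+I_{\wh{\delta\xi}}$ is circular as written: after disposing of $[\lie_\xi,I_{\wh{\delta\zeta}}]=0$ via (\ref{eqn:lieIdel}) and of $[I_{\wh{\delta\xi}},I_{\wh{\delta\zeta}}]$ via (\ref{eqn:IdelIdel}), you are left needing $[\Delta_{\h\xi},I_{\wh{\delta\zeta}}]$, which is not among the identities you have established --- indeed the commutators involving $\Delta_{\h\xi}$ are derived in the paper's \emph{second} lemma as consequences of Lemma \ref{lem:1}, not inputs to it. Note also that the two pieces you do control cannot produce the $[\delta\zeta,\xi]^a$ and $\delta I_{\h\xi}\delta\zeta^a$ terms in $\tau^a$, so the missing commutator carries real content. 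The repair is the paper's direct route: $[L_{\h\xi},I_{\wh{\delta\zeta}}]$ is an algebraic derivation, hence fixed by its action on $\delta\phi$, and a short computation of $L_{\h\xi}\lie_{\delta\zeta}\phi - I_{\wh{\delta\zeta}}\delta\lie_\xi\phi$ using Cartan's formula and (\ref{eqn:IIdel}) produces exactly $\tau^a=[\delta\zeta,\xi]^a+\delta I_{\h\xi}\delta\zeta^a - I_{\wh{\delta\zeta}}\delta\xi^a$. With these two fixes the remainder of your argument, including the extension-from-generators step for (\ref{eqn:LL}) and the reading-off of (\ref{eqn:fslb}), is sound.
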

\begin{proof}
For (\ref{eqn:Llie}), we compute
\begin{align}
[L_\h\xi,\lie_\zeta] &=I_{\h\xi}\delta \lie_\zeta + \delta I_{\h\xi}\lie_\zeta
-\lie_\zeta I_{\h\xi}\delta - \lie_\zeta\delta I_{\h\xi}\nonumber \\
&=I_{\h\xi}\lie_{\delta\zeta} +\lie_\zeta I_\h\xi \delta + \lie_{\delta\zeta}I_\h\xi
-\lie_\zeta I_\h\xi\delta -\lie_\zeta \delta I_{\h\xi}
\nonumber \\
&=\lie_{I_\h\xi \delta \zeta} - \lie_{\delta \zeta}I_{\h\xi} + \lie_{\delta \zeta}I_\h\xi
\end{align}
yielding the identity.  Equation (\ref{eqn:lieIdel}) is identically true
from the definition of how $\lie_\xi$ and $I_{\wh{\delta\zeta}}$ act on 
field space differential forms.  
Equations (\ref{eqn:IIdel}) and (\ref{eqn:IdelIdel}) follow from the for the 
Nijenhuis-Richardson bracket for two algebraic derivations
\cite{Michor1993}.  

For equation (\ref{eqn:LIdel}), we know from the 
general structure of brackets of derivations that $[L_{\h\xi}, I_{\wh\delta \zeta}]$
must be an algebraic derivation, and hence is determined by its action on a basis
of one-forms $\delta \phi$.  This then produces
\begin{align}
[L_\h\xi, I_{\wh{\delta\zeta}} ] \delta \phi &=
I_\h\xi\delta \lie_{\delta\zeta}\phi + \delta \lie_{I_{\h\xi}\delta \zeta}\phi
-I_{\wh{\delta\zeta}}\delta\lie_\xi \phi \nonumber \\
&=
-\lie_{I_\h\xi\delta \zeta}\delta \phi+\lie_{\delta\zeta}\lie_\xi\phi
+\lie_{\delta(I_\h\xi\delta\zeta)}\phi +\lie_{I_\h\xi\delta \zeta}\delta \phi
-\lie_{I_\wh{\delta\zeta}\delta\xi}\phi -\lie_\xi \lie_{\delta\zeta}\phi
\nonumber \\
&=
\lie_{\big([\delta\zeta, \xi]+\delta I_{\h\xi}\delta \zeta 
- I_\wh{\delta\zeta}\delta \xi\big)} \phi
\end{align}
which then reproduces the RHS of (\ref{eqn:LIdel}).

Finally, for the commutator $[L_\h\xi, L_\h\zeta]$, we know that the resulting 
derivation will be a Lie derivative, and hence it is determined by 
its action on the scalars $\phi$.  We can therefore compute
\begin{align}
[L_\h\xi, L_\h\zeta]\phi &=
L_\h\xi \lie_\zeta\phi - L_\h\zeta \lie_\xi\phi \nonumber \\
&=
\lie_{I_\h\xi\delta\zeta} \phi +\lie_\zeta \lie_\xi \phi 
-\lie_{I_\h\zeta\delta\xi} \phi -\lie_\xi\lie_\zeta \phi \nonumber \\
&= -\lie_{\brmod{\xi}{\zeta}} \phi 
\end{align}
\end{proof}

\begin{lemma}  The operator $\Delta_{\h\xi}$ satisfies the following identities
\begin{align}
[\delta, \Delta_{\h\xi}] &= \Delta_{\wh{\delta\xi}} \label{eqn:delDel} \\
[\Delta_{\h\xi}, \Delta_{\h\zeta}] &= \Delta_{[\h\xi,\h\zeta]_\fs} 
= -\Delta_{\wh{\brmod{\xi}{\zeta}}} \label{eqn:WZ}\\
[\Delta_\h\xi, I_\h\zeta] &= I_{\wh {\Delta_\h\xi \zeta}}
= -I_{\wh{\brmod{\xi}{\zeta}} }  +I_{\wh{ I_{\h\zeta}\delta\xi}} 
\label{eqn:DelIz}
\end{align}
\end{lemma}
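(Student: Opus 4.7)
The plan is to prove each of the three identities by expanding $\Delta_{\h\xi} = L_{\h\xi} - \lie_\xi - I_{\wh{\delta\xi}}$ and then applying the commutation relations collected in Lemma~1 together with the standard Cartan calculus identities on $\fs$. The main technical input is that, since $I_{\wh{\delta\xi}}$ is a degree-$0$ algebraic derivation (a contraction with a vector-valued 1-form), its graded commutator with $\delta$ is a Lie derivative, namely $[\delta,I_{\wh{\delta\xi}}] = -L_{\wh{\delta\xi}}$.

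For (A.9), I will compute the three contributions $[\delta,L_{\h\xi}]$, $[\delta,\lie_\xi]$, and $[\delta,I_{\wh{\delta\xi}}]$ separately. The first vanishes because the exterior derivative on field space commutes with any field-space Lie derivative. The second is computed on a scalar $\phi$ using $\delta(\lie_\xi \phi) = \lie_{\delta\xi}\phi + \lie_\xi\delta\phi$, so that $[\delta,\lie_\xi]=\lie_{\delta\xi}$. The third is $-L_{\wh{\delta\xi}}$ by the remark above. Adding the pieces with the correct signs yields $[\delta,\Delta_{\h\xi}] = L_{\wh{\delta\xi}} - \lie_{\delta\xi}$. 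The identity then follows once one notes that $\delta(\delta\xi)=0$ forces $I_{\wh{\delta(\delta\xi)}} = 0$, so the right-hand side is exactly $\Delta_{\wh{\delta\xi}}$.

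For (A.11), I will expand $[\Delta_{\h\xi},I_{\h\zeta}]$ into three commutators. The first, $[L_{\h\xi},I_{\h\zeta}]$, equals $I_{[\h\xi,\h\zeta]_\fs} = -I_{\wh{\brmod{\xi}{\zeta}}}$ by (\ref{eqn:LL}). The second vanishes, since $\lie_\xi$ acts on spacetime indices while $I_{\h\zeta}$ contracts a field-space leg, and they act on independent tensor factors (this is essentially the content of (\ref{eqn:lieIdel})). The third, $[I_{\wh{\delta\xi}},I_{\h\zeta}]$, is evaluated via (\ref{eqn:IIdel}) rewritten as $[I_{\h\zeta},I_{\wh{\delta\xi}}] = I_{\wh{I_{\h\zeta}\delta\xi}}$, giving a sign flip. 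Collecting terms reproduces the claimed expression. Finally, a short computation confirms that this sum agrees with $I_{\wh{\Delta_{\h\xi}\zeta}}$: since $\zeta^a$ is a scalar on field space one has $L_{\h\xi}\zeta^a = I_{\h\xi}\delta\zeta^a$ and $I_{\wh{\delta\xi}}\zeta^a = 0$, so $\Delta_{\h\xi}\zeta^a = I_{\h\xi}\delta\zeta^a - [\xi,\zeta]^a = -\brmod{\xi}{\zeta}^a + I_{\h\zeta}\delta\xi^a$.

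Identity (A.10) is the most laborious and will be the main obstacle, though it is a systematic bookkeeping exercise rather than a conceptual difficulty. I will expand $[\Delta_{\h\xi},\Delta_{\h\zeta}]$ into the nine commutators $[L_{\h\xi},L_{\h\zeta}]$, $[L_{\h\xi},\lie_\zeta]$, $[L_{\h\xi},I_{\wh{\delta\zeta}}]$, and their $\xi \leftrightarrow \zeta$ partners, together with the $\lie$--$\lie$, $\lie$--$I_{\wh{\delta\cdot}}$, and $I_{\wh{\delta\cdot}}$--$I_{\wh{\delta\cdot}}$ pieces, and substitute the identities (\ref{eqn:Llie})--(\ref{eqn:LL}) from Lemma~1. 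A useful intermediate check is that the $L$-type and $\lie$-type contributions assemble, after using $\brmod{\xi}{\zeta}^a = [\xi,\zeta]^a - I_{\h\xi}\delta\zeta^a + I_{\h\zeta}\delta\xi^a$, into $-L_{\wh{\brmod{\xi}{\zeta}}} + \lie_{\brmod{\xi}{\zeta}}$, i.e.\ exactly the first two pieces of $-\Delta_{\wh{\brmod{\xi}{\zeta}}}$. The remaining $I_{\wh{\delta\cdot}}$-type terms in $\tau$, $\tau'$, and $\sigma$ from (\ref{eqn:LIdel}) and (\ref{eqn:IdelIdel}), once $\xi \leftrightarrow \zeta$ antisymmetrization and the relation $\delta\brmod{\xi}{\zeta}^a = [\delta\xi,\zeta]^a + [\xi,\delta\zeta]^a - \delta I_{\h\xi}\delta\zeta^a + \delta I_{\h\zeta}\delta\xi^a$ are applied, will collapse to $-I_{\wh{\delta\brmod{\xi}{\zeta}}}$, completing the identification with $-\Delta_{\wh{\brmod{\xi}{\zeta}}}$. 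The hard part is nothing more than careful antisymmetrization and sign tracking.
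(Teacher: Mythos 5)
Your strategy---expanding $\Delta_{\h\xi} = L_{\h\xi} - \lie_\xi - I_{\wh{\delta\xi}}$ and evaluating the resulting commutators term by term with the identities of Lemma~1---is exactly the route the paper takes, and your treatments of the first and third identities are correct; in particular the observation that $\Delta_{\h\xi}\zeta^a = I_{\h\xi}\delta\zeta^a - [\xi,\zeta]^a = -\brmod{\xi}{\zeta}^a + I_{\h\zeta}\delta\xi^a$ is precisely how the paper closes the last identity. The one thing you must fix is a sign in your sketch of the second identity: since $-\Delta_{\wh{\brmod{\xi}{\zeta}}} = -L_{\wh{\brmod{\xi}{\zeta}}} + \lie_{\brmod{\xi}{\zeta}} + I_{\wh{\delta\brmod{\xi}{\zeta}}}$, the residual algebraic terms must collapse to $+I_{\wh{\delta\brmod{\xi}{\zeta}}}$, not $-I_{\wh{\delta\brmod{\xi}{\zeta}}}$ as you wrote; as stated, your intermediate claim is inconsistent with the target you correctly identified for the $L$- and $\lie$-type pieces. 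The paper's computation confirms the plus sign: the three leftover commutators $-[L_{\h\xi},I_{\wh{\delta\zeta}}] - [I_{\wh{\delta\xi}},L_{\h\zeta}] + [I_{\wh{\delta\xi}},I_{\wh{\delta\zeta}}]$ combine into a single contraction $I_{\hat\alpha}$ with $\alpha^a = \delta\brmod{\xi}{\zeta}^a$ entering with a positive sign. With that corrected, your bookkeeping plan goes through exactly as in the paper.
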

\begin{proof}
Equation (\ref{eqn:delDel}) follows from
\begin{align}
\delta\Delta_{\h\xi} &= \delta(L_{\h\xi}-\lie_\xi - I_{\delta\xi}) \nonumber \\
&= L_{\h\xi}\delta -\lie_{\delta\xi} -\lie_\xi\delta+L_{\wh{\delta\xi}} 
-I_{\wh{\delta\xi}}\delta \nonumber\\
&= \Delta_{\h\xi}\delta +\Delta_{\wh{\delta\xi}}
\end{align}
since $I_{\wh{\delta\delta\xi}} = 0$.

To derive equation (\ref{eqn:WZ}), we can use the identities in Lemma \ref{lem:1}
to derive
\begin{align}
[\Delta_\h\xi, \Delta_\h\zeta] &= [(L_{\h\xi}-\lie_\xi -I_\wh{\delta\xi}),
(L_{\h\zeta}-\lie_\zeta-I_\wh{\delta\zeta} ) ] \nonumber \\
&=
-L_{\wh{\brmod{\xi}{\zeta}}} +\lie_{[\xi,\zeta]} 
-\lie_{I_\h\xi\delta\zeta} + \lie_{I_\h\zeta\delta\xi} -[L_\h\xi, I_\wh{\delta\zeta}]
-[I_{\wh{\delta\xi}}, L_{\h\zeta} ]+ [I_\wh{\delta\xi}, I_{\wh{\delta\zeta}}]
\label{eqn:DelDelcalc}
\end{align}
The last three commutators all combine into a single contraction $I_{\hat\alpha}$,
and using (\ref{eqn:IdelIdel}) and (\ref{eqn:LIdel}) we find
\begin{align}
\alpha^a &= -[\delta\zeta,\xi]^a-\delta I_\h\xi\delta\zeta^a+I_\wh{\delta\zeta}\delta\xi^a
+[\delta\xi,\zeta]^a+\delta I_\h\zeta\delta\xi^a-I_\wh{\delta\xi}\delta\zeta^a
+I_\wh{\delta\xi}\delta\zeta^a + I_\wh{\delta\zeta}\delta\xi^a\nonumber \\
&=
\delta\brmod{\xi}{\zeta}^a.
\end{align}
Hence, equation (\ref{eqn:DelDelcalc}) becomes
\beq
[\Delta_\h\xi, \Delta_\h\zeta]
=
-L_{\wh{\brmod{\xi}{\zeta}}} + \lie_{\brmod{\xi}{\zeta}}
+I_{\wh{\delta\brmod{\xi}{\zeta}}}
=-\Delta_{\wh{\brmod{\xi}{\zeta}}}
\eeq

Finally, for equation (\ref{eqn:DelIz}), we apply equations 
(\ref{eqn:IIdel}) and (\ref{eqn:fslb}) to compute 
\begin{align}
[\Delta_\h\xi, I_\h\zeta] &= [L_\h\xi -\lie_\xi - I_\wh{\delta\xi}, I_\h\zeta] \\
&= I_{[\h\xi,\h\zeta]_\fs} + I_{\wh{I_\h\zeta\delta\xi}} \\
&= -I_{\wh{\brmod{\xi}{\zeta}} } + I_{\wh{I_\h\zeta\delta\xi}} 
= I_{\wh{\Delta_\h\xi \zeta}}
\end{align}

\end{proof}

\section{Phase space calculations}\label{sec:phase-space-calc}

The standard Iyer-Wald 
identity \cite{Wald:1993nt, Iyer:1994ys} for computing the contraction of a vector field into
the symplectic current receives modifications when $\theta'$ contains
noncovariances.  Making generous use of Cartan's magic formula in addition to Eqs.~(\ref{eqn:noncov-defn}), (\ref{eqn:noncovtheta}), (\ref{eqn:noethercurrent}), (\ref{eqn:Jxi'}), (\ref{eqn:delDel}), as well as the fact that on-shell, $\delta L^{\prime} = d \theta^{\prime}$, 
we find that 
\begin{align}
-I_{\h\xi}\omega' &= -L_{\h\xi} \theta' + \delta I_{\h\xi}\theta' \nonumber \\
&= -\lie_\xi \theta' -\Delta_{\h\xi}\theta'-I_{\wh{\delta\xi}}\theta' + \delta(J_\xi' +i_\xi L' +\Delta_{\h\xi}b')
\nonumber \\
&=-i_\xi d\theta' -di_\xi\theta' 
-\Delta_{\h\xi} \delta b' -d\Delta_{\h\xi}\lambda' 
-J_{\delta\xi}' -\Delta_{\wh{\delta\xi}} b'
+ d\delta Q_\xi'  + i_\xi \delta L' +\delta\Delta_{\h\xi}b' \nonumber \\
&= 
d(\delta Q_\xi' -Q_{\delta \xi}'-i_\xi\theta'-\Delta_{\h\xi}\lambda')
\label{eqn:Ixiom}
\end{align}
where we used $
\delta i_\xi L' = i_{\delta \xi} L' + i_\xi \delta L'
$ in the third line. This is then used in determining the charges and fluxes that appear upon
contracting $-I_{\h\xi}$ into the symplectic form.  Taking into account the 
additional boundary contribution to $\Omega$ [Eq.~(\ref{eqn:Omcorner})], the result localizes to a boundary
integral, whose integrand, using Eqs.~(\ref{eqn:noncov-defn}) and (\ref{eqn:thdecomp}), is given by 
\begin{align}
& \;\delta Q_\xi' - Q_{\delta \xi}' - i_\xi \p\theta'
-\Delta_{\h\xi}\lambda'+I_{\h\xi}\delta\beta' \nonumber \\
=&\;
\delta Q_{\xi}'-Q_{\delta \xi}' +i_\xi\delta \ell' 
-\lie_\xi \beta' + di_\xi\beta' -i_\xi\beom -\Delta_{\h\xi}\lambda'
+L_{\h\xi}\beta'-\delta I_{\h\xi}\beta' \nonumber \\
=& \;
\delta(Q_\xi' +i_\xi\ell' - I_{\h\xi}\beta') - Q_{\delta\xi}'-i_{\delta\xi}\ell'
+\Delta_{\h\xi}(\beta'-\lambda') +I_{\wh{\delta\xi}}\beta'-i_\xi\beom +di_\xi\beta'
\nonumber\\
=& \;
\delta h_\xi - h_{\delta\xi} -i_\xi\beom + \Delta_{\h\xi}(\beta'-\lambda') +di_\xi\beta'
\end{align}
where we recall the definition of the charge density
\beq
h_\xi = Q_\xi' +i_\xi\ell' -I_{\h\xi}\beta'.
\eeq
Integrating this expression over the boundary of a Cauchy surface then
yields Eq.~(\ref{eqn:IxiOm}).

The exterior derivative of $h_\xi$ can then be explicitly computed, using Eqs.~\eqref{eqn:noncov-defn} and \eqref{eqn:thdecomp},
\begin{align}
dh_\xi &= J_\xi' + \lie_\xi \ell'-i_\xi d\ell' -I_{\h\xi}d\beta' \nonumber \\
&=
I_{\h\xi}\p\theta' -i_\xi L'-\Delta_{\h\xi}b'+I_{\h\xi}\delta\ell' -\Delta_{\h\xi}\ell'
-i_\xi d\ell' -I_{\h\xi}d\beta'
\nonumber \\
&=
I_{\h\xi}\beom -\Delta_{\h\xi}(\ell' +b') - i_{\xi}(L' +d\ell') \label{eqn:dhxiapp}
\end{align}
which verifies Eq.~(\ref{eqn:dhxi}).

When computing the bracket between the localized charges, it is helpful to have
an expression for the anomaly of the charge density.  First, we note using the 
expression (\ref{eq:Qprimexi}) for $Q_\zeta'$, the transformation property 
(\ref{eqn:Delvc}) satisfied by the covariant part $\mcov{Q}_\zeta$, and 
the identity (\ref{eqn:DelIz}), that the anomaly of $Q_\zeta'$ is given by
\beq
\Delta_\h\xi Q_\zeta' = -Q'_{\brmod{\xi}{\zeta}} +Q'_{I_\h\zeta\delta\xi} +i_\zeta
\Delta_\h\xi\ell' -I_\h\zeta\Delta_\h\xi \beta',
\eeq
and similarly it follows that the anomaly of the charge density is 
\beq \label{eqn:Delxihzeta}
\Delta_\h\xi h_\zeta = -h_{\brmod{\xi}{\zeta}} +h_{I_\h\zeta\delta\xi}
+i_\zeta\Delta_\h\xi(\ell' + b') -I_\h\zeta \Delta_\h\xi(\beta'-\lambda').
\eeq
The bracket (\ref{eqn:BTbrack}) of the charges is then given by
\beq
\{H_\xi, H_\zeta\} = -I_\h\xi \delta H_\zeta + I_\h\zeta \flx_\h\xi =
\int_{\partial\Sigma}m_{\xi,\zeta},
\eeq
and by applying the definition (\ref{eqn:Axi}) of $\flx_\h\xi$ and using 
(\ref{eqn:dhxiapp}) and (\ref{eqn:Delxihzeta}), the integrand can evaluates to
\begin{align}
m_{\xi,\zeta} &= -\lie_\xi h_\zeta - \Delta_\h\xi h_\zeta +I_\h\zeta\left(i_\xi\beom
-\Delta_\h\xi(\beta'-\lambda') +h_{\delta\xi}\right) \\
&=h_{\brmod{\xi}{\zeta}}- i_\zeta\Delta_\h\xi(\ell' +b') + i_\xi \Delta_\h\zeta(\ell'+b')
+i_\xi i_\zeta(L'+d\ell')-di_\xi h_\zeta. \label{eqn:mxizeta}
\end{align}
Integrating this over the surface $\partial\Sigma$ yields 
the charge representation theorem quoted in equation (\ref{eqn:HHbrack}), 
using that $\xi^a$ and 
$\zeta^a$ are both tangent to $\ns$ which causes the term $i_\xi i_\zeta(L'+d\ell')$
to pull back to zero.

A similar computation yields the bracket for the corner-improved charges 
constructed in section \ref{sec:corners}.  Working with an improved 
charge density $\tilde h_\xi$ defined by dropping the final exact term in equation
(\ref{eqn:htilde2}) which integrates to zero in the charge, 
\beq
\tilde h_\xi = \mcov{Q}_\xi +i_\xi(\ell'+b'+dc') - I_\h\xi \cflx,
\eeq
we find that its exterior derivative is given by
\beq \label{eqn:dhtilde}
d\tilde h_\zeta = I_\h\zeta \beom -\Delta_\h\zeta(\ell'+b'+dc') - i_\zeta(L'+d\ell'),
\eeq
and its anomaly by
\beq\label{eqn:Delhtilde}
\Delta_\h\xi \tilde h_\zeta = \tilde h_{\Delta_\h\xi\zeta} 
+ i_\zeta\Delta_\h\xi(\ell'+b'+dc')-I_\h\zeta \Delta_\h\xi \cflx.
\eeq
The bracket of the charges is 
\beq
\{\tilde H_\xi, \tilde H_\zeta\} = -I_\h\xi \delta \tilde H_\zeta+I_\h\zeta
\tilde \flx_\h\xi = \int_{\partial\Sigma}\tilde m_{\xi,\zeta}
\eeq
with $\tilde \flx_\h\xi$ defined in (\ref{eqn:tildeflx}).  Then applying 
(\ref{eqn:dhtilde}) and (\ref{eqn:Delhtilde}), the integrand evaluates to
\begin{align}
\tilde m_{\xi,\zeta} &=
-\lie_\xi \tilde h_\zeta -\Delta_\h\xi \tilde h_\zeta +I_\h\zeta\left(i_\xi \beom
-\Delta_\h\xi \cflx+\tilde h_{\delta\xi}\right) \\
&=
\tilde h_{\brmod{\xi}{\zeta}} + i_\xi \Delta_\h\zeta(\ell'+b'+dc') 
- i_\zeta\Delta_\h\xi(\ell'+b'+dc') + i_\xi i_\zeta (L' +d\ell') -di_\xi \tilde h_\zeta
\end{align}
which by the same arguments as above yields the corner-improved charge representation
theorem, equations (\ref{eqn:tilHtilH}) and (\ref{eqn:tilK}).

\section{Scaling transformations on a null surface}
\label{app:nullsurf}

Consider a spacetime $(\stm, g_{ab})$ containing a null surface ${\cal N}$.
In this appendix we review the various geometric quantities that are naturally
defined on ${\cal N}$ (see for example Sec.\ 3 of
Ref.\ \cite{CFP} for more details), and how they transform under rescalings of the
null normal and under conformal transformations of the metric.  We restrict to 
$4$-dimensional spacetimes in this section.

We pick a smooth future-directed normal covector $n_a$ on ${\cal N}$, and define the
inaffinity $\nonaffinity$, a function on ${\cal N}$, by\footnote{
If the extension of $n_a$ away from $\ns$ is chosen to satisfy
$\nabla_{[a} n_{b]} = 0$, the quantity $\kappa$ is equivalently given by 
the relation $\nabla_a (n_b n^b) \heq 2 \kappa n_a$ which is the usual definition of surface
gravity for a horizon when $n_a$ is a Killing vector field.  Thus the
inaffinity is sometimes called surface gravity for general normals $n_a$, 
although in the most general case where $\nabla_{[a} n_{b]} \neq 0$, these 
two definitions of $\kappa$ will not agree.}
\be
n^a \nabla_a n^b \hateq \nonaffinity n^b,
\label{kappadef}
\ee
where we are using $\hateq$ to mean equality when evaluated on
${\cal N}$.
The contravariant normal $n^a = g^{ab} n_b$, when evaluated on $\scp$, can be
viewed as an intrinsic vector $n^i$, since $n^a n_a =0$.  We denote by
$\inducedmetric_{ij}$ the degenerate induced metric, and by
$\volume_{ijk}$ the 3-volume form on ${\cal N}$ given by taking the pullback
of $\volume_{abc}$ where $\volume_{abc}$ is any three form with $4 \volume_{[abc} n_{d]} = \varepsilon_{abcd}$.
Finally we define a 2-volume form by
\be
\label{volumesmall}
\volumesmall_{ij} =  -\volume_{ijk} n^k.
\ee

Next, we take the pullback on the index $a$ of $\nabla_a n^b$, which
is then orthogonal to $n_b$ on the index $b$.  This quantity therefore
defines an intrinsic tensor ${\shape}_i^{\ j}$ called the Weingarten
map \cite{Gourgoulhon:2005ng}.  The second fundamental form or shape
tensor is $K_{ij} = {\shape}_i^{\ k} \inducedmetric_{kj}$, which can be decomposed
as
\be
K_{ij} = \frac{1}{2} \expansion \inducedmetric_{ij} + \sigma_{ij}
\ee
in terms of an expansion\footnote{The relation of the expansion $\expansion$ to the divergence $\nabla_a n^a$ of the normal depends on how one extends the definition of $n^a$ off the null surface.  If that extension satisfies $n_a n^a=0$, then $\Theta = \nabla_a n^a - \nonaffinity$.  If that extension satisfies $\nabla_{[a} n_{b]}=0$, then we have instead $\Theta = \nabla_a n^a - 2 \nonaffinity$ \cite{Jacobson:1993pf}.} $\expansion$ and a symmetric traceless shear tensor $\sigma_{ij}$.

These fields on a null surface obey the relations \cite{Gourgoulhon:2005ng,CFP}
\begin{subequations}
  \label{grelations1}
  \begin{eqnarray}
\label{grelations1a}
\inducedmetric_{ij} n^j &=& 0, \\
\label{grelations1a1}
K_{ij} n^j &=& 0, \\
\label{grelations1a2}
{\shape}_i^{\ j} n^i &=& \nonaffinity n^j,\\
\label{grelations1b}
( \lie_n - \expansion) \inducedmetric_{ij} &=& 2 \sigma_{ij}, \\
\label{grelations1c}
( \lie_n - \expansion) \volume_{ijk} &=& 0, \\
\label{grelations1d}
( \lie_n - \expansion) \volumesmall_{ij} &=& 0, \\
( \lie_n - \nonaffinity) \expansion &=& - \frac{1}{2} \expansion^2 -  \sigma_{ij} \sigma_{kl} q^{ik} q^{kl} - R_{ab} n^a n^b,
  \end{eqnarray}
\end{subequations}
where $q^{ij}$ is any tensor that satisfies $q_{ij} q^{jk} q_{kl} = q_{il}$.

Consider now rescaling the normal according to
\be
n^i \to e^\sigma n^i,
\ee
where $\sigma$ is a smooth function on ${\cal N}$.  We can also
perform a conformal transformation on the metric,
\be
g_{ab} \to e^{2 \Upsilon} g_{ab}.
\ee
Here $\Upsilon$ is a smooth function on a neighborhood of ${\cal N}$,
but we will be interested only in $\Upsilon$ restricted to ${\cal N}$.
Under the combined effect of these transformations the various fields
transform as
\begin{subequations}
  \label{contransformc}
  \begin{eqnarray}
  \label{contransformc2}  
  n_a &\to& e^{\sigma + 2 \Upsilon} n_a, \\
    \label{contransformc3}
    \inducedmetric_{ij} &\to& e^{2 \Upsilon} \inducedmetric_{ij}, \\
    \label{contransformc4a}
      \volumesmall_{ij} &\to& e^{2 \Upsilon} \volumesmall_{ij}, \\
    \label{contransformc4}
      \volume_{ijk} &\to& e^{2 \Upsilon - \sigma} \volume_{ijk}, \\
        \label{contransformc5}
    \nonaffinity &\to& e^{\sigma} (\nonaffinity + \lie_n \sigma + 2 \lie_n
    \Upsilon),\\
            \label{contransformc6}
    \expansion &\to& e^{\sigma} (\expansion + 2 \lie_n
    \Upsilon),\\
            \label{contransformc7}
    K_{ij} &\to& e^{\sigma + 2 \Upsilon} (K_{ij} + \inducedmetric_{ij} \lie_n \Upsilon),\\
    {\shape}_i^{\ j} &\to& e^{\sigma} \left[{\shape}_i^{\ j} +
      D_i(\sigma + \Upsilon) n^j + \lie_n \Upsilon \delta_i^j \right],
    \end{eqnarray}
\end{subequations}
where $D_i$ is any derivative operator on ${\cal N}$.
These transformation laws preserve the relations (\ref{grelations1}).

In applying this framework to null surfaces ${\cal N}$ at a finite location in
spacetime \cite{CFP}, the metric $g_{ab}$ is the physical metric.
Hence there is no freedom to conformally rescale the metric, and we
must take $\Upsilon =0$.  In this case the scaling laws
(\ref{contransformc}) reduce to the scaling laws\footnote{Note that the quantities denoted here by $W_i^{\ j}, \expansion, \inducedmetric_{ij}, \volumesmall_{ij}, \volume_{ijk}$ and $n^i$ were denoted there ${\cal K}_i^{\ j}, \theta, h_{ij}, \varepsilon_{ij}, \varepsilon_{ijk}$ and $\ell^i$, respectively.} given in Eq.\ (3.3)
of Ref.\ \cite{CFP}.  By contrast, in applying the framework to future null
infinity ${\cal N} = \scp$, the metric $g_{ab}$ is the unphysical
metric and is subject to the conformal rescaling freedom
(\ref{conffreedom}), which also includes a rescaling of the normal.
In this case we must take $\Upsilon = - \sigma$, and with this specialization the scaling laws
(\ref{contransformc}) reduce to the laws (\ref{contransform}) given Appendix \ref{sec:conf}.

\section{Asymptotically flat spacetimes: notations and conventions}
\label{sec:conf}

In this appendix we review the definition of asymptotically flat
spacetimes in 3+1 dimensions, and define the notations we use for the conformal
completion framework used to describe them.

Consider vacuum spacetimes that are asymptotically flat at
null infinity, $\scri$, in the sense of \cite{Wald:106274}.
This means that we have a manifold $\stm$ with boundary $\scri$ which is
topologically $\bb R \times  S^2$, and an
unphysical metric $\gunphys_{ab}$ which is smooth on $\stm$ for which
$\scri$ is null.  We also have a smooth conformal
factor $\conf$ on $\stm$ which satisfies $\conf = 0$
on $\scri$ and
for which
\be
n_a = \nabla_a \conf
\label{normaldef}
\ee
vanishes nowhere on $\scri$.  Finally the
physical metric 
\be
\gphys_{ab} = \conf^{-2} \gunphys_{ab}
\label{eqn:physmetric}
\ee
satisfies the
vacuum Einstein equation ${\tilde G}_{ab} =0$ on $\stm \setminus \scri$.
The conformal transformation
\be
\left( \gunphys_{ab}, \conf \right) \to \left( e^{-2 \sigma}
\gunphys_{ab}, e^{-\sigma} \conf \right),
\label{conffreedom}
\ee
where $\sigma$ is a smooth function on $M$, preserves the physical
metric.
Although normally one would expect the theory to be invariant under this conformal freedom,
it is possible in general contexts
for the definitions of gravitational charges to depend on background structures
like the choice of conformal frame (as it does in AdS), as we argued in Sec.\ \ref{sec:charges} above.  The following discussion can be easily adapted to past null infinity but we will focus on future null infinity just for simplicity.

As for any null surface, the metric $\gunphys_{ab}$ and normal $n_a$ determine a
number of geometric quantities on $\scp$, reviewed in Appendix
\ref{app:nullsurf}.  These include the inaffinity $\nonaffinity$, 
the expansion $\expansion$, the shear tensor $\sigma_{ij}$, the induced metric
$\inducedmetric_{ij}$, the 3-volume form $\volume_{ijk}$, the 2-volume form $\volumesmall_{ij}$,
the second fundamental form or shape tensor $K_{ij}$, and the Weingarten map
${\shape}_i^{\ j}$.  For general null surfaces these quantities obey a
number of identities given in Eqs.\ (\ref{grelations1}).  We now review
properties of these quantities that are specific to $\scp$.

First, the normal $n_a$ is a pure gradient from Eq.\ (\ref{normaldef}), and so
$\nabla_{[a} n_{b]}=0$.  Since $\scp$ is null we have $n_a n^a =
\conf g + O(\conf^2)$ for some function $g$ on $\scp$.
Taking a gradient, evaluating at $\conf=0$, using the symmetry of
$\nabla_a n_b$ and using the definition (\ref{kappadef}) of the inaffinity $\nonaffinity$ now yields that 
\be
\label{eees1}
g^{ab} n_a n_b = 2 \nonaffinity \conf + O(\conf^2).
\ee
Second, it follows from the vacuum Einstein equation satisfied by the
physical metric that 
\be
\nabla_{(a} n_{b)}  \hateq f  g_{ab}
\ee
for some function $f$ on $\scp$;
see, e.g., Eq.\ (2.6) of Ref.\ \cite{Flanagan:2019vbl}.
As a reminder we are using $\hateq$ to mean equality when evaluated on
$\scp$. Combining this with
Eq.\ (\ref{normaldef}) yields $\nabla_a n_b \hateq f g_{ab}$, from
which we obtain $f = \nonaffinity$ and 
\bes
\label{scriidentities}
\bea
\label{scriidentities1}
\nabla_a \nabla_b \conf &\hateq& \nonaffinity g_{ab},\\
\label{scriidentities2}
\expansion &=& 2 \nonaffinity, \\
\label{scriidentities3}
\sigma_{ij} &=& 0, \\
\label{scriidentities4}
      {\shape}_i^{\ j} &=& \nonaffinity \delta_i^j.
      \eea
\ees
Inserting Eqs.\ (\ref{scriidentities2}) and (\ref{scriidentities3}) into the general identities
(\ref{grelations1}) for  any null surface yields
the relations
\begin{subequations}
  \label{relations1}
  \begin{eqnarray}
\label{relations1a}
\inducedmetric_{ij} n^j = 0, \\
\label{relations1b}
( \lie_n - 2 \nonaffinity) \inducedmetric_{ij} = 0, \\
\label{relations1c}
( \lie_n - 2 \nonaffinity) \volume_{ijk} = 0, \\
\label{relations1d}
( \lie_n - 2 \nonaffinity) \volumesmall_{ij} = 0.
  \end{eqnarray}
\end{subequations}
 Under the conformal transformation (\ref{conffreedom}) the
transformation laws for the various
fields on $\scp$ are given by the special case $\Upsilon = -\sigma$
of the transformation laws (\ref{contransformc})
discussed in Appendix \ref{app:nullsurf}, and are given by
\begin{subequations}
  \label{contransform}
  \begin{eqnarray}
  \label{contransform1}
  n^i &\to& e^\sigma n^i, \\
  \label{contransform2}  
  n_a &\to& e^{-\sigma} n_a, \\
    \label{contransform3}
    \inducedmetric_{ij} &\to& e^{-2 \sigma} \inducedmetric_{ij}, \\
    \label{contransform4a}
      \volumesmall_{ij} &\to& e^{-2 \sigma} \volumesmall_{ij}, \\
    \label{contransform4}
      \volume_{ijk} &\to& e^{-3 \sigma} \volume_{ijk}, \\
        \label{contransform5}
    \nonaffinity &\to& e^{\sigma} (\nonaffinity - \lie_n \sigma).
  \end{eqnarray}
\end{subequations}
These transformation laws preserve the relations (\ref{scriidentities}) and (\ref{relations1}).    
Using the freedom (\ref{contransform5}) one can enforce the Bondi
condition
\be
\nonaffinity = 0.
\label{bondicondition}
\ee
However in most of our analysis in this
paper we will not make this specialization and will allow $\nonaffinity$ to be nonzero.

\section{Symmetry groups at future null infinity in vacuum general relativity}
\label{sec:derivegroups}

In this appendix we derive the symmetry groups that correspond to the
three different field configuration spaces defined in Sec.\ \ref{sec:compendium} in the
body of the paper.  Rather than proceeding directly, it will be more
convenient to proceed in three stages, following  the universal intrinsic structure approach of Ashtekar
\cite{2014arXiv1409.1800A} and the techniques of Ref.\ \cite{CFP}:
\begin{itemize}

  \item We define universal intrinsic structures in each of the three
    cases, and derive the corresponding group of diffeomorphisms of
    $\scp$ that preserve these structures.

  \item We define boundary structures on $\scp$ in each of the three
    cases, and define associated field configuration spaces.  These
    configuration spaces are related to those given in
    Sec.\ \ref{sec:compendium} by taking orbits under the conformal
    transformations.

    \item Finally, we show that the symmetry groups of the intrinsic
      structures coincide with those of the field configuration spaces
      associated with the boundary structures, and with the symmetry
      groups of the spaces of Sec.\ \ref{sec:compendium}
      
\end{itemize}
We first explain these steps in detail in the BMS context,
and then outline the extensions to the generalized BMS and Weyl BMS
contexts.

\subsection{Bondi-Metzner-Sachs case}
\label{sec:bmscase}

\subsubsection{Definition of intrinsic structure}

Consider triplets of tensor fields $(n^i, \inducedmetric_{ij}, \nonaffinity)$ defined
on $\scp$ that satisfy the relations (\ref{relations1a}) and
(\ref{relations1b}) for which the vector field $n^i$ is complete.  We
define  
any two such triplets to be equivalent if they are related by a
rescaling of the form given by Eqs.\ (\ref{contransform1}),
(\ref{contransform3}),  and  (\ref{contransform5}):
\be
\label{eqr1}
(n^i, \inducedmetric_{ij}, \nonaffinity) \sim (e^\sigma n^i, e^{-2 \sigma} \inducedmetric_{ij},
e^\sigma \nonaffinity - e^\sigma \lie_n \sigma).
\ee
We denote the
equivalence class associated with a given triple as
\be
\mathfrak{u}_{21} = [n^i,\inducedmetric_{ij},\nonaffinity].
\label{eq:u21}
\ee
We call
the quantity $\mathfrak{u}_{21}$ an intrinsic geometric structure on
$\scp$.  These structures are universal in the sense that given any
two such structures on $\scp$, there exists a diffeomorphism $\varphi :
\scp \to \scp$ which maps one onto the other via
pullback\footnote{This can be shown by an argument similar to that given in
Sec.\ 4.1 of \cite{CFP}.}.

We will be defining a number of similar equivalence classes throughout
this appendix, and our notational conventions for these objects are
as follows.
In the symbol $\mathfrak{u}_{AB}$, $A$ can be $2$ (if the induced metric $\inducedmetric_{ij}$
is present in the set of fields), $1$ (if the volume form $\volume_{ijk}$ is instead present), or
$0$ (if neither $\inducedmetric_{ij}$ nor $\volume_{ijk}$ is
present).  The second index $B$
can be $1$ (if the inaffinity $\nonaffinity$ is present in the set of
fields) or $0$ (if $\nonaffinity$ is absent).  Thus there will
be six types of equivalence class, $\mathfrak{u}_{21}$, $\mathfrak{u}_{11}$, $\mathfrak{u}_{01}$,
$\mathfrak{u}_{20}$, $\mathfrak{u}_{10}$ and $\mathfrak{u}_{00}$.
Additionally, we will consider structures in which the normal covector $n_a$ is 
also present in the set of fields.  When this is the case, we will use the 
notation $\mf p_{AB}$, while the notation $\mf u_{AB}$ is reserved for structures
in which $n_a$ is absent.  
Finally tensor fields in the equivalence
classes are barred (eg.\ ${\bar n}^i, {\bar \inducedmetric}_{ij}, \ldots$) when $\nonaffinity$
is absent, and are not barred (eg.\ $n^i, \inducedmetric_{ij}, \ldots$)
when $\nonaffinity$ is present.

A given asymptotically flat spacetime $(\stm,\gphys_{ab})$ determines a
unique intrinsic structure $\mathfrak{u}_{21} = \mathfrak{u}_{21}[ \gphys_{ab}]$, as follows.  Choose an
unphysical metric $\gunphys_{ab}$ and conformal factor $\conf$ for
which $\gphys_{ab} = \conf^{-2} \gunphys_{ab}$.  Compute the
quantities $\inducedmetric_{ij}$, $n^i$ and $\nonaffinity$ from the unphysical metric and
conformal factor, and take the equivalence class (\ref{eq:u21}).  The
result is independent of which conformal factor and unphysical metric
within the equivalence class is chosen, by the equivalence relation (\ref{eqr1}) and the scaling
laws (\ref{contransform}).

We can define a different type of universal intrinsic structure \cite{2014arXiv1409.1800A}, without the
inaffinity $\nonaffinity$, as follows.  Consider pairs
$({\bar n}^i, {\bar \inducedmetric}_{ij})$ that satisfy Eqs.\ (\ref{relations1a}) and
(\ref{relations1b}) with $\nonaffinity = 0$:
\be
\label{rels1}
   {\bar n}^i {\bar \inducedmetric}_{ij} = 0, \ \ \ \lie_{\bar n} {\bar \inducedmetric}_{ij} = 0.
   \ee
We define two such pairs to be equivalent if they are related by a
transformation of the form (\ref{contransform}) that preserves $\nonaffinity = 0$, that
is,
\be
\label{eqr2}
( {\bar n}^i, {\bar \inducedmetric}_{ij}) \sim (e^\sigma {\bar n}^i, e^{-2 \sigma}
   {\bar \inducedmetric}_{ij} )
   \ee
with $\lie_{\bar n} \sigma = 0$.  We denote the equivalence class
associated with a given pair as
\be
\mathfrak{u}_{20} = \left[ {\bar n}^i, {\bar \inducedmetric}_{ij} \right].
\label{eq:u20}
\ee

There is a one-to-one correspondence between intrinsic structures of
the type $\mathfrak{u}_{21}$ and those of the type $\mathfrak{u}_{20}$.  Given
an intrinsic structure $\left[ n^i, \inducedmetric_{ij}, \nonaffinity\right]$, if we consider
the set of representative triples $( {\bar n}^i, {\bar \inducedmetric}_{ij}, 0)$
with vanishing inaffinity, the result is the equivalence class
$\left[{\bar n}^i, {\bar \inducedmetric}_{ij} \right]$.  Conversely, given the
equivalence class $\left[{\bar n}^i, {\bar \inducedmetric}_{ij} \right]$, we can
take any element $({\bar n}^i, {\bar \inducedmetric}_{ij})$, consider the
corresponding triple $({\bar n}^i, {\bar \inducedmetric}_{ij},0)$, and then take
the equivalence class under the equivalence relation (\ref{eqr1}) to
generate the intrinsic structure of type $\mathfrak{u}_{21}$.
We will denote this one-to-one correspondence as $\mathfrak{u}_{21} =
\mathfrak{u}_{21} (\mathfrak{u}_{20}) $.

\subsubsection{Symmetry group of intrinsic structure}
\label{sec:sgis}

Consider now diffeomorphisms $\varphi: \scp \to \scp$.  We
define the action of the pullback $\varphi^*$ on an intrinsic structure
$\mathfrak{u}_{21} = \left[ n^i, \inducedmetric_{ij}, \nonaffinity \right]$
by acting with the pullback on a representative of the equivalence
class:
\be
\label{pullbackdefine}
\varphi^* \left[ n^i, \inducedmetric_{ij}, \nonaffinity \right] = \left[ \varphi^* n^i,
  \varphi^* \inducedmetric_{ij}, \varphi^* \nonaffinity \right].
\ee
This action is well defined, since if $(n^i, \inducedmetric_{ij}, \nonaffinity)$ and
$({\hat n}^i$, ${\hat \inducedmetric}_{ij}$, ${\hat \nonaffinity})$
are two triples related
by a rescaling function $\sigma$, then the pullbacks of these triples
are related by the rescaling function $\varphi^* \sigma$.
Now given an intrinsic structure $\mathfrak{u}_{21}$, we define the
corresponding symmetry group to be the group of diffeomorphisms which
preserves the intrinsic structure:
\be
\label{group21def}
{\cal D}_{\mathfrak{u}_{21}} = \left\{ \varphi: \scp \to \scp \right|
  \left. \varphi^* \mathfrak{u}_{21} = \mathfrak{u}_{21} \right\}.
\ee
From the definition (\ref{pullbackdefine}) and the equivalence
relation (\ref{eqr1}), given a diffeomorphism $\varphi$ in this group
and a representative $(n^i, \inducedmetric_{ij}, \nonaffinity)$ of the intrinsic
structure, the action of the diffeomorphism is that of a rescaling by
some smooth function $\alpha = \alpha(\varphi, n^i)$:
\bes
\label{actionvp0}
\bea
\label{actionvp0a}
\varphi^* n^i &=& e^{-\alpha} n^i, \\
\label{actionvp0b}
\varphi^* \inducedmetric_{ij} &=& e^{2 \alpha} \inducedmetric_{ij}, \\
\label{actionvp0c}
\varphi^* \nonaffinity &=& e^{-\alpha} ( \nonaffinity + \lie_n \alpha).
\eea
\ees
The dependence of the function $\alpha$ on the choice
of representative (or equivalently on the normalization of the normal)
is given by
\be
\alpha(\varphi, e^\sigma n^i) = \alpha(\varphi, n^i) + \sigma -
\varphi^* \sigma,
\label{alphascaling}
\ee
from Eqs.\ (\ref{eqr1}) and (\ref{actionvp0}).

We  similarly define the symmetry group ${\cal D}_{\mathfrak{u}_{20}}$ to be the
group of diffeomorphisms that preserves a given intrinsic structure
$\mathfrak{u}_{20} = \left[ {\bar n}^i, {\bar \inducedmetric}_{ij}\right]$:
\be
\label{group20def}
{\cal D}_{\mathfrak{u}_{20}} = \left\{ \varphi: \scp \to \scp \right|
  \left. \varphi^* \mathfrak{u}_{20} = \mathfrak{u}_{20} \right\}.
\ee
Because of the one-to-one correspondence discussed above, this group
coincides with the group (\ref{group21def}), in the sense that
\be
   {\cal D}_{\mathfrak{u}_{21}(\mathfrak{u}_{20})} = {\cal D}_{\mathfrak{u}_{20}},
   \label{groupsthesame}
\ee
where the notation is defined after Eq.\ (\ref{eq:u20}).
To see this in more detail, 
if $\varphi \in {\cal D}_{\mathfrak{u}_{20}}$ then $\varphi^*
\mathfrak{u}_{20} = \mathfrak{u}_{20}$, and so
$
\varphi^* \mathfrak{u}_{21} ( \mathfrak{u}_{20}) = \mathfrak{u}_{21} (
\varphi^* \mathfrak{u}_{20}) = \mathfrak{u}_{21} ( \mathfrak{u}_{20}) $, where we
have used covariance, and so 
$\varphi \in {\cal D}_{\mathfrak{u}_{21}(\mathfrak{u}_{20})}$.  The converse
uses
the fact that the mapping $\mathfrak{u}_{20} \to
\mathfrak{u}_{21}(\mathfrak{u}_{20})$ is a bijection.

Because of the equality (\ref{groupsthesame}), we can give an
alternative characterization of the symmetries in the group.
From the definition (\ref{pullbackdefine}) and the equivalence
relation (\ref{eqr2}), 
given a representative $({\bar n}^i, {\bar \inducedmetric}_{ij})$ of the intrinsic
structure $\mathfrak{u}_{20}$ and a diffeomorphisms $\varphi$ in ${\cal D}_{\mathfrak{u}_{20}}$, 
the action of the diffeomorphism is that of a rescaling by
some smooth function $\alpha = \alpha(\varphi, {\bar n}^i)$:
\bes
\label{actionvp}
\bea
\label{actionvpa}
\varphi^* {\bar n}^i &=& e^{-\alpha} {\bar n}^i, \\
\label{actionvpb}
\varphi^* {\bar \inducedmetric}_{ij} &=& e^{2 \alpha} {\bar \inducedmetric}_{ij},
\eea
\ees
where
\be
\label{actioncpc}
\lie_{\bar n} \alpha =0.
\ee
The dependence of the function $\alpha$ on the choice
of representative 
is given by
\be
\label{alphatransform}
\alpha(\varphi, e^\sigma {\bar n}^i) = \alpha(\varphi, {\bar n}^i) + \sigma -
\varphi^* \sigma,
\ee
from Eqs.\ (\ref{eqr1}) and (\ref{actionvp}), which coincides with the dependence (\ref{alphascaling}) except
that here we must have $\lie_{\bar n} \sigma = 0$ from Eq.\ (\ref{eqr2}).
Equations (\ref{rels1}), (\ref{actionvp}) and (\ref{actioncpc}) are the usual
definition\footnote{The induced metric ${\bar \inducedmetric}_{ij}$ induces a unique two-dimensional Riemannian metric on
the space of generators of $\scp$, from Eqs.\ (\ref{rels1}).  One can specialize
the choice of representative in the equivalence class $\left[ {\bar n}^i, {\bar \inducedmetric}_{ij} \right]$,
using the freedom (\ref{eqr2}), to make this metric have constant scalar curvature (i.e.\ be a round two metric).  While this
specialization is often used to simplify the presentation of the BMS group, it is not necessary to do so.}
of the BMS group.
The linearized versions of Eqs.\ (\ref{actionvp}) and (\ref{alphatransform}) are
\bes
\label{linearizedsyms}
\bea
\lie_\xi {\bar n}^i &=& - \alpha {\bar n}^i, \\
\lie_\xi {\bar \inducedmetric}_{ij} &=& 2 \alpha {\bar \inducedmetric}_{ij},
\eea
\ees
and
\be
\alpha(\xi^i, e^\sigma {\bar n}^i) = \alpha(\xi^i, {\bar n}^i) -
\lie_\xi \sigma,
\ee
where the infinitesimal diffeomorphism is represented by the vector
field $\xi^i$ on $\scp$.

\subsubsection{Definition of field configuration space}

We now turn to the definition of a field configuration space whose
symmetry group matches that of the intrinsic structures discussed above.  We start by defining
a geometric structure on $\scp$ which we call a {\it boundary
  structure}, which is an extension of our previous definition of
intrinsic structure. 
We consider sets of tensor fields on $\scp$ of the form
\be
( n^i, \inducedmetric_{ij}, \nonaffinity, n_a ),
\ee
where $n_a$ is a choice of normal covector, the remaining fields
satisfy the relations (\ref{relations1a}) and
(\ref{relations1b}), and the vector field $n^i$ is complete.
We
define  
any two such sets to be equivalent if they are related by a
rescaling of the form (\ref{contransform}) for some smooth function $\sigma$:
\be
\label{eqr3}
(n^i, \inducedmetric_{ij}, \nonaffinity, n_a) \sim (e^\sigma n^i, e^{-2 \sigma} \inducedmetric_{ij},
e^\sigma \nonaffinity - e^\sigma \lie_n \sigma, e^{-\sigma} n_a).
\ee
We denote the
equivalence class associated with a given set as
\be
\mathfrak{p}_{21} = [n^i,\inducedmetric_{ij},\nonaffinity,n_a].
\label{eq:p21}
\ee
A choice of equivalence class is the desired boundary structure on $\scp$.

A choice of boundary structure
$\mathfrak{p}_{21} = [n^i,\inducedmetric_{ij},\nonaffinity,n_a]$ determines a unique
intrinsic structure $\mathfrak{u}_{21}$:  choose a
representative $(n^i,\inducedmetric_{ij},\nonaffinity,n_a)$, discard $n_a$, and
form the equivalence class
$\mathfrak{u}_{21} = [n^i,\inducedmetric_{ij},\nonaffinity]$ under the equivalence relation
(\ref{eqr1}). The result is independent of the representative
initially chosen, from Eqs.\ (\ref{eqr1}) and 
(\ref{eqr3}).
We will denote this induced intrinsic structure by $\mf u_{21}(\mf p_{21})$.
The boundary structure contains more information than the intrinsic
structure, which is necessary for the definition of the field
configuration space.

Just as for intrinsic structures, a given asymptotically flat spacetime $(\stm,\gphys_{ab})$ determines a
unique boundary structure $\mathfrak{p}_{21} = \mathfrak{p}_{21}[ \gphys_{ab}]$, as follows.  Choose an
unphysical metric $\gunphys_{ab}$ and conformal factor $\conf$ for
which $\gphys_{ab} = \conf^{-2} \gunphys_{ab}$.  Compute the
quantities $\inducedmetric_{ij}$, $n^i$, $\nonaffinity$ and $n_a$ from the unphysical metric and
conformal factor, and take the equivalence class (\ref{eq:p21}).  The
result is independent of which conformal factor and unphysical metric
is chosen, by the equivalence relation (\ref{eqr3}) and the scaling laws (\ref{contransform}).

Just as for intrinsic structures, we can define a different type of boundary structure, without the
inaffinity $\nonaffinity$, as follows.  Consider triplets $({\bar n}^i, {\bar
  q}_{ij}, {\bar n}_a)$ 
that satisfy Eqs.\ (\ref{rels1}) for which ${\bar n}_a$ is a complete normal covector.
We define two such triplets to be equivalent if they are related by a
transformation of the form (\ref{contransform}) that preserves $\nonaffinity = 0$, that
is,
\be
\label{eqr4}
( {\bar n}^i, {\bar \inducedmetric}_{ij}, {\bar n}_a) \sim (e^\sigma {\bar n}^i, e^{-2 \sigma}
   {\bar \inducedmetric}_{ij} , e^{-\sigma} {\bar n}_a)
   \ee
with $\lie_{\bar n} \sigma = 0$.  We denote the equivalence class
associated with a given triplet as
\be
\mathfrak{p}_{20} = \left[ {\bar n}^i, {\bar \inducedmetric}_{ij} , {\bar n}_a \right].
\label{eq:p20}
\ee
Just as above, a boundary structure $\mf p_{20}$ determines a unique
intrinsic structure $\mf u_{20} = \mf u_{20}(\mf p_{20})$ by dropping
the normal covector ${\bar n}_a$.
Also, just as for intrinsic structures, 
there is a one-to-one correspondence between boundary structures of
the type $\mathfrak{p}_{21}$ and those of the type $\mathfrak{p}_{20}$,
which we will denote as $\mathfrak{p}_{21} =
\mathfrak{p}_{21} (\mathfrak{p}_{20}) $ and $\mathfrak{p}_{20} =
\mathfrak{p}_{20} (\mathfrak{p}_{21}) $.  Given an asymptotically flat
spacetime $(\stm, \gphys_{ab})$, we define the corresponding boundary
structure of the new type
to be
\be
\mf p_{20}(\gphys_{ab}) = \mf p_{20}( \mf p_{21}(\gphys_{ab})).
\label{pequal}
\ee

Next, given a boundary structure $\mf p_{21}$, we define the
corresponding field configuration space to be the set of all
unphysical metrics and conformal factors that are compatible with that
boundary structure:
\be
\Gamma_{\mf p_{21}} = \left\{ (\stm, \gunphys_{ab}, \conf) \in \Gamma_0
\right| \left. \mf p_{21}(\gunphys_{ab},\conf) = \mf p_{21}  \right\}.
\label{Gamma21def}
\ee
Similarly, given a boundary structure $\mf p_{20}$, 
we define the field configuration space
\be
\Gamma_{\mf p_{20}} = \left\{ (\stm, \gunphys_{ab}, \conf) \in \Gamma_0
\right| \left. \mf p_{20}(\gunphys_{ab},\conf) = \mf p_{20}  \right\}.
\ee
These two spaces coincide, from Eq.\ (\ref{pequal}), in the sense that
\be
\Gamma_{\mf p_{21}(\mf p_{20})} = \Gamma_{\mf p_{20}}.
\ee
An argument analogous to that given in Appendix B of Ref.\ \cite{CFP}
can be used to show that the orbit of $\Gamma_{\mf p_{21}}$ under
diffeomorphisms of $\stm$ is the entire space $\Gamma_0$ defined in Eq.\ (\ref{Gamma0def}).

\subsubsection{Symmetry group of field configuration space}

We now turn to a discussion of the symmetry group of diffeomorphisms
that preserve the configuration phase space,
\be
{\cal G}_{\mf p_{21}} = \left\{ \psi: \stm \to \stm \right|
\left. \, \psi(\scp) = \scp \, , \, \psi^* \Gamma_{\mf p_{21}} =
\Gamma_{\mf p_{21}} \right\}.
\label{calG21def}
\ee
These diffeomorphisms induce diffeomorphisms of $\scp$: for any
$\psi$ in ${\cal G}_{\mf p_{21}}$ we define
\be
\varphi = \psi |_{\scp},
\label{induced}
\ee
and since $\psi$ preserves the boundary, $\varphi$ is a diffeomorphism
from $\scp$ to $\scp$.  Next, since $\psi$ preserves $\scp$,
the pullback of any normal covector $n_a$ evaluated on $\scp$ must be a rescaling of that
normal, so we have 
\be
\psi^* n_a \hateq e^{\gamma} n_a,
\label{gammadef}
\ee
where $\gamma = \gamma(\psi,n_a)$ is a smooth function on $\scp$
which depends on the diffeomorphism and on the normalization of the
normal.  The dependence on the normalization of the normal is given by
\be
\gamma(\psi,e^{-\sigma} n_a) = \gamma(\psi,n_a) +
\sigma - \varphi^* \sigma,
\label{gammascaling}
\ee
from Eqs.\ (\ref{induced}) and (\ref{gammadef}).

The physical asymptotic symmetry group is given by modding out by trivial
diffeomorphisms whose asymptotic charges vanish:
\be
{\cal D}_{\mf p_{21}} = {\cal G}_{\mf p_{21}} / \sim.
\ee
Here the equivalence relation $\sim$ is defined so that two
difeomorphisms are equivalent if they are related by a trivial
diffeomorphism.  For spacetime boundaries that are null surfaces at a
finite location, the trivial diffeomorphisms are those with \cite{CFP}
\be
\varphi = {\rm identity}, \ \ \ \ \gamma = 0.
\ee
This is also true in the BMS context, and we will assume it remains
true for the more general symmetry groups discussed below, pending the
explicit computation of the corresponding charges.
It follows that the group ${\cal D}_{\mf p_{21}}$ is in one-to-one
correspondence with the set of pairs $(\varphi, \gamma)$:
\be
{\cal D}_{\mf p_{21}} \simeq \left\{ (\varphi,\gamma) \right|
\left. \psi \in {\cal G}_{\mf p_{21}} \right\}.
\label{ddw}
  \ee

We now argue that the group ${\cal D}_{\mf p_{21}}$ coincides with the
symmetry group ${\cal D}_{\mf u_{21}}$ of the intrinsic structure
discussed in Sec.\ \ref{sec:sgis}.  From the condition $\psi^*
\Gamma_{\mf p_{21}} = \Gamma_{\mf p_{21}}$ in the definition (\ref{calG21def}),
we obtain that for any $(\stm, \gunphys_{ab}, \conf)$ in $\Gamma_{\mf p_{21}}$
we have $\mf p_{21} = \psi^* \mf p_{21}(\gunphys_{ab}, \conf) =
 \mf p_{21}(\psi^* \gunphys_{ab}, \psi^* \conf) $.  Using 
 Eqs.\ (\ref{induced}) and (\ref{gammadef}) we can rewrite this as
 \be
 \left[ \varphi^* n^i, \varphi^* \inducedmetric_{ij}, \varphi^* \nonaffinity, e^\gamma
   n_a \right] =  \left[ n^i, \inducedmetric_{ij}, \nonaffinity, n_a \right].
 \ee
Using the equivalence relation (\ref{eqr3}) it follows that there exists a
scaling function $\alpha$ on $\scp$ for which
\bes
\label{actionvp1}
\bea
\varphi^* n^i &=& e^{-\alpha} n^i, \\
\label{actionvp1b}
\varphi^* \inducedmetric_{ij} &=& e^{2 \alpha} \inducedmetric_{ij}, \\
\varphi^* \nonaffinity &=& e^{-\alpha} ( \nonaffinity + \lie_n \alpha), \\
e^\gamma n_a &=& e^\alpha n_a.
\eea
\ees
The first three equations here coincide with Eqs.\ (\ref{actionvp}), which
imply that $\varphi$ lies in ${\cal D}_{\mf u_{21}}$.  The last
equation implies that $\alpha = \gamma$, which is compatible with the
scaling laws (\ref{alphascaling}) and (\ref{gammascaling}). 
In particular this implies that $\gamma$ is determined by $\varphi$,
$\gamma = \gamma(\varphi)$, which implies from Eq.\ (\ref{ddw}) that
${\cal D}_{\mf p_{21}}$ and ${\cal D}_{\mf u_{21}}$ are isomorphic.

\subsubsection{Alternative definition of field configuration space with conformal freedom fixed}

The literature has often used an alternative definition of the field
configuration space, which differs from the definition (\ref{Gamma21def}) given
above only in that the conformal freedom is fixed \cite{Wald:1999wa,Flanagan:2019vbl}.
This configuration space $\Gamma_{\rm BMS}$ is defined in Eq.\ (\ref{eq:fcs}) above, and depends on 
a choice of conformal factor $\conf_0$ on a neighborhood
${\cal D}$ of $\scp$ and a choice of unphysical metric
$\gunphys_{0\,ab}$ on $\scp$.

We now show that the orbit of $\Gamma_{\rm BMS}$ under
conformal transformations is a particular space
$\Gamma_{\mf p_{21}}$, where $\mf p_{21} = \left[ {\bar n}_0^i, {\bar
    q}_{0\,ij}, 0, {\bar n}_{0\,a} \right]$
and ${\bar n}_0^i$, ${\bar \inducedmetric}_{0\,ij}$ and ${\bar n}_{0\,a}$ are computed from the given
data $\conf_0$ on ${\cal D}$ and $\gunphys_{0\,ab}$ on $\scp$.
First, it follows from the definitions
(\ref{Gamma21def}) and (\ref{eq:fcs}) that 
$\Gamma_{\rm BMS} \subset \Gamma_{\mf p_{21}}$.
Next, suppose that
$(\stm, \gunphys_{ab}, \conf)$ lies in $\Gamma_{\mf p_{21}}$.  It follows
that
\be
\left[ n^i, \inducedmetric_{ij}, \nonaffinity, n_a \right] = \left[ {\bar n}_0^i, {\bar \inducedmetric}_{0\,ij}, 0,
  {\bar n}_{0 a} \right],
\ee
where the fields on the left hand side are computed from
$\gunphys_{ab}, \conf$.  From the equivalence relation (\ref{eqr3})
there exists a scaling function $\alpha$ on $\scp$ so that 
\be
\left( n^i, \inducedmetric_{ij}, \nonaffinity, n_a \right) = \left( e^{-\alpha} {\bar n}_0^i,
e^{2 \alpha} {\bar \inducedmetric}_{0\,ij}, e^{-\alpha} \lie_{{\bar n}_0} \alpha,
  e^\alpha {\bar n}_{0 a} \right).
\ee
By suitably extending the definition of $\alpha$ from $\scp$ into the interior
of the spacetime we can
make $e^{-\alpha} \conf$ coincide with $\conf_0$ on ${\cal D}$, since
the gradients of these functions agree on $\scp$.
It then follows that
\be
\left( \stm, e^{-2
  \alpha} \gunphys_{ab}, e^{-\alpha} \conf \right)
\ee
lies in $\Gamma_{\rm BMS}$.

From this relation between $\Gamma_{\rm BMS}$ and $\Gamma_{\mf
  p_{21}}$, it follows that the asymptotic symmetry group of
$\Gamma_{\rm BMS}$ coincides with ${\cal D}_{\mf p_{21}}$.
Note however that a bulk diffeomorphism $\psi$ acts differently on the
two spaces.  On $\Gamma_{\rm BMS}$ it acts in tandem with a
conformal transformation to preserve the conformal factor,
\be
\left( \gunphys_{ab}, \, \conf \right) \to \left[ \left( \frac{
  \conf} {\psi^* \conf} \right)^2 \psi^* \gunphys_{ab}, \, \conf
\right],
\ee
while on $\Gamma_{\mf p_{21}}$ it acts simply as 
\be
\left( \gunphys_{ab}, \, \conf \right) \to \left( \psi^*
\gunphys_{ab}, \, \psi^* \conf \right).
\ee

\subsection{Generalized BMS field configuration space and symmetry group}

We now turn to the generalized BMS field configuration space and
generalized BMS group of \cite{Compere:2018ylh,CL,Campiglia:2014yka}.
The discussion in this case mirrors exactly the discussion of the BMS
case given in the previous section, with the following modifications:

\begin{itemize}

\item The induced metric $\inducedmetric_{ij}$ is replaced everywhere by the volume
  form $\volume_{ijk}$.  Thus we use Eq.\ (\ref{relations1c}) instead of
  Eqs.\ (\ref{relations1a}) and (\ref{relations1b}), and use the
  scaling relation (\ref{contransform4}) everywhere instead of the relation
  (\ref{contransform3}).
  
\item The equivalence relation (\ref{eqr1}) is replaced with
\be
\label{eqr5}
(n^i, \volume_{ijk}, \nonaffinity) \sim (e^\sigma n^i, e^{-3 \sigma} \volume_{ijk},
e^\sigma \nonaffinity - e^\sigma \lie_n \sigma),
\ee
and the definition (\ref{eq:u21}) of intrinsic structure is replaced
by
\be
\mathfrak{u}_{11} = [n^i,\volume_{ijk},\nonaffinity].
\label{eq:u11}
\ee

\item Similarly the equivalence relation (\ref{eqr2}) is replaced by
\be
\label{eqr6}
( {\bar n}^i, {\bar \volume}_{ijk}) \sim (e^\sigma {\bar n}^i, e^{-3 \sigma}
   {\bar \volume}_{ijk} )
   \ee
with $\lie_{\bar n} \sigma = 0$ and $\lie_{\bar n} {\bar \volume}_{ijk} =
0$.  The 
definition (\ref{eq:u20}) of intrinsic structure is replaced
\be
\mathfrak{u}_{10} = \left[ {\bar n}^i, {\bar \volume}_{ijk} \right].
\label{eq:u10}
\ee

\item The corresponding symmetry groups ${\cal D}_{\mf u_{11}}$ and
  ${\cal D}_{\mf u_{10}}$ are defined as in Sec.\ \ref{sec:sgis}, and again
  coincide in the appropriate sense.  The  relations (\ref{actionvp})
  that define the symmetries are replaced by
\bes
\label{actionvp11}
\bea
\varphi^* {\bar n}^i &=& e^{-\alpha} {\bar n}^i, \\
\varphi^* {\bar \volume}_{ijk} &=& e^{3 \alpha} {\bar \volume}_{ijk},
\eea
\ees
where $\lie_{\bar n} \alpha =0$, whose linearized versions are\cite{Flanagan:2019vbl}
\bes
\label{linearizedsyms1}
\bea
\lie_\xi {\bar n}^i &=& - \alpha {\bar n}^i, \\
\lie_\xi {\bar \volume}_{ijk} &=& 3 \alpha {\bar \volume}_{ijk}.
\eea
\ees

\item The definitions (\ref{eq:p21}) and (\ref{eq:p20}) of boundary structures are
  replaced by the analogous definitions
\be
\mathfrak{p}_{11} = [n^i,\volume_{ijk},\nonaffinity,n_a].
\label{eq:p11}
\ee
and
\be
\mathfrak{p}_{10} = \left[ {\bar n}^i, {\bar \volume}_{ijk} , {\bar n}_a \right].
\label{eq:p10}
\ee

\item The corresponding field configuration spaces $\Gamma_{\mf p_{11}}$ and
$\Gamma_{\mf p_{10}}$ are defined as before, and the argument that the
  corresponding symmetry groups ${\cal D}_{\mf p_{11}}$ and ${\cal
    D}_{\mf p_{10}}$ coincide with those of the intrinsic structures
  is unchanged.

\item The definition (\ref{eq:fcs}) of the conformal-freedom-fixed field
  configuration space is replaced with the definition (\ref{eq:fcs1}) of the space $\Gamma_{\rm GBMS}$.
As before, one can show that taking the orbit of $\Gamma_{\rm GBMS}$ under conformal
transformations yields a particular space $\Gamma_{\mf p_{11}}$, with
${\mf p_{11}} = \left[ {\bar n}^i_0, \, {\bar \volume}_{0\,ijk}, \, 0,
  \, {\bar n}_{0\,a} \right]$, and that the asymptotic symmetry group of
$\Gamma_{\rm GBMS}$ coincides with ${\cal D}_{\mf p_{11}}$.

\end{itemize}

\subsection{Weyl BMS field configuration space and symmetry group}

The field configuration space can be further expanded by omitting
both the induced metric $\inducedmetric_{ij}$ and volume form $\volume_{ijk}$
from the definitions.
We call the resulting space the {\it Weyl BMS}
field configuration space, following Ref.\ \cite{freidel2021weyl},
since the extra symmetries correspond to conformal
transformations
of
the form (\ref{contransform}) that are independent of other pieces of
the symmetry generator.
The resulting symmetry group then coincides with the symmetry group of
general null surfaces at finite locations derived in Ref.\ \cite{CFP}.
This coincidence of symmetry groups should facilitate understanding how the asymptotic symmetry
group is obtained from a limit of symmetry groups on finite null
boundaries.  It will also be important in future derivations of global
conservation laws in black hole spacetimes, where analyses
analogous to \cite{Prabhu_2019,Prabhu:2021cgk} at future
timelike infinity will be needed in order to determine the appropriate
matching of symmetry generators on the future horizon with those on
future null infinity; see for example the discussion in Sec.\ 7
of \cite{CFP}.

For the Weyl BMS field configuration space, the required modifications to the discussion
of the BMS case of Sec. \ref{sec:bmscase} are:

\begin{itemize}

\item The induced metric $\inducedmetric_{ij}$ is omitted everywhere. Thus the 
equivalence relation (\ref{eqr1}) is replaced with
\be
\label{eqr7}
(n^i, \nonaffinity) \sim (e^\sigma n^i, e^\sigma \nonaffinity - e^\sigma \lie_n \sigma),
\ee
and the definition (\ref{eq:u21}) of intrinsic structure is replaced
by
\be
\mathfrak{u}_{01} = [n^i,\nonaffinity].
\label{eq:u01}
\ee
Similarly the equivalence relation (\ref{eqr2}) is replaced by
\be
\label{eqr8}
( {\bar n}^i ) \sim (e^\sigma {\bar n}^i)
   \ee
with $\lie_{\bar n} \sigma = 0$.  The 
definition (\ref{eq:u20}) of intrinsic structure is replaced
\be
\mathfrak{u}_{00} = \left[ {\bar n}^i \right].
\label{eq:u00}
\ee

\item The corresponding symmetry groups ${\cal D}_{\mf u_{01}}$ and
  ${\cal D}_{\mf u_{00}}$ are defined as in Sec.\ \ref{sec:sgis}, and again
  coincide in the appropriate sense.  The  relations (\ref{actionvp})
that define the symmetries are replaced by
\be
\label{actionvp111}
\varphi^* {\bar n}^i = e^{-\alpha} {\bar n}^i, 
\ee
where $\lie_{\bar n} \alpha =0$, whose linearized versions is
\be
\label{linearizedsyms2}
\lie_\xi {\bar n}^i = - \alpha {\bar n}^i.
\ee
The symmetry group (\ref{actionvp111}) coincides with that of general
finite null surfaces, given by Eqs.\ (4.4) of Ref.\ \cite{CFP} specialized to
$\nonaffinity =0$.

\item The definitions (\ref{eq:p21}) and (\ref{eq:p20}) of boundary structures are
  replaced by the analogous definitions
\be
\mathfrak{p}_{01} = [n^i,\nonaffinity,n_a].
\label{eq:p01}
\ee
and
\be
\mathfrak{p}_{00} = \left[ {\bar n}^i , {\bar n}_a \right].
\label{eq:p00}
\ee

\item The corresponding field configuration spaces $\Gamma_{\mf p_{01}}$ and
$\Gamma_{\mf p_{00}}$ are defined as before, and the argument that the
  corresponding symmetry groups ${\cal D}_{\mf p_{01}}$ and ${\cal
    D}_{\mf p_{00}}$ coincide with those of the intrinsic structures
  is unchanged.

\item The definition (\ref{eq:fcs}) of the conformal-freedom-fixed field
  configuration space is replaced with the definition (\ref{eq:fcs11})
  of the space $\Gamma_{\rm WBMS}$. As before, one can show that
  taking the orbit of $\Gamma_{\rm WBMS}$ under conformal 
transformations yields a particular space $\Gamma_{\mf p_{01}}$, with
${\mf p_{01}} = \left[ {\bar n}^i_0, \, 0,
  \, {\bar n}_{0\,a} \right]$, and that the asymptotic symmetry group of
$\Gamma_{\rm WBMS}$ coincides with ${\cal D}_{\mf p_{01}}$.

\end{itemize}

\subsection{Properties of the asymptotic symmetry groups}

We now turn to a characterization of the structure of the symmetry
groups discussed in the previous sections and the corresponding
algebras.

For convenience, we will specialize to the definitions
${\cal D}_{\mf  u_{20}}$,  ${\cal D}_{\mf  u_{10}}$ and  ${\cal
  D}_{\mf  u_{00}}$ of these groups in which the inaffinity $\nonaffinity$
has been set to zero, given by Eq.\ (\ref{group20def}) and its avatars.
For each universal structure $\mf u_{20}$, $\mf u_{10}$ or $\mf u_{00}$, we
pick a corresponding representative $({\bar n}^i, {\bar \inducedmetric}_{ij})$,
$({\bar n}^i, {\bar \volume}_{ijk})$, or $( {\bar n}^i)$.
The null generator ${\bar n}^i$ is common to all of these representatives,
and we construct a coordinate system $(u,\theta^A)$ on $\scp$ using
this normal as described in Sec.\ \ref{sec:symgroups} in the body of
the paper.  The symmetry transformations are then given by
Eqs.\ (\ref{diffeogen}).

To derive these transformations, we start
with Eq.\ (\ref{actionvpa}), which is common to all three groups.
Combining this with Eq.\ (\ref{ucoorddef}) yields 
\be
\varphi^* \partial_u = e^{-\alpha} \partial_u = \frac{\partial
  u}{\partial {\hat u}} \partial_u + \frac{\partial \theta^A}{\partial
  {\hat u}} \partial_A,
\ee
which yields $\theta^A = \theta^A({\hat \theta}^B)$, and inverting
yields Eq.\ (\ref{barthetadef}).
It also yields
\be
\label{condt322}
\partial_u {\hat u}(u,\theta^A) = e^{\alpha(u,\theta^A)}.
\ee
Also from Eq.\ (\ref{actioncpc}) which applies to all three groups we
obtain that $\partial_u e^\alpha =0$, so that
$\alpha = \alpha(\theta^A)$, and now integrating Eq.\ (\ref{condt322})
yields Eq.\ (\ref{barudef}).

This completes the derivation for the Weyl BMS case, where the
functions $\chi$, $\alpha$ and $\gamma$ are unconstrained.
For the generalized BMS case, it follows from the conditions (\ref{actionvp11})
and the definition (\ref{volumesmall}) of $\volumesmall_{ij}$ that the function
$\alpha$ is given by Eq.\ (\ref{alpha333}).  Similarly, for the BMS
case, it follows from the condition (\ref{actionvp1b}) 
that the function
$\alpha$ is given by Eq.\ (\ref{alpha444}).

Finally, it can be useful to understand the action of the groups on
representatives of the universal structures for which $\nonaffinity \ne 0$.
We specialize for simplicity to linearized supertranslations of the
form
\be
\xi^i = f n^i.
\ee
Note that the symmetry generator $\xi^i$ is invariant under the conformal
rescalings (\ref{contransform}) by definition, but that the
coefficient $f$ has a nonzero conformal weight, transforming as $f \to
e^{-\sigma} f$ from Eq.\ (\ref{contransform1}).
For the BMS group the coefficient $f$ satisfies
the conformally invariant equation
\be
\label{stchar}
(\lie_n - \nonaffinity) f = 0,
\ee
from Eqs.\ (\ref{relations1a}), (\ref{relations1b}), (\ref{actionvp0a}) and (\ref{actionvp0b}).
This equation is also valid for the generalized BMS group, from
Eqs.\ (\ref{relations1c}) together with the unbarred version of Eqs.\ (\ref{linearizedsyms1}).
Finally, for the Weyl BMS group, Eq.\ (\ref{stchar}) is replaced
with the conformally invariant equation\footnote{This equation differs
from the corresponding equation (4.17) of Ref.\ \cite{CFP},
despite the fact that the underlying algebras of infinitesimal
diffeomorphisms $\vec \xi$ on the
null surfaces coincide.  The difference arises from the fact that 
the scaling properties of $\nonaffinity$ and $n^i$ differ in the two cases
(see Appendix \ref{app:nullsurf}).}
\be
\lie_n (\lie_n - \nonaffinity) f = 0,
\ee
from Eqs.\ (\ref{actionvp0a}) and (\ref{actionvp0c}) which apply to the group ${\cal
  D}_{\mf u_{01}}$. This equation now admits the two different kinds
of supertranslations as solutions.

\section{Details of holographic renormalization with a rigging vector
  field}
\label{app:hr}

In this appendix we derive some of the results on holographic
renormalization which were discussed in Sec.\ \ref{sec:hr}.    

We start by inserting the coordinate expansions 
(\ref{fieldsexpand}) of the Lagrangian
and symplectic form into the identity (\ref{eqn:dL}), and specializing to on
shell field configurations.  This yields
\be
\theta^{\prime\,0}_{,0} + \theta^{\prime\,i}_{,i} = \delta {\cal L}.
\label{eq:onshell}
\ee
Similarly the boundary canonical transformation (\ref{JKM11}) can be
written in terms of coordinate components as 
\bes
\bea
    {\cal L}_{\rm ren} &=& {\cal L} + B^{i}_{,i} + B^{0}_{,0},   \\
    \label{theta0primeans}
  \theta^{\prime\,0}_{\rm ren} &=& \theta^{\prime\,0} + \delta B^{0} - \Lambda^{i}_{,i}, \\
  \theta^{\prime\,i}_{\rm ren} &=& \theta^{\prime\,i} + \delta B^{i} + \Lambda^{i}_{,0} + 2 \Lambda^{ij}_{,j}.
\eea
\ees
It follows that the choices (\ref{translated}) of $B$ and $\Lambda$ yield ${\cal L}_{\rm ren}=\theta^{\prime\,i}_{\rm ren}=0$,
cf.\ Eq.\ (\ref{thetaifinite}).  Also differentiating
Eq.\ (\ref{theta0primeans}) with respect to $x^0$ and combining with
Eq.\ (\ref{translated}) and (\ref{eq:onshell}) gives
$\theta^{\prime\,0}_{\rm ren\,,0} =0$.  Hence to evaluate
$\theta^{\prime\,0}_{\rm ren}$ at $x^0=0$ we can evaluate it at $x^0 = \upsilon_0$,
at which value it reduces to $\theta^{\prime\,0}(\upsilon_0)$, from
Eqs.\ (\ref{theta0primeans}) and (\ref{translated}).  This yields the
result (\ref{theta0finite}).

We next turn to the computation of the anomalies (\ref{conj}).  Given
the prescription (\ref{bprimelambdaprime}) for $B[L',{\vec v}]$,
the anomaly is given by, from the definition (\ref{eqn:noncov-defn}),
\be
\Delta_{\hat \xi} B = B[\psi^* L', {\vec v} ] - B[\psi^* L', \psi^*
  {\vec v}].
\ee
Here it is understood that the right hand side is to be linearized in
the diffeomorphism $\psi$, whose linear part is parameterized by the
vector field ${\vec \xi}$.  (It will be convenient to initially work with the
full nonlinear diffeomorphism rather than its linearized version).  
Acting on both sides with $\psi^{-1\,*}$, we see that the right hand
side is proportional to $\xi$, and so we can drop the $\psi^{-1\,*}$ on
the left hand side when working to linear order.  This gives
\be
\Delta_{\hat \xi} B = B[L', \psi^{-1\,*} {\vec v} ] - B[ L', 
  {\vec v}].
\label{anomdef1}
\ee
Defining $B_1 = B[L',\psi^{-1\,*} {\vec v}]$, we have from
Eqs.\ (\ref{bprimelambdaprime}) that $B_1$ is given by
\bes
\bea
\label{anom0}
i_{\tilde v} B_1 &=& 0, \\
\label{anom1}
{\tilde \pi}_{\upsilon }^* B_1 &=&
\int_{\upsilon}^{\upsilon_0} d {\bar \upsilon} \, {\tilde \pi}_{{\bar \upsilon}}^*
i_{\tilde v} L',
\eea
\ees
where ${\tilde {\vec v}} = \psi^{-1\,*} {\vec v}$ and
\be
{\tilde \pi}_\upsilon = \psi \circ \pi_{\upsilon} \circ
\varphi^{-1},
\ee
with $\varphi : {\cal N} \to {\cal N}$ being the restriction of $\psi$
to the boundary ${\cal N}$.  Acting on both sides of Eq.\ (\ref{anom1})
with $\varphi^*$ now gives
\be
\pi_{\upsilon}^* \psi^* B_1 = \int_\upsilon^{\upsilon_0} d
   {\bar \upsilon} \pi_{{\bar \upsilon}}^* \psi^* i_{\tilde v}
   L'.
   \ee
Using $\psi^* = 1 + \lie_\xi + \ldots$ together with Eq.\ (\ref{anomdef1}) this can be rewritten as
\be
\label{semifinal}
\pi_{\upsilon}^* \Delta_{\hat \xi} B = - \pi_{\upsilon}^*
\lie_\xi B + \int_\upsilon^{\upsilon_0} d
   {\bar \upsilon} \pi_{{\bar \upsilon}}^* i_v \lie_\xi L'.
\ee

We now switch to using the coordinate notation of
Sec.\ \ref{sec:rspL}.  First, for any vector field ${\vec w} = w^0
\partial_0 + w^i \partial_i$ and any $d$-form $\chi = \chi^0 \varpi +
\chi^i dx^0 \wedge \varpi_i$, the Lie derivative is given by
\be
\lie_w \chi = \left[ w^0 \chi^0_{,0} - w^0_{,i} \chi^i + ( w^i
  \chi^0)_{,i} \right] \varpi
+ \left[ - w^i_{,0} \chi^0 - w^i_{,j} \chi^j + (w^0 \chi^i)_{,0} +
  (w^j \chi^i)_{,j} \right] dx^0 \wedge \varpi_i.
\ee
Using this formula together with Eqs.\ (\ref{translated}), we find
Eq.\ (\ref{semifinal})
reduces to
\be
( \Delta_{\hat \xi} B)^0=
(\xi^0 {\cal L})(\upsilon_0) - \left(
\xi^i \int_{\upsilon}^{\upsilon_0} d{\bar \upsilon} {\cal L}
- \int_{\upsilon}^{\upsilon_0} d{\bar \upsilon} \xi^i {\cal
  L} \right)_{,i}.
\label{anomr1}
\ee
The other component of $\Delta_{\hat \xi} B$ is given by combining
Eq.\ (\ref{anom0}) with ${\tilde {\vec v}} = {\vec v} - \lie_\xi {\vec v}$
and Eq.\ (\ref{anomdef1}), which gives
\be
\label{semifinal1}
i_v \Delta_{\hat \xi} B = i_{\lie_\xi v} B.
\ee
Using ${\vec v} = \partial_0$ this yields
\be
( \Delta_{\hat \xi} B)^i = \xi^i_{,0}
\int_{\upsilon}^{\upsilon_0} d {\bar \upsilon} {\cal L}.
\label{anomr2}
\ee
Combining the results (\ref{anomr1}) and (\ref{anomr2}) with
Eqs.\ (\ref{wq1}), (\ref{wq2}) and (\ref{kappaxic}) now shows consistency with the
identity (\ref{conj1}).

The derivation of $\Delta_{\hat \xi} \Lambda$ is exactly analogous.
Equations (\ref{semifinal}) and (\ref{semifinal1}) are replaced by
\bes
\bea
\label{semifinallambda}
\pi_{\upsilon}^* \Delta_{\hat \xi} \Lambda &=& - \pi_{\upsilon}^*
\lie_\xi \Lambda - \int_\upsilon^{\upsilon_0} d
   {\bar \upsilon} \pi_{{\bar \upsilon}}^* i_v \lie_\xi \theta', \\
i_v \Delta_{\hat \xi} \Lambda &=& i_{\lie_\xi v} \Lambda,
\eea
\ees
which together yield
\be
\Delta_{\hat \xi} \Lambda = -\left[ (\xi^0 \theta^{\prime\,i})(\upsilon_0) -
  \int_{\upsilon}^{\upsilon_0} d {\bar \upsilon} \xi^i_{,0}
  \theta^{\prime\,0} + \xi^j   \int_{\upsilon}^{\upsilon_0} d {\bar
    \upsilon} \theta^{\prime\,i}_{,j}
-    \int_{\upsilon}^{\upsilon_0} d {\bar \upsilon}
\xi^j \theta^{\prime\,i}_{,j} \right] \varpi_i + \xi^i_{,0}
\int_{\upsilon}^{\upsilon_0} d {\bar \upsilon} \theta^{\prime\,j} \,
dx^0 \wedge \varpi_{ij}.
\ee
Combining this with Eqs.\ (\ref{wq3}), (\ref{wq4}), (\ref{kappamuans})
and (\ref{eq:onshell}) now shows consistency with the identity (\ref{conj2}).

\bibliographystyle{JHEPthesis}
\bibliography{asycps}

\end{document}